\setlist[enumerate,1]{label={\arabic*.}}
\author{Angelo Mele}
\address{Carey Business School\\ 
Johns Hopkins University\\
100 International Dr. \\
Baltimore, MD 21202 }
\author{Lingxin Hao}
\address{Department of Sociology\\
Johns Hopkins University\\
3400 North Charles St. \\
Baltimore, MD 21218}
\author{Joshua Cape} 
\address{Department of Statistics\\
University of Pittsburgh\\
230 South Bouquet Street \\ 
Pittsburgh, PA 15260} 
\author{Carey E. Priebe} 
\address{Department of Applied Mathematics and Statistics\\
Johns Hopkins University \\
3400 North Charles St. \\
Baltimore, MD 21218}
\title[Stochastic blockmodels with covariates]{Spectral estimation of large stochastic blockmodels with discrete nodal covariates}
\thanks{This version: \today. First version: February 25, 2019. Contact: \href{mailto:angelo.mele@jhu.edu}{angelo.mele@jhu.edu}; \href{mailto:hao@jhu.edu}{hao@jhu.edu}; \href{mailto:joshua.cape@pitt.edu}{joshua.cape@pitt.edu}; \href{mailto:cep@jhu.edu}{cep@jhu.edu}. We are grateful to Cong Mu and Jipeng Zhang for excellent research assistance. We thank Eric Auerbach, Federico Bandi, Stephane Bonhomme, Youngser Park and Eleonora Patacchini for comments and suggestions. Funding from the Institute of Data Intensive Engineering and Science (IDIES) at Johns Hopkins University and NSF grant SES-1951005 is gratefully acknowledged. Joshua Cape also gratefully acknowledges support from NSF grant DMS-1902755.}
\newtheorem{theorem}{THEOREM}
\newtheorem{lemma}{LEMMA}
\begin{document}
\begin{abstract}
In many applications of network analysis, it is important to distinguish between observed and unobserved factors affecting network structure. To this end, we develop spectral estimators for both unobserved blocks and the effect of covariates in stochastic blockmodels. %Our main strategy is to reformulate the stochastic blockmodel estimation problem as recovery of latent positions in a generalized random dot product graph.
On the theoretical side, we establish asymptotic normality of our estimators for the subsequent purpose of performing inference. On the applied side, we show that computing our estimator is much faster than standard variational expectation--maximization algorithms and scales well for large networks. Monte Carlo experiments suggest that the estimator performs well under different data generating processes. Our application to Facebook data shows evidence of homophily in gender, role and campus-residence, while allowing us to discover unobserved communities. The results in this paper provide a foundation for spectral estimation of the effect of observed covariates as well as unobserved latent community structure on the probability of link formation in networks.
\end{abstract}

%\begin{abstract}
%This paper develops spectral methods for inference in large stochastic blockmodels with observed nodal covariates. Our strategy is to re-formulate the estimation problem %more specific?
%as recovery of latent positions in a Generalized Random Dot Product Graph (GRDPG) setting. Our contributions can be summarized as follows. First, we extend the GRDPG model to include the effect of observed covariates.  Second, we develop a spectral estimator for both unobserved blocks and the effect of covariates. Third, we derive a new central limit theorem to perform inference. Our estimator is much faster to compute than standard variational expectation--maximization algorithms and scales well for large networks. The results in this paper provide a first foundation to estimate the effect of observed covariates as well as unobserved latent community structure on the probability of link formation in massive networks. %'first foundation' is a bit unclear
%\end{abstract}

\maketitle

\onehalfspacing
\section{Introduction}
\subsection{Motivation}
The analysis and modeling of network data has far-reaching applications in economics, sociology, public health, computer science, neuroscience, and marketing, among other areas. Social networks have been shown to affect socioeconomic performance, such as education \citep{CalvoArmengolEtAl2009, DeGiorgiPellizzariRedaelli2009, CarrellEtAl2013}, health and risky behaviors \citep{Nakajima2007, Badev2013}, risk sharing arrangements \citep{FafchampsGubert2006}, and employment opportunities \citep{Topa2001, Beaman2012}, among others. 
Often, both observed and unobserved factors contribute to the global structure of networks and the processes that generate them. For example, in social networks, factors including gender, race, and personality affect the likelihood that two people interact. In applications, race and gender are typically observed, while personality is usually unobserved. It is therefore crucial to develop ways to disentangle the effect of observed and unobserved variables on link formation in networks.

The analysis of network models poses several econometric challenges, arising from the structure of correlation among links \citep{Chandrasekhar2016, DePaula2017, Graham2020, GrahamDePaula2020, Dzemski2017}. In this paper, we focus on models with conditionally independent links, where the unobserved heterogeneity is modeled by latent positions in low-dimensional Euclidean space \citep{Graham2014, Auerbach2019}. While this formulation rules out externalities and strategic considerations in link formation, which are dominant focal points in much of the econometric literature on strategic network formation games \citep{Mele2017, DePaula2017, Graham2020, GrahamDePaula2020, Menzel2017}, there is a growing literature showing how models with conditionally independent links provide useful approximations of strategic models, at least in some special cases \citep{Mele2017, MeleZhu2020, DiaconisChatterjee2011, Graham2020}.

In this paper, we analyze the stochastic blockmodel (SBM), a workhorse in the literature on community detection and clustering \citep{Abbe2018, NowickiSniders2011, AiroldiEtAl2008}. In standard $K$-block SBMs, each node belongs to one of $K$ unobserved blocks (communities); conditional on the block assignments, links form independently as Bernoulli variables with probabilities that depend on the community memberships. Many existing works use SBMs to estimate or approximate unobserved block structure in networks. In contrast, applications involving SBMs that incorporate observed nodal attributes as covariates are comparatively few \citep{Sweet2015, ChoiEtAl2011, AtchadeEtAl2019}. A possible reason is that estimation for stochastic blockmodels is computationally burdensome, and including covariates in the specification imposes significant additional challenges to modelling, estimation, and inference. Exact maximum likelihood estimation is infeasible, causing most estimation strategies to rely on approximations based on expectation--maximization algorithms and variational methods \citep{AiroldiEtAl2008,DaudinEtAl2008,BickelEtAl2013,LatoucheEtAl2012,Vu2013}. However, these algorithms may converge slowly to the (approximate) solution and become impractical for networks with thousands of nodes.\footnote{Recent advances use further approximations and parallelization to improve computational efficiency \citep{AtchadeEtAl2019,Vu2013}. We do not pursue such extensions in this paper.}

\subsection{Overview of the contributions}

The goal of our paper is to develop an estimation method for SBMs with observed nodal covariates that is computationally feasible and scales well to large networks, while providing similar statistical guarantees as the other available methods. At a high level, our main strategy consists of formulating this goal as an inference problem in the context of the generalized random dot product graph (GRDPG) model \citep{AthreyaEtAl2018a,TangPriebe2018, TangEtAl2017, Rubin-DelanchyEtAl2018, AlidaeeEtAl2020, MuEtAl2021}. In a GRDPG, each node is characterized by an unobserved latent position (vector), and each pair of nodes links with probability determined via a (possibly indefinite) inner product of the pair's latent positions; crucially, any SBM can be reformulated as a GRDPG where the latent positions are fixed within blocks. To address our computational goals, we turn to the use of spectral methods, which, in addition to enjoying theoretical guarantees \citep{TangEtAl2017}, have been shown to be successful both in terms of feasibility and scalability in related settings \citep{PriebeFlashGraph, PriebeSSD2016,PriebeBillionNodesGraphs}.

We provide several contributions to the literature on network econometrics. First, we present a GRDPG model framework that incorporates the effect of observed covariates on link probabilities. Second, we develop a spectral estimator for inference in stochastic blockmodels with covariates, adapting the spectral estimators developed for our new class of GRDPG models. Crucially, we obtain a new central limit theorem for the spectral estimator of the covariates' effect. Our estimator is asymptotically normal as long as the parameter(s) for covariate effects can be written as sufficiently well-behaved functions of the SBM block-specific probabilities. We provide explicit formulas for bias and variance properties of the estimator, and we show that the estimator is computationally fast, scaling well for large networks. Our method provides a statistical and algorithmic foundation for inference in a broad class of models for large network data, including networks that are relatively sparse in the sense that their average degree scales sub-linearly with network size.

Our exposition focuses on SBMs with a single binary (or discrete) observed covariate, though we emphasize that the theoretical and computational properties set forth in this work extend to settings involving multiple discrete covariates. Asymptotic normality continues to hold as long as the estimator for the effect of the covariate(s) can be expressed as a suitably well-behaved function of the SBM probabilities. We illustrate several examples with two binary covariates in a simulation study and showcase our method in an empirical application using Facebook data and three control variables. The case involving continuous covariates is more complicated and an active area of contemporaneous research. Current progress on this front is being facilitated by recently investigated Latent Structure Models (LSM) \citep{AthreyaLSM2018} and related ideas.

The development of our estimator depends crucially on several observations. First, a $K$-block stochastic blockmodel with one binary covariate can be reformulated as a (different yet related) $2K$-block stochastic blockmodel. Second, as discussed previously, a stochastic blockmodel graph can be viewed as a generalized random dot product graph whose latent positions are fixed within blocks \citep{AthreyaEtAl2018a,TangPriebe2018,TangEtAl2017}. The behavior of our spectral estimation method is tied to the asymptotic behavior of spectral estimators for SBM block probability matrix entries recently studied in \cite{TangEtAl2017}.
Our asymptotic analysis provides explicit formulas for standard errors and establishes the existence of a bias term; however, this bias vanishes at a rate proportional to the size of the network.\footnote{Since we have a closed-form expression for the bias term, in principle we can naively correct for it in estimation, using a plug-in estimate. In our simulations we find that the bias term is usually so small that the correction is not necessary, at least for networks with a few thousand nodes. On the other hand, the bias is demonstrably substantial in the empirical application to Facebook data in Section~\ref{section:simulations}.}

The theoretical machinery used to perform inference extends methods developed for the analysis of latent positions network models \citep{AthreyaEtAl2018a,TangEtAl2017}. In particular, we use Adjacency Spectral Embedding (ASE) for random graphs to embed the network in a low-dimensional space and to recover the latent positions of the nodes. Our method is motivated by the (verifiable) intuition that the adjacency matrix can be viewed as a (mild) perturbation of the probability matrix that generates the network data, and thus, that the eigenstructure of the adjacency matrix resembles that of the edge probability matrix \citep{TangPriebe2018,AthreyaEtAl2018a}. In particular, spectrally decomposing the adjacency matrix provides accurate information about the structure of sufficiently large networks \citep{TangEtAl2017}.

In addition to providing statistical guarantees, one of the advantages of our method is the speed of computation, obtained without sacrificing estimation accuracy. In our simulations (see Section~\ref{section:simulations}) we compare our approach to the variational EM (VEM) algorithm \citep{DaudinEtAl2008, BickelEtAl2013}, as implemented in the \texttt{blockmodels} package in \texttt{R}. Even for the simplest case of a stochastic blockmodel without covariates, our spectral method is faster by several orders of magnitude. For example, in a network with $n=5000$ nodes and $K=2$ blocks, we can estimate the model in few seconds using our spectral method, while it takes almost 10 minutes to estimate the model using the variational EM algorithm. When we add a binary covariate, our estimator converges in under 30 seconds, while in contrast it takes almost 10 hours when using a parallelized version of the VEM algorithm in \texttt{blockmodels}. Our methods are implemented in the R package \texttt{grdpg} available at \url{https://github.com/meleangelo/grdpg} and all replication files can be found at \url{https://github.com/meleangelo/grdpg_supplement}.

We perform a Monte Carlo study to examine the performance of our spectral estimator. 
We simulate networks in which the block structure is unbalanced and blocks have different sizes; we include scenarios in which the covariates are binary and independent, as well as cases in which the covariates are correlated. Our results confirm the existence of bias in small samples, but the bias decreases in magnitude when the size of the network increases.
As expected, the bias is larger when the blocks are unbalanced and the covariates are correlated. These insights confirm that our estimator works best in very large networks, where the bias problem is attenuated, thus adding to the advantage of computational speed.

Finally, we apply our method to the study of Facebook friendship data using the Facebook~100 dataset, initially collected and analyzed in \cite{TraudEtAl2012}. These data contain networks of friendships and node (person) information for 100 universities in the United States in the year 2005. We estimate a stochastic blockmodel for the Harvard University network, consisting of more than 13,000 nodes, using information on gender, off-campus residence, and university role (i.e.,~student, faculty, staff, etc.) of the users (see also \cite{AtchadeEtAl2019} for a related analysis).
We find evidence of homophily, as suggested by the positive effect of gender, role, and off-campus residence on the probability of linking. This suggests that including information about observable covariates in the estimation may allow researchers to better recover unobservable block structure.

Some of the theoretical machinery presented herein will be useful to econometricians studying networks, as well as nonlinear panel data. Indeed, some of the GRDPG modeling and inference ideas are similar to the literature on interactive fixed effects  \citep{ChenWeidnerFernandezVal2021, FERNANDEZVAL2016, MoonShumWeidner2018}.\footnote{An application of SBM-related ideas is \cite{Bonhomme2017}.}
Another application of our model is to correct for the endogeneity of the network in empirical models of network effects \citep{ShaliziMcFowland2018,GoldsmithPinkhamImbens2013, BoucherFortin2016,Auerbach2019, JohnssonMoon2019}. Currently most of these studies rely on an auxiliary model of network formation to capture unobserved heterogeneity that affects the outcome. Our model and computational method allow the researcher to perform this type of correction for large network data.

\section{Background and methodology}

\subsection{Stochastic blockmodels and generalized random dot product graphs} 
In a $K$-block \emph{Stochastic Blockmodel (SBM)}, nodes are randomly assigned to one of $K$ blocks; conditional on the blocks, nodes form links independently. A $K$-block SBM is characterized by the $K\times K$ matrix of probabilities $\bm{\theta}\in [0,1]^{K\times K}$, where  the entry $\bm{\theta}_{k\ell}$ is the probability of a link occurring between nodes in blocks $k$ and $\ell$. The random variables comprising $\bm{\tau}=(\tau_1, \dots, \tau_n)$ describe the assignments of each node to a block, and they are i.i.d., such that the probability that node $i$ belongs to block $k$ is $\mathbb{P}(\tau_i=k)=\pi_k$, with $\bm{\pi} = (\pi_1, \dots, \pi_K)$.
%JC: should we use $\mathbb{P}$ rather than $P$ to distinguish the probability of an event?    AM: sounds like a good idea, need to fgure out where i have used, i think it is just in one or two places...
Conditional on the assignment to blocks $\bm{\tau}$, the probability that nodes $i$ and $j$ have a link is $\bm{P}_{ij}=\bm{\theta}_{\tau_i \tau_j}$. We use the $n\times n$ adjacency matrix $\bm{A}$ to describe the network, conditioning on the unobserved blocks. According to the SBM the entries of the adjacency matrix are generated as
\begin{equation}
\bm{A}_{ij} \vert \tau_i, \tau_j,\overset{ind}{\sim} Bernoulli(\bm{P}_{ij} ),
\end{equation}
and we write $ (\bm{A}, \bm{\tau}) \sim SBM(\bm{\theta},\bm{\pi})$ to denote the adjacency matrix drawn from a $K$-block SBM with probability matrix $\bm{\theta}$ and block assignment probabilities $\bm{\pi}$.\\

Conditioning on $\bm{\tau}$, the likelihood of the SBM with $K$ blocks is
\begin{eqnarray}
    \mathbb{P}(\bm{A}\vert \bm{\tau}) &=& \prod_{i\leq j}\bm{P}_{ij}^{\bm{A}_{ij}}(1- \bm{P}_{ij})^{1-\bm{A}_{ij}} 
    = \prod_{i\leq j}\bm{\theta}_{\tau_i \tau_j}^{\bm{A}_{ij}}(1- \bm{\theta}_{\tau_i \tau_j})^{1-\bm{A}_{ij}}.
\end{eqnarray}

The \emph{Generalized Random Dot Product Graph} (GRPDG) model is an alternative model for network formation with conditionally independent links. 
In a GRDPG, each node $i$ is characterized by a $d$-dimensional vector (i.e., an \emph{unobserved} latent position) $\bm{X}_{i}= (X_{i1},\dots,X_{id})\in \mathcal{X}_d\subseteq \mathbb{R}^d$. The latent positions are i.i.d.~draws from a distribution $F$ with support $\mathcal{X}_d$, that is
$\bm{X}_1, \bm{X}_2, \dots, \bm{X}_n \overset{iid}{\sim} F$.
Let $\bm{X}=[\bm{X}_1, \dots, \bm{X}_n]^T$ denote the matrix formed by row-wise stacking all unobserved vectors $\bm{X}_{i}$. 

Let $d_1 \ge 1$ and $d_2 \ge 0$ be integers, and define $d = d_1+d_2$. Let $\bm{I}_{d_1, d_2}$ be a $d\times d$ diagonal matrix containing $1$'s in  $d_1$ diagonal entries and $-1$ in the remaining $d_2$ diagonal entries. For a \emph{GRDPG with signature $(d_1,d_2)$}, the entries of the adjacency matrix $\bm{A}_{ij}$ are specified to be independent, after conditioning on the latent positions $\bm{X}_i$ and $\bm{X}_j$, namely

\begin{equation}
\bm{A}_{ij} \vert \bm{X}_i, \bm{X}_j \overset{ind}{\sim} Bernoulli(\bm{P}_{ij}),
\end{equation}
with link probability given by
\begin{equation}
    \bm{P}_{ij} = \bm{X}_i^T \bm{I}_{d_{1},d_{2}}\bm{X}_j.
\end{equation}
For this setting, we write $(\bm{X},\bm{A}) \sim GRDPG_{d_1,d_2}(F)$.\footnote{
    It must be noted that the support $\mathcal{\bm{X}}_d$ of $F$, is a subset of $\mathbb{R}^d$ such that $\bm{x}^T \bm{I}_{d_1,d_2} \bm{y}\in[0,1]$ for all $\bm{x},\bm{y}\in \mathcal{\bm{X}}_d$.
    }\\

\subsection{The relationship between SBMs and GRDPGs}
A notable property of GRDPGs is that they encompass or approximate any conditionally independent low-rank network model. In particular, \emph{any SBM can be represented as a GRDPG} with latent positions fixed within blocks. That is, the $K$ blocks are represented by a fixed location, so that each $\bm{X}_i$ can only take values $\bm{\nu} = \left[\bm{\nu}_1, \bm{\nu}_2, \dots, \bm{\nu}_K\right]$. Two nodes $i$ and $j$ belong to the same block $k$ if $\bm{X}_i=\bm{X}_j=\bm{\nu}_k$. The random variables $\bm{\tau}$ are such that $\tau_1,\dots,\tau_n \overset{iid}{\sim} Multinomial(1;\pi_1, \dots,\pi_K)$, with $\bm{\pi} \in (0,1)^K$ and $\sum_{k=1}^{K} \pi_k =1$.\footnote{Alternatively, we can think of a $K$-block stochastic blockmodel as a network where the $\bm{X}_i$'s are drawn from a mixture of degenerate distributions with mass centered at $\bm{\nu}$, i.e.,
\begin{equation}
    \bm{X}_i \sim \pi_1 \delta_{\bm{\nu}_1} + \pi_2 \delta_{\bm{\nu}_2} + \dots + \pi_K \delta_{\bm{\nu}_K}.
\end{equation}}
The GRDPG corresponding to model $(\bm{A}, \bm{\tau}) \sim SBM(\bm{\theta},\bm{\pi})$ can be obtained by an eigendecomposition of the matrix $\bm{\theta} = \bm{U}\bm{\Sigma} \bm{U}^T$ and by defining $\bm{\nu}_1,\bm{\nu}_2, \dots,\bm{\nu}_K$ as the rows of $\bm{U}\vert \bm{\Sigma}\vert^{1/2}$. The distribution $F$ is $F=\sum_{k=1}^{K}\pi_k \delta_{\bm{\nu}_k}$, where $\delta$ is the Dirac-delta; importantly, $d$ is the rank of the block-probabilities matrix $\bm{\theta}$, and $d_1, d_2$ are the number of positive and negative eigenvalues of matrix $\bm{\theta}$, respectively.\\

This paper utilizes inferential spectral methods for GRDPGs to estimate SBMs \citep{AthreyaEtAl2018a,TangEtAl2017}. The same relationship between SBMs and GRDPGs holds for \emph{known} link functions and $\bm{\theta}_{\tau_i\tau_j} = h\left(\bm{B}_{\tau_i\tau_j} \right)$, where $h$ is a known function that maps to $[0,1]$ and $\bm{B}$ is a $K\times K$ matrix of real numbers.\footnote{If $h$ is unknown we cannot in general expect to be able to accurately estimate the latent positions. See \cite{tang2013}.} For example, $h$ could be the logistic function or the cumulative density function of the Gaussian distribution. Our stochastic blockmodel would have adjacency matrix $\bm{A}$ with elements
\begin{equation}
    \bm{A}_{ij}\vert \tau_i, \tau_j \overset{ind}{\sim} Bernoulli\left(h(\bm{B}_{\tau_i \tau_j}) \right).
\end{equation}

The stochastic blockmodel can be extended to include the effect of observed covariates \citep{ChoiEtAl2011, Sweet2015, AtchadeEtAl2019}. Such models allow researchers to disentangle the effect of observed and unobserved nodal heterogeneity on the probability of linking. In particular, in social science, such models are used to estimate to what extent the network exhibits homophily or heterophily.  
Let node $i$ be characterized by an $r$-dimensional vector of \emph{observed} covariates $\bm{Z}_{i}= (Z_{i}^{(1)}, \dots,Z_{i}^{(r)}) \in \mathcal{Z}\subseteq \mathbb{R}^r$ and let the stochastic blockmodel be
\begin{equation}
\bm{A}_{ij} \vert \tau_i, \tau_j, \bm{Z}_i, \bm{Z}_j\overset{ind}{\sim} Bernoulli\left( h (\bm{B}_{\tau_i \tau_j} + f(\bm{Z}_i, \bm{Z}_j;\bm{\beta}) ) \right),
\end{equation}
where $f$ is a known function, $\bm{\beta}$ is a vector of parameters, and where we allow $\bm{Z}_i$ to (possibly) depend on the latent blocks.\\

In this paper \emph{we will focus on the case of a single binary (or discrete) covariate $\bm{Z}_i$, and we will assume that the function $f$ is an indicator variables that indicates whether $i$ and $j$'s covariates have the same value}, i.e., 
\begin{equation}
 f(\bm{Z}_i, \bm{Z}_j;\bm{\beta}) = \beta \bm{1}_{\lbrace \bm{Z}_{i}=\bm{Z}_{j}\rbrace}.   
\end{equation}
Here $\beta$ can be interpreted in terms of homophily. Namely, if $\beta>0$, the probability of a link between $i$ and $j$ is higher when their observables $\bm{Z}_{i}$ and $\bm{Z}_{j}$ are the same, and the network displays homophily in the observable variable. Viceversa, when $\beta<0$, we have heterophily. The extension to multiple discrete covariates has similar properties and will be discussed further below.\\

Our goal is to develop a \emph{general spectral method of inference} for the parameter $\beta$ and for $\bm{B}_{\tau_i \tau_j}$ in the following stochastic blockmodel with a discrete nodal covariate:
\begin{eqnarray}
\bm{A}_{ij} \vert \tau_i, \tau_j, \bm{Z}_i, \bm{Z}_j &\overset{ind}{\sim}& Bernoulli\left( \bm{P}_{ij} \right),\\
\bm{P}_{ij} &=& h \left(\bm{B}_{\tau_i \tau_j} + \beta \bm{1}_{\lbrace \bm{Z}_{i}=\bm{Z}_{j}\rbrace} \right).
\end{eqnarray}
We further wish to disentangle the effect of observed and unobserved heterogeneity on link probabilities. To achieve this, we need to extend results from previous work on GRDPGs and SBMs \citep{AthreyaEtAl2018a,TangPriebe2018,TangEtAl2017}. In the following subsections we review some of the spectral methods we use in the paper, and we provide an example that highlights the core aspects of our method.

\subsection{Spectral methods and spectral embeddings}
Estimation of SBMs for large networks, with or without observed covariates, is computationally challenging. The exact MLE problem is intractable because of the high-dimensional combinatorial problem of considering all possible partitions of the nodes in blocks \citep{BickelEtAl2013}. Approximate methods are available, based on variational approximations \citep{DaudinEtAl2008,AiroldiEtAl2008,WainwrightJordan2008}; however, even these methods are computationally prohibitive for large networks.\\
We make use of spectral methods, which have been shown in the literature to scale well with network size. Our spectral approach embeds the network into a low(er) dimensional space, thus reducing the dimensionality of the problem, while maintaining the geometric properties of the data. In particular we use the \emph{Adjacency Spectral Embedding} (ASE) to estimate the latent positions of the GRDPG \citep{AthreyaEtAl2018a}. In this sense, our method can be considered a dimension-reduction tool that decreases the complexity of the data by reducing the dimensionality of the space. 
The intuition about the spectral method is that if $\bm{P}$ is a low-rank matrix, then we can think of the adjacency matrix $\bm{A}$ as a perturbation of $\bm{P}$, that is $\bm{A}_{ij} = \bm{P}_{ij}+ \bm{E}_{ij}$, where $\bm{E}_{ij}$ is a matrix of independent stochastic perturbations.\footnote{In the Bernoulli case, $\bm{E}_{ij}$ is a shifted Bernoulli variable, with values $\bm{E}_{ij}=1-\bm{P}_{ij}$ with probability $\bm{P}_{ij}$ and $\bm{E}_{ij}=\bm{P}_{ij}$ with probability $1-\bm{P}_{ij}$.} If $\bm{A}$ and $\bm{P}$ are \emph{close enough}, namely if $\bm{E}$ is \emph{small enough}, then the leading eigenvalues and eigenvectors of $\bm{A}$ and $\bm{P}$ will be similar \citep{TangEtAl2017}. As a consequence, the spectral decomposition of $\bm{A}$ will provide an estimate of the latent structure of the network, that is, the latent positions $\bm{X}$.

Consider first the case without observed covariates. Let $\bm{P}$ be positive semidefinite and let $h$ be the identity function $h(u) = u$. In this setting, we only have latent positions $\bm{X}$, that are unobserved. If we were able to observe $\bm{P}= \bm{X}\bm{X}^T $, estimation of $\bm{X}$ would be straightforward. Furthermore, we could use spectral embeddings for $\bm{P}$ by exploiting the fact that $\bm{P}$ is positive semidefinite of rank $d$ and has spectral decomposition $\bm{P} = \bm{U}_{\bm{P}} \bm{S}_{\bm{P}} \bm{U}_{\bm{P}}^{T}$, where $\bm{S}_{\bm{P}}$ is a diagonal matrix containing the largest $d$ eigenvalues (in absolute value) of $\bm{P}$ and $\bm{U}_{\bm{P}}$ is the matrix with the corresponding eigenvectors. This implies that a good estimate for $\bm{X}$ is $\widehat{\bm{X}} = \bm{U}_{\bm{P}} \vert \bm{S}_{\bm{P}} \vert^{1/2} $, where $\vert \cdot \vert$ denotes entrywise absolute values. The estimation problem arises because \emph{we only observe $\bm{A}$, a perturbed version of $\bm{P}$}. The Adjacency Spectral Embedding of $\bm{A}$ into $\mathbb{R}^d$ is then $\widehat{\bm{X}} = \bm{U}_{\bm{A}} \vert \bm{S}_{\bm{A}}\vert^{1/2}$
where $\bm{S}_{\bm{A}}$ is a diagonal matrix containing the largest $d$ eigenvalues of $\bm{A}$ in absolute value and $\bm{U}_{\bm{A}}$ is the matrix with the corresponding eigenvectors.

In our asymptotic results for the above setup, we use the fact that ASE estimates of latent positions $\bm{X}$ asymptotically achieve perfect clustering (moreover, are asymptotically normal) and can be identified up to multiplication by an orthogonal matrix \citep{AthreyaEtAl2018a,TangEtAl2017}. This implies that asymptotically the blocks are recovered exactly (up to relabeling). The same logic and results hold for non-positive definite matrices $\bm{P}$, allowing us to study more general stochastic blockmodels \citep{Rubin-DelanchyEtAl2018}.

\subsection{Overview of the method in a 2-block SBM with one binary covariate}

To illustrate the methodology and to develop intuition, we focus on the special case of a $K=2$ stochastic blockmodel with a single discrete covariate and with latent positions in the unit interval $[0,1]$, yielding $d_{1}=1, d_{2}=0,$ and $r=1$, where $Z_i\in \lbrace 0,1\rbrace$ is a binary variable
(e.g.,~male/female, white/nonwhite, rich/poor, etc.) and the function $f(Z_i,Z_j;\beta)= \beta \mathbf{1}_{\lbrace Z_i = Z_j \rbrace}$ is an indicator for the equality of the covariates for $i$ and $j$, weighted by the parameter $\beta$. The main advantage of this approach is that we can illustrate the geometry of the method in a low-dimensional space. In our simple example, the matrix $\bm{B}$ is given by
\begin{equation}
\bm{B} = \bordermatrix{ & block_1 & block_2 \cr
                block_1& p^2  & pq \cr
                block_2& pq & q^2  },
\label{eq:matrix_B}
\end{equation}
where $p,q\in[0,1]$. We can conveniently re-write the matrix $\bm{B}$ as a dot-product of vector $\bm{\nu} = [p \ \ q]^T$, with $p,q\in[0,1]$, that is $\bm{B}=\bm{\nu}\bm{\nu}^T$, % what we really have is the outer product of the (column) vector \nu with itself. Do you intend to write \nu with a superscript transpose to identify it as a column vector (I think 'yes')
so that the SBM can be re-written as a random dot-product graph model with $\bm{X}_i = p$ if $i$ is in block 1, $\bm{X}_i = q$ if $i$ is in block 2. The probability of linking can then be written as
\begin{equation}
\bm{P}_{ij} = h\left(\bm{X}_i^T \bm{X}_j + \beta \mathbf{1}_{\lbrace Z_i = Z_j \rbrace}\right).
\label{eq:general_model}
\end{equation}
For ease of exposition the network blocks have the same probability, %this assumption is a stronger statement than what follows in terms of pi <--- or do you simply mean that the network blocks have the same proportions???
so $(\pi_1, \pi_2) = (0.5,0.5)$ and each community contains half males ($Z_i=1$) and half females ($Z_i=0$). However, we note that our algorithm and the theoretical results are valid when we allow the blocks to be of different size, and the observed covariates to be correlated with the unobserved blocks.

The model specified via (\ref{eq:general_model}) corresponds to a 4-block stochastic blockmodel. Indeed, we have 2 unobserved blocks, that are split in two additional blocks by the observed binary variable. Therefore, the final result is a 4-block SBM. More generally, if there are $K$ latent blocks and one binary covariates, we will have a $\tilde{K}=2K$-block SBM.

The possible values of $\bm{X}_i^T \bm{X}_j$ are $\lbrace p^2, pq, q^2 \rbrace$. Therefore the 4-block model can be completely characterized by the $4\times 4$ matrix 
\begin{equation}
\bm{B}_Z = \bordermatrix{ & male_1 &female_1 & male_2 & female_2 \cr
                male_1& p^2 + \beta & p^2  & pq + \beta & pq \cr
                female_1& p^2  & p^2 + \beta & pq & pq + \beta \cr
                male_2& pq + \beta & pq & q^2 + \beta & q^2 \cr
                female_2& pq  &  pq + \beta & q^2 & q^2 + \beta }.
\label{eq:matrix_BZ_shape}
\end{equation}
The value $h(\bm{B}_{Z,11}) = h(p^2 + \beta)$ is the probability that two males in block 1 form a link; on the other hand, $h(\bm{B}_{Z,12})=h(p^2)$ is the probability that a male and a female in block 1 form a link; $h(\bm{B}_{Z,31})=h(pq+\beta)$ is the probability that two males, one in block 1 and one in block 2, form a link; and so on. 

The above observations imply that, for this four block SBM, there exists a corresponding GRDPG with link probability matrix
\begin{equation}
\bm{P}  = \bm{Y} \bm{I}_{d_1,d_2} \bm{Y}^T,
\end{equation}
for some $n\times d$ matrix of latent positions $\bm{Y}$ with $d_{1} \ge 1$, $d_{2} \ge 0$, and $d=d_1+d_2$. 

To estimate the parameter $\beta$ and the latent positions $p$ and $q$ we use the following algorithmic approach. 

\begin{enumerate}
    \item We compute an eigendecomposition of the adjacency matrix $\bm{A}$, letting $\bm{S}_A$ denote the matrix whose diagonal contains the largest $\hat{d}$ eigenvalues of $\bm{A}$ in absolute value and $\bm{U}_A$ denote the matrix whose columns are corresponding unit norm eigenvectors.\footnote{%incomplete thought in footnote
    In principle, the optimal value for $\hat{d} = rank(\bm{B}_Z)$; however we do not observe $\bm{B}_Z$, therefore we estimate $\hat{d}$ by profile likelihood methods \citep{ZhuGhodsi2006}.} The Adjacency Spectral Embedding (ASE) of $\bm{A}$ gives an estimate of the latent positions of the 4-block model as 
    \begin{equation}
        \widehat{\bm{Y}} = \bm{U}_A \vert \bm{S}_{A}\vert^{1/2},
    \end{equation}
    where $\vert \cdot \vert$ indicates the absolute value (entrywise).
    \item We use $\hat{\bm{Y}}$ to estimate $\bm{P}$ as
    \begin{equation}
        \widehat{\bm{P}} = \widehat{\bm{Y}} \bm{I}_{\hat{d}_1,\hat{d}_2}\widehat{\bm{Y}}^T,
    \end{equation}
    where %(cut) $\hat{d}_1,\hat{d}_2\geq 0$ are two integers such that $\hat{d}_1+\hat{d}_2=\hat{d}$; here 
    $\hat{d} := \hat{d}_{1} + \hat{d}_{2}$ is the number of largest eigenvalues (in magnitude) of $\bm{A}$ beyond a prescribed threshold, and $\hat{d}_1$ and $\hat{d}_2$ are the number of these eigenvalues that are positive and negative, respectively.

    \item We use a clustering procedure to assign each row of $\widehat{\bm{Y}}$ to one of $\tilde{K}=4$ blocks. We use a Gaussian Mixture Modeling approach (GMM) and estimate the center of the clusters  $\widehat{\bm{\mu}}=[\widehat{\bm{\mu}}_1,\widehat{\bm{\mu}}_2,\widehat{\bm{\mu}}_3,\widehat{\bm{\mu}}_4]$, that is the means of the Gaussians from the GMM.
    \item We compute an estimate for $\bm{\theta}_Z$  as
    \begin{eqnarray}
        \widehat{\bm{\theta}}_Z &=& \widehat{\bm{\mu}}^{T}\bm{I}_{\hat{d}_1,\hat{d}_2}\widehat{\bm{\mu}} \notag\\
        &=&  \bordermatrix{ & male_1 &female_1 & male_2 & female_2 \cr
                male_1& \widehat{\bm{\mu}}_1^T\bm{I}_{\hat{d}_1,\hat{d}_2}\widehat{\bm{\mu}}_1  & \widehat{\bm{\mu}}_1^T\bm{I}_{\hat{d}_1,\hat{d}_2}\widehat{\bm{\mu}}_2 & \widehat{\bm{\mu}}_1^T\bm{I}_{\hat{d}_1,\hat{d}_2}\widehat{\bm{\mu}}_3 & \widehat{\bm{\mu}}_1^T\bm{I}_{\hat{d}_1,\hat{d}_2}\widehat{\bm{\mu}}_4 \cr
                female_1& \widehat{\bm{\mu}}_2^T\bm{I}_{\hat{d}_1,\hat{d}_2}\widehat{\bm{\mu}}_1  & \widehat{\bm{\mu}}_2^T\bm{I}_{\hat{d}_1,\hat{d}_2}\widehat{\bm{\mu}}_2 & \widehat{\bm{\mu}}_2^T\bm{I}_{\hat{d}_1,\hat{d}_2}\widehat{\bm{\mu}}_3 &\widehat{\bm{\mu}}_2^T\bm{I}_{\hat{d}_1,\hat{d}_2}\widehat{\bm{\mu}}_4 \cr
                male_2& \widehat{\bm{\mu}}_3^T\bm{I}_{\hat{d}_1,\hat{d}_2}\widehat{\bm{\mu}}_1 & \widehat{\bm{\mu}}_3^T\bm{I}_{\hat{d}_1,\hat{d}_2}\widehat{\bm{\mu}}_2 & \widehat{\bm{\mu}}_3^T\bm{I}_{\hat{d}_1,\hat{d}_2}\widehat{\bm{\mu}}_3 & \widehat{\bm{\mu}}_3^T\bm{I}_{\hat{d}_1,\hat{d}_2}\widehat{\bm{\mu}}_4 \cr
                female_2& \widehat{\bm{\mu}}_4^T\bm{I}_{\hat{d}_1,\hat{d}_2}\widehat{\bm{\mu}}_1  &  \widehat{\bm{\mu}}_4^T\bm{I}_{\hat{d}_1,\hat{d}_2}\widehat{\bm{\mu}}_2 & \widehat{\bm{\mu}}_4^T\bm{I}_{\hat{d}_1,\hat{d}_2}\widehat{\bm{\mu}}_3 & \widehat{\bm{\mu}}_4^T\bm{I}_{\hat{d}_1,\hat{d}_2}\widehat{\bm{\mu}}_4 }.
                \label{eq:estimated_matrix_BZ}
    \end{eqnarray}
    By comparing the matrix $\widehat{\bm{\theta}}_Z $ and the population matrix $\bm{\theta}_Z $, we can assign each of the 4 blocks to the original 2 blocks. In fact, we know that the diagonal terms of $\widehat{\bm{\theta}}_Z $ are estimates of $h(p^2+\beta)$ if the nodes belong to block 1, or $h(q^2 + \beta)$ if nodes belong to block 2. This observation shows that we can group the 4 entries of the diagonal, by checking which values are close. In our case, we will group  $\widehat{\bm{\mu}}_1^T\bm{I}_{\hat{d}_1,\hat{d}_2}\widehat{\bm{\mu}}_1 $ and $\widehat{\bm{\mu}}_2^T\bm{I}_{\hat{d}_1,\hat{d}_2}\widehat{\bm{\mu}}_2 $ in one block, and $\widehat{\bm{\mu}}_3^T\bm{I}_{\hat{d}_1,\hat{d}_2}\widehat{\bm{\mu}}_3$ and $\widehat{\bm{\mu}}_4^T\bm{I}_{\hat{d}_1,\hat{d}_2}\widehat{\bm{\mu}}_4$ in another block. Therefore, blocks 1 and 2 in the 4-block model are assigned to the original latent block 1, while blocks 3 and 4 are assigned to original latent block 2. 
    \item We estimate $\beta$ from the entries of the matrix $ \widehat{\bm{B}}_Z  = h^{-1}(\widehat{\bm{\theta}_Z })$, where the inverse function $h^{-1}$ is applied element-wise. For example we know that  $h^{-1}(\widehat{\bm{\mu}}_1^T\bm{I}_{\hat{d}_1,\hat{d}_2}\widehat{\bm{\mu}}_1) $ is an estimate of $p^2+\beta$ or $q^2 +\beta$ (because of the invariance of the model to relabeling of the blocks). Without loss of generality, assume that $h^{-1}(\widehat{\bm{\mu}}_1^T\bm{I}_{\hat{d}_1,\hat{d}_2}\widehat{\bm{\mu}}_1) $ is an estimate of $p^2+\beta$; therefore, the entry  $h^{-1}(\widehat{\bm{\mu}}_1^T\bm{I}_{\hat{d}_1,\hat{d}_2}\widehat{\bm{\mu}}_2 )$ is an estimate of $p^2$. Therefore, the point estimate of $\beta$ is then
    \begin{equation}
        \widehat{\beta} = h^{-1}( \widehat{\bm{\mu}}_1^T\bm{I}_{\hat{d}_1,\hat{d}_2}\widehat{\bm{\mu}}_1) -h^{-1}(\widehat{\bm{\mu}}_1^T\bm{I}_{\hat{d}_1,\hat{d}_2}\widehat{\bm{\mu}}_2).
    \end{equation}
    \item The latent positions $p$ and $q$ can be estimated from the matrix  $\widehat{\bm{B}}_Z $, by using the submatrix
        \begin{equation}
          \bordermatrix{  &female_1  & female_2 \cr
                male_1&  h^{-1}(\widehat{\bm{\mu}}_1^T\bm{I}_{\hat{d}_1,\hat{d}_2}\widehat{\bm{\mu}}_2) &  h^{-1}(\widehat{\bm{\mu}}_1^T\bm{I}_{\hat{d}_1,\hat{d}_2}\widehat{\bm{\mu}}_4) \cr
                male_2&  h^{-1}(\widehat{\bm{\mu}}_3^T\bm{I}_{\hat{d}_1,\hat{d}_2}\widehat{\bm{\mu}}_2) &  h^{-1}(\widehat{\bm{\mu}}_3^T\bm{I}_{\hat{d}_1,\hat{d}_2}\widehat{\bm{\mu}}_4)  } = 
            \bordermatrix{  &   &   \cr
                 &  \widehat{p}^2 &\widehat{p}\widehat{q} \cr 
               &  \widehat{p}\widehat{q} &  \widehat{q}^2 }.
\label{eq:estimated_matrix_BZ_latentpos}
    \end{equation}
The spectral embedding of this matrix provides estimates for the latent positions $\hat{p}$ and $\hat{q}$, that are identified up to an orthogonal transformation.
%\item A simple estimate of the standard errors for $\hat{\beta}$ and $\hat{B}_{k\ell}$'s is given by a plug-in estimate of the formula for the variance in the central limit theorem (see Theorem~\ref{thm:clt_general_h_block_estim}~and~\ref{thm:clt_general_h_semisparse} and appendix for the formulas). \\ 

\end{enumerate}

In practice, we can estimate $\beta$ from multiple entries of the matrix $\bm{B}_Z$, for example $\beta = \bm{B}_{Z,11}-\bm{B}_{Z,12} = \bm{B}_{Z,33}-\bm{B}_{Z,34} $, and weight each estimate by the size of the blocks. This could improve the estimate, since some blocks are larger than others, so delivering more precise estimates. Our code implements this idea, which is more practical for empirical applications.

\section{Asymptotic theory}
In this section, we derive a central limit theorem for the spectral estimator of $\beta$. For ease of exposition, we focus on the case of a single binary observed covariate and scalar $\beta$, though our method works for other specifications in which the effect of the observed covariates $\beta$ can be written as a function of the stochastic blockmodel's probability matrix $\bm{\theta}_Z$. Extensions to multiple binary or discrete observed covariates are straightforward albeit tedious.

We desire to estimate a stochastic blockmodel with observed covariates, where
\begin{eqnarray}
\tau_i &\overset{iid}{\sim} & Multinomial(1;\pi_1,\dots,\pi_K), \\
\bm{Z}_i\vert \tau_i &\overset{ind}{\sim} & Bernoulli(b_{\tau_i}),\\
\bm{A}_{ij}\vert \tau_i, \tau_j, \bm{Z}_i, \bm{Z}_j &\overset{ind}{\sim}& Bernoulli(\bm{P}_{ij}),\\
\bm{P}_{ij}  &=& h\left(\bm{B}_{\tau_i \tau_j} + \beta \bm{1}_{\lbrace  \bm{Z}_i = \bm{Z}_j\rbrace}\right). \label{eq:general_model_sbm}
\end{eqnarray}

We assume that the observed covariates are binary and can depend on the block assignment, that is $\bm{Z}_i\vert \tau_i \overset{ind}{\sim} Bernoulli(b_{\tau_i})$, where $b_{\tau_i}= P(\bm{Z}_i=1\vert \tau_i)$ . Our asymptotic results are easily extended to the case of discrete observed covariates with three or more possible outcomes. 

As explained above in the simple example, our strategy consists of rewriting the SBM as a GRDPG. First, notice that the matrix $\bm{B}$ can be written as $\bm{B}_{\tau_i \tau_j} = \bm{X}_i^T \bm{X}_j$, where $\bm{X}_i$ is a $d \times 1$ % <---
vector of latent positions that has $K$ possible values  $\bm{\nu}_1, \dots, \bm{\nu}_{K}$. In practice, $\bm{\nu}=(\bm{\nu}_1, \dots, \bm{\nu}_{K})$  are the centers of the $K$ blocks $\bm{X}$, such that $i$ and $j$ belong to \emph{unobserved} block $k$ when $\bm{X}_i=\bm{X}_j=\bm{\nu}_k$. Let $\bm{\tau}$ be the function that assigns nodes to unobserved blocks; then $\tau_i=k$ if $\bm{X}_i=\bm{\nu}_k$. We can thus rewrite the stochastic blockmodel above as a random dot product graph with observed covariates as follows:
\begin{eqnarray}
\bm{X}_i &\overset{iid}{\sim} & \pi_1 \delta_{\nu_1}+ \pi_2 \delta_{\nu_2}+\cdots+\pi_K \delta_{\nu_K}, \\
\bm{Z}_i\vert \bm{X}_i &\overset{ind}{\sim} & Bernoulli(b_{\tau_i}),\\
\bm{A}_{ij}\vert \bm{X}_i, \bm{X}_j, \bm{Z}_i, \bm{Z}_j &\overset{ind}{\sim}& Bernoulli(\bm{P}_{ij}),\\
\bm{P}_{ij}  &=& h\left(\bm{X}_i^T \bm{X}_j + \beta \bm{1}_{\lbrace  \bm{Z}_i = \bm{Z}_j\rbrace}\right).
\end{eqnarray}

We first notice that both models are stochastic blockmodels with $\tilde{K}=2K$ blocks, because the indicator variable $\bm{1}_{\lbrace \bm{Z}_i=\bm{Z}_j\rbrace}$ splits each unobserved block in two blocks. 
The probabilities of belonging to a block $k$ for this $\widetilde{K}$-block SBM are denoted as  $\bm{\eta} = (\eta_1, \dots, \eta_{\widetilde{K}}) = ( \pi_1 \cdot b_1, \pi_1 \cdot (1-b_1), \pi_2 \cdot b_2, \pi_2 \cdot (1-b_2), \dots , \pi_K \cdot b_K, \pi_K \cdot (1-b_K) )$; and the functions that assign nodes to blocks are $\bm{\xi} = (\xi_1,\dots,\xi_{n} )$, such that  $\xi_i = 1$ if $ \tau_i = 1$ and $ \bm{Z}_i=0 $; $\xi_i =2$ if $\tau_i=1$ and $ \bm{Z}_i=1$; $\xi_i=3$ if $\tau_i=2$ and $ \bm{Z}_i=0 $; $\xi_i=4$ if $\tau_i=2$ and $ \bm{Z}_i=1$; and so on.\\
% I made all the Z's bold for consistent notation...but they are scalars, so should we denote them unbolded?

So we have a stochastic blockmodel $ (\bm{A}, \bm{\xi}, \bm{Z}) \sim SBM(\bm{\theta}_Z,\bm{\eta})$ with the $\widetilde{K}\times \widetilde{K}$ matrix of probabilities $\bm{\theta}_Z$ 
\begin{tiny}
\begin{equation}
\bm{\theta}_Z = \bordermatrix{ & \tau=1;Z=0 & \tau=1;Z=1 & \tau=2;Z=0 & \tau=2;Z=1 & \cdots & \tau=K;Z=0 & \tau=K;Z=1 \cr
                \tau=1;Z=0 & h\left(\bm{\nu}_1^T  \bm{\nu}_1 + \beta \right) & h\left(\bm{\nu}_1^T  \bm{\nu}_1\right)  & h\left(\bm{\nu}_1^T  \bm{\nu}_2  + \beta \right) & h\left(\bm{\nu}_1^T \bm{\nu}_2 \right) & \cdots & h\left(\bm{\nu}_1^T  \bm{\nu}_K  + \beta \right) & h\left(\bm{\nu}_1^T  \bm{\nu}_K \right)\cr                
                \tau=1;Z=1 & h\left(\bm{\nu}_1^T  \bm{\nu}_1\right) & h\left(\bm{\nu}_1^T  \bm{\nu}_1 + \beta\right) & h\left(\bm{\nu}_1^T \bm{\nu}_2 \right) & h\left( \bm{\nu}_1^T  \bm{\nu}_2 + \beta\right) & \cdots & h\left(\bm{\nu}_1^T  \bm{\nu}_K \right)  & h\left(\bm{\nu}_1^T  \bm{\nu}_K  + \beta \right)\cr
                \tau=2;Z=0 & h\left(\bm{\nu}_2^T  \bm{\nu}_1 + \beta \right)& h\left(\bm{\nu}_2^T  \bm{\nu}_1 \right) & h\left(\bm{\nu}_2^T \bm{\nu}_2  + \beta\right) & h\left(\bm{\nu}_2^T  \bm{\nu}_2 \right)& \cdots & h\left(\bm{\nu}_2^T  \bm{\nu}_K  + \beta \right)& h\left(\bm{\nu}_2^T \bm{\nu}_K \right)\cr                
                \tau=2;Z=1 & h\left(\bm{\nu}_2^T  \bm{\nu}_1\right) & h\left(\bm{\nu}_2^T \bm{\nu}_1 + \beta \right)& h\left(\bm{\nu}_2^T  \bm{\nu}_2\right) & h\left(\bm{\nu}_2^T \bm{\nu}_2 + \beta \right) & \cdots & h\left(\bm{\nu}_2^T \bm{\nu}_K\right)  & h\left(\bm{\nu}_2^T \bm{\nu}_K  + \beta \right)\cr
                \vdots &  \vdots & \vdots & & & \ddots \cr
                \tau=K;Z=0 & h\left(\bm{\nu}_K^T  \bm{\nu}_1 + \beta\right) & h\left(\bm{\nu}_K^T  \bm{\nu}_1 \right) & h\left(\bm{\nu}_K^T \bm{\nu}_2  + \beta \right) & h\left(\bm{\nu}_K^T \bm{\nu}_2 \right) & \cdots & h\left( \bm{\nu}_K^T \bm{\nu}_K  + \beta \right) & h\left(\bm{\nu}_K^T  \bm{\nu}_K \right) \cr                
                \tau=K;Z=1 & h\left( \bm{\nu}_K^T  \bm{\nu}_1 \right) & h\left( \bm{\nu}_K^T  \bm{\nu}_1 + \beta \right) & h\left( \bm{\nu}_K^T  \bm{\nu}_2 \right) & h\left( \bm{\nu}_K^T  \bm{\nu}_2 + \beta \right) & \cdots & h\left( \bm{\nu}_K^T  \bm{\nu}_K \right)  & h\left( \bm{\nu}_K^T   \bm{\nu}_K  + \beta\right)  \cr
 }
\label{eq:matrix_thetaZ_shape}
\end{equation}
\end{tiny}

The stochastic blockmodel characterized by the matrix $\bm{\theta}_Z$ can be re-formulated as a GRDPG. Indeed, consider the eigendecomposition $\bm{\theta}_Z \equiv \bm{U}\bm{\Sigma} \bm{U}^T$, and define $\bm{\mu}=[\bm{\mu}_1,\bm{\mu}_2, \dots,\bm{\mu}_{\widetilde{K}}]$ as the rows of $\bm{U}\vert \bm{\Sigma}\vert^{1/2}$; then let $F=\sum_{k=1}^{\widetilde{K}}\eta_k \delta_{\bm{\mu}_k}$, where $\delta$ is the Dirac-delta; and $d_1$ and $d_2$ are the number of positive and negative eigenvalues of $\bm{\theta}_Z$, respectively. Then, the Generalized Random Dot Product Graph model  $(\bm{Y},\bm{A}) \sim GRDPG_{d_1,d_2}(F)$ corresponding to our stochastic blockmodel $ (\bm{A}, \bm{\xi}, \bm{Z}) \sim SBM(\bm{\theta}_Z,\bm{\eta})$ is given by
\begin{eqnarray}
    \bm{Y}_i &\overset{iid}{\sim} & \eta_1 \delta_{\bm{\mu}_1} + \cdots + \eta_{\tilde{K}} \delta_{\bm{\mu}_{\tilde{K}}}, \\
    \bm{A}_{ij} \vert \bm{Y}_i,\bm{Y}_j &\overset{ind}{\sim} & Bernoulli(\bm{Y}_i^T \bm{I}_{d_1,d_2} \bm{Y}_j),
\end{eqnarray}
where  $d_1+d_2=\widetilde{d}=rank(\bm{\theta}_Z)$ and $\bm{Y}$ is the $n\times \widetilde{d}$  vector of latent positions with centroids $\bm{\mu}$.\\

We can now extend asymptotic results for estimation of RDPGs in \cite{AthreyaEtAl2018a,TangEtAl2017} to estimate block assignments and the effect of the covariates (see \cite{Rubin-DelanchyEtAl2018} for the corresponding generalization to GRDPGs). \\

\subsection{Main theoretical result}
Because the functions $\bm{\tau}$ that describe the assignments to blocks are unknown, the $\widetilde{K}$ SBM model assignment functions $\bm{\xi}$ are also unknown. Applying  the Adjacency Spectral Embedding procedure, we recover an estimate $\widehat{\bm{\xi}}$. 

We prove asymptotic normality for the parameter $\beta$, exploiting the fact that $\beta$ can be written as a function of the SBM probabilities, that is 
\begin{equation}
    \beta = h^{-1}\left(\bm{\theta}_{Z,11}\right) - h^{-1}\left(\bm{\theta}_{Z,12}\right) = h^{-1}\left(\bm{\nu}_1^T \bm{\nu}_1 + \beta \right)-  h^{-1}\left(\bm{\nu}_1^T \bm{\nu}_1 \right).
\end{equation}

If the blocks were known at the onset, we could use the estimator $\widehat{\beta} = h^{-1}(\widehat{\bm{\theta}}_{Z,11} )- h^{-1}(\widehat{\bm{\theta}}_{Z,12}  )$. However, all that we have access to is the estimate $\widehat{\bm{\xi}}$, so it is crucial that this estimate be consistent. For RDPGs this is indeed the case, as one can prove that the latent blocks are recovered up to an orthogonal transformation matrix in the large $n$ limit (Lemma~4 in \cite{TangEtAl2017}). Therefore we can recover the parameter $\beta$ up to relabeling of the blocks. This is summarized in the following theorem.

\begin{theorem} \textbf{Central limit theorem for $\beta$} \\
Let $\bm{\tau}$ be unknown and $K$ known. Let $\widehat{\bm{\tau}}: [n]\rightarrow [K]$ be the function that assigns nodes to clusters, estimated using GMM or K-means clustering on the rows of $\widehat{\bm{Y}}= \widehat{\bm{U}}\vert \widehat{\bm{S}} \vert^{1/2}$ .  Let function $g$ be defined as the inverse of $h$, that is $g(\cdot)=h^{-1}(\cdot)$, with first derivative $g^{\prime}(\cdot)$. Let $g^{\prime}(\bm{\nu}_{1}^T \bm{\nu}_{1} +\beta)\neq 0$ and $g^{\prime}(\bm{\nu}_{1}^T \bm{\nu}_{2} )\neq 0$. Then there exists a sequence of permutations $\phi\equiv \phi_n$ on $[K]$ such that the estimator  $\widehat{\beta} = h^{-1}(\widehat{\bm{\theta}}_{Z,\phi(1)\phi(1)})-h^{-1}(\widehat{\bm{\theta}}_{Z,\phi(1)\phi(2)}) $ is asymptotically normal, that is
\begin{equation}
    n\left(\widehat{\beta} - \beta - \frac{\widehat{\psi}_\beta}{n}\right)\overset{d}{\rightarrow}N(0,\widehat{\sigma}_\beta^2)
\end{equation}
as $n\rightarrow\infty$. The values $\widehat{\psi}_\beta$ and $\widehat{\sigma}_\beta^2$ are computed in the proof.
\label{thm:clt_general_h_block_estim}
\end{theorem}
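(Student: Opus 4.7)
The plan is to express $\widehat{\beta}$ as a smooth function of a small number of entries of the estimated SBM probability matrix $\widehat{\bm{\theta}}_Z$, and then apply a delta-method argument to a central limit theorem for those entries that is already available from the GRDPG literature. Since $\beta = g(\bm{\theta}_{Z,11}) - g(\bm{\theta}_{Z,12})$ with $g = h^{-1}$, I would write
\begin{equation}
\widehat{\beta} = g(\widehat{\bm{\theta}}_{Z,\phi(1)\phi(1)}) - g(\widehat{\bm{\theta}}_{Z,\phi(1)\phi(2)})
\end{equation}
and expand each term to second order around its true argument. The first-order piece, linear in the block-probability errors, will produce the asymptotic variance $\widehat{\sigma}_\beta^2$ as the quadratic form $(g'(\bm{\theta}_{Z,11}),\,-g'(\bm{\theta}_{Z,12}))$ applied to the joint limiting covariance of the two entries; the second-order Taylor remainder, combined with the intrinsic bias of the plug-in estimators of $\bm{\theta}_Z$, will produce the $1/n$ correction $\widehat{\psi}_\beta/n$.

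First I would discharge the clustering step. Because the $\widetilde{K}$-block reformulation inherits well-separated centroids $\bm{\mu}_1,\ldots,\bm{\mu}_{\widetilde{K}}$ in the indefinite inner-product space, Lemma 4 of \cite{TangEtAl2017} together with its GRDPG extension in \cite{Rubin-DelanchyEtAl2018} implies that, with probability tending to one, the GMM or K-means clustering of the rows of $\widehat{\bm{Y}}$ recovers $\bm{\xi}$ exactly, up to a relabeling permutation $\phi$. Misclassification probabilities decay faster than any polynomial in $n$, so the clustering error is asymptotically negligible at the $n^{-1}$ scale on which $\widehat{\beta}$ concentrates. On this event the estimator reduces to $\widehat{\bm{\theta}}_{Z,\phi(k)\phi(\ell)} = \widehat{\bm{\mu}}_{\phi(k)}^{T}\bm{I}_{\hat{d}_1,\hat{d}_2}\widehat{\bm{\mu}}_{\phi(\ell)}$, an inner product of sample centroids. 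I would then invoke the central limit theorem for such inner-product statistics under the GRDPG model \citep{AthreyaEtAl2018a,TangEtAl2017,Rubin-DelanchyEtAl2018}, which gives, for each fixed pair $(k,\ell)$, that $n(\widehat{\bm{\theta}}_{Z,\phi(k)\phi(\ell)} - \bm{\theta}_{Z,k\ell})$ is asymptotically normal with explicit bias and variance depending on $\bm{\mu}_k$, $\bm{\mu}_\ell$, $F$, and the signature $(d_1,d_2)$. Crucially the inner product is invariant under the indefinite-orthogonal nuisance transformation inherent to GRDPG embedding, so the CLT is genuinely about $\bm{\theta}_{Z,k\ell}$ rather than some rotated counterpart.

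The hardest step will be the bookkeeping of the bias and joint dependence. The $1/n$ bias has two sources entering at the same order, the intrinsic GRDPG-induced bias of the plug-in estimators of $\bm{\theta}_{Z,11}$ and $\bm{\theta}_{Z,12}$, and the $g''$ contribution from the second-order Taylor remainder, and they must be combined consistently into $\widehat{\psi}_\beta$. Establishing \emph{joint} (not merely marginal) asymptotic normality of $(\widehat{\bm{\theta}}_{Z,\phi(1)\phi(1)},\widehat{\bm{\theta}}_{Z,\phi(1)\phi(2)})$ is needed to recover the correct covariance that feeds $\widehat{\sigma}_\beta^2$; this follows from the joint CLT for pairs of inner products of ASE rows, but it requires some care because the two quadratic forms share the factor $\widehat{\bm{\mu}}_{\phi(1)}$ and are therefore correlated. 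Finally, the nonvanishing-derivative hypotheses $g'(\bm{\nu}_{1}^{T}\bm{\nu}_{1}+\beta)\neq 0$ and $g'(\bm{\nu}_{1}^{T}\bm{\nu}_{2})\neq 0$ guarantee that the linearization is genuinely first order and that the asymptotic variance $\widehat{\sigma}_\beta^2$ is strictly positive, so the limiting normal law is nondegenerate.
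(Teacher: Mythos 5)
Your proposal follows essentially the same route as the paper: rewrite the covariate-augmented SBM as a $\widetilde{K}$-block GRDPG, invoke the Tang et al.~CLT for the estimated block probabilities with the permutation $\phi$ absorbing the clustering relabeling (the paper cites Corollary~2 of Tang et al.\ directly, which already packages the exact-recovery step you discharge via Lemma~4), and then delta-method through $g=h^{-1}$, with the joint normality and the covariance of $\widehat{\bm{\theta}}_{Z,\phi(1)\phi(1)}$ and $\widehat{\bm{\theta}}_{Z,\phi(1)\phi(2)}$ entering $\widehat{\sigma}_\beta^2$ exactly as you describe. One correction to your bias bookkeeping: because the CLT here is at scale $n$ rather than $\sqrt{n}$, we have $\widehat{\bm{\theta}}_{Z,k\ell}-\bm{\theta}_{Z,k\ell}=O_P(n^{-1})$, so the second-order Taylor remainder is $O_P(n^{-2})$ and vanishes even after multiplication by $n$; the $1/n$ correction $\widehat{\psi}_\beta/n$ therefore comes solely from the intrinsic plug-in bias $\psi_{k\ell}/n$ propagated through the first derivative $g'$, which is why the paper's proof stops at a first-order expansion and your proposed $g''$ contribution does not in fact enter at the same order.
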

% WHY DO YOU HAVE HATS ON \psi and \sigma ???
% AM: because \psi and \sigma were the values under the known blocks case. So i had to put something on it to show that this is a different number. Those theorems are in appendix
% -----
% -----
% -----
\begin{proof}
See Appendix. 
\end{proof}

%This implies that we can recover the parameter $\beta$, and our estimator is asymptotically unbiased. In Appendix we provide the expression for the bias term and the variance.\\
% Revisit claim

\subsection{Sparsity}
The previous theoretical result implicitly assumes a dense network. However, many social and economic networks of interest in applications display some degree of sparsity. This is an empirical regularity %vague word choice <<<
that social scientists have observed in many settings, as most people do not form many links. Economists rationalize sparsity with the fact that people have constraints on time to spend with their friends \citep{Jackson2008}. 

We follow the literature and assume that sparsity is an asymptotic feature of the data generating process.  We multiply the probability $\bm{P}_{ij}$ by a scalar $\rho_n$ that governs the sparsity of the network, that is, the probability of a link between nodes $i$ and $j$ becomes
\begin{equation}
    \bm{P}_{ij} = \rho_n h\left( \bm{X}_i^T \bm{X}_j + \beta \mathbf{1}_{\lbrace\bm{Z}_i = \bm{Z}_j\rbrace}\right).
\end{equation}

Our previous result in Theorem \ref{thm:clt_general_h_block_estim} applies to dense networks; that is when $\rho_n\rightarrow c$ where $c\in (0,1]$ is a constant. For simplicity and without loss of generality, in Theorem \ref{thm:clt_general_h_block_estim} we have assumed $c=1$. 

In this section we consider formally the case of $\rho_n\rightarrow 0$ as $n\rightarrow\infty$. We have to limit the rate of convergence for $\rho_n$, because the network could become too sparse, not allowing estimation. We will describe this regime a \emph{semi-sparse}, because we will allow $\rho_n\rightarrow 0$ but $n\rho_n = \omega(\sqrt{n})$, that is the average degree of the network grows sub-linearly in $n$.\footnote{The notation $n\rho_n = \omega(\sqrt{n})$ means that for any real constant $a>0$ there exists an $n_0\geq 1$ such that $\rho_n > a /\sqrt{n}\geq 0$ for every integer $n\geq n_0$.}
The intuition for this restriction is that too much sparsity makes links ``too rare'' and therefore spectral estimation and inference are impeded by having too few observations. 

\begin{theorem} \textbf{Central limit theorem for sparse networks}\\
Let model (\ref{eq:general_model_sbm}) include a sparsity coefficient $\rho_n$
\begin{equation}
    \bm{P}_{ij} = \rho_n h\left( \bm{X}_i^T \bm{X}_j + \beta \mathbf{1}_{\lbrace\bm{Z}_i = \bm{Z}_j\rbrace}\right)
\end{equation}
such that $\rho_n\rightarrow 0$ and $n\rho_n = \omega(\sqrt{n})$ as $n\rightarrow\infty$. Let $\hat{\bm{\tau}}$  be assignment of each node to a block, estimated using ASE and GMM (or K-means) clustering. Then there exists a sequence of permutations $\phi\equiv \phi_n$ on $[K]$ such that the estimator  $\widehat{\beta} = h^{-1}(\widehat{\bm{\theta}}_{Z,\phi(1)\phi(1)})-h^{-1}(\widehat{\bm{\theta}}_{Z,\phi(1)\phi(2)}) $ is asymptotically normal, that is
\begin{eqnarray}
n \rho_n^{1/2} \left(\hat{\beta} - \beta - \frac{\ddot{\psi}_{\beta}}{n\rho_n} \right)\overset{d}{\longrightarrow} N\left(0,\ddot{\sigma}_{\beta}^2 \right)
\end{eqnarray}
where $\ddot{\psi}_\beta$ and $\ddot{\sigma}_\beta^2$  are computed in the proof.

\label{thm:clt_general_h_semisparse}
\end{theorem}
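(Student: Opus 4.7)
The plan is to lift the dense-case analysis of Theorem~\ref{thm:clt_general_h_block_estim} to the sparse regime by rescaling the latent positions and invoking the semi-sparse GRDPG central limit theorem. The starting observation is that $(\bm{A},\bm{\xi},\bm{Z})\sim SBM(\rho_n\bm{\theta}_Z,\bm{\eta})$ admits a GRDPG representation with rescaled latent positions $\tilde{\bm{\mu}}_k:=\rho_n^{1/2}\bm{\mu}_k$, so that $\bm{P}_{ij}=\tilde{\bm{Y}}_i^{T}\bm{I}_{d_1,d_2}\tilde{\bm{Y}}_j$ with $\tilde{\bm{Y}}_i\in\{\tilde{\bm{\mu}}_1,\dots,\tilde{\bm{\mu}}_{\tilde{K}}\}$. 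Under the semi-sparse condition $n\rho_n=\omega(\sqrt{n})$, the row-wise sparse ASE central limit theorem for GRDPG (extending \cite{TangEtAl2017} to the indefinite-signature case via \cite{Rubin-DelanchyEtAl2018}) yields $\sqrt{n}(\widehat{\bm{Y}}_i\bm{Q}_n-\tilde{\bm{Y}}_i)\overset{d}{\longrightarrow}N(0,\bm{\Sigma}(\tilde{\bm{Y}}_i))$ for an appropriate indefinite-orthogonal transformation $\bm{Q}_n$, with asymptotic covariance carrying the requisite $\rho_n$-dependence through the Bernoulli second moment. Because $n\rho_n\gg\sqrt{n}\gg\log n$, exact cluster recovery by GMM or $K$-means holds with probability tending to one, so the clustering step can be treated as if the labels were known.

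Next I would derive a joint CLT for the estimated centroids $\widehat{\bm{\mu}}_k$ by averaging $n\eta_k$ row estimates per cluster---carefully accounting for the cross-row dependence induced by the shared randomness in $\bm{A}$---and then expand $\widehat{\bm{\theta}}_{Z,k\ell}=\widehat{\bm{\mu}}_k^{T}\bm{I}_{\hat{d}_1,\hat{d}_2}\widehat{\bm{\mu}}_\ell$ around $\tilde{\bm{\mu}}_k^{T}\bm{I}_{d_1,d_2}\tilde{\bm{\mu}}_\ell=\rho_n\bm{\theta}_{Z,k\ell}$. Collecting the two dominant cross-terms of order $\rho_n^{1/2}$ times the centroid fluctuations produces, for the two pairs $(k,\ell)$ entering $\widehat{\beta}$,
\begin{equation*}
\frac{n}{\rho_n^{1/2}}\bigl(\widehat{\bm{\theta}}_{Z,k\ell}-\rho_n\bm{\theta}_{Z,k\ell}-\psi^{(n)}_{k\ell}/n\bigr)\overset{d}{\longrightarrow}N\bigl(0,\sigma_{k\ell}^{2}\bigr),
\end{equation*}
where $\psi^{(n)}_{k\ell}$ absorbs both the propagated ASE bias and the quadratic residual $\epsilon_k^{T}\bm{I}_{d_1,d_2}\epsilon_\ell$ from the bilinear form. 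This rate matches the direct block-counting rate for estimating $\rho_n\bm{\theta}_{Z,k\ell}$, which is what one expects from efficiency considerations.

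Finally, applying the delta method to the mapping that extracts $\beta$ from the two relevant block-probabilities---with the $\rho_n$-rescaling absorbed into the definition of $h^{-1}$ and using the non-vanishing derivative hypotheses $g^{\prime}(\bm{\nu}_1^{T}\bm{\nu}_1+\beta)\neq 0$ and $g^{\prime}(\bm{\nu}_1^{T}\bm{\nu}_2)\neq 0$ exactly as in Theorem~\ref{thm:clt_general_h_block_estim}---converts the joint CLT for the $\widehat{\bm{\theta}}_{Z,k\ell}$'s into the stated CLT for $\widehat{\beta}$ at rate $n\rho_n^{1/2}$. The bias $\ddot{\psi}_\beta$ is the sum of the propagated ASE bias, the bilinear residual bias, and the second-order delta-method correction from the nonlinearity of $h^{-1}$; each contribution is of order $1/(n\rho_n)$ and collapses to the $O(1/n)$ dense bias of Theorem~\ref{thm:clt_general_h_block_estim} as $\rho_n\to 1$. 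The main obstacle is the careful tracking of the $\rho_n$-dependence through the three cascading linearizations (ASE, bilinear form, $h^{-1}$): the leading terms must collapse cleanly to the stated rate, while subleading corrections of relative order $1/(n\rho_n)$ vanish precisely under the semi-sparse regime $n\rho_n=\omega(\sqrt{n})$; a secondary technical point is verifying uniform control of the sparse ASE covariance so that cluster-recovery errors do not contaminate the centroid averages faster than $\sqrt{n}/\rho_n^{1/2}$.
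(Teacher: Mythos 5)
Your overall architecture coincides with the paper's: rewrite the $\widetilde K=2K$-block SBM as a GRDPG whose latent positions are scaled by $\rho_n^{1/2}$, obtain a CLT at rate $n\rho_n^{1/2}$ for the spectral estimates of the block probabilities, observe that GMM/$K$-means clustering is exact up to a permutation $\phi_n$ in the semi-sparse regime, and push the result through $h^{-1}$ by a first-order expansion using $g'(\bm{\nu}_1^T\bm{\nu}_1+\beta)\neq 0$ and $g'(\bm{\nu}_1^T\bm{\nu}_2)\neq 0$. The paper's proof is in fact nothing more than this skeleton: it asserts that the argument of Theorem~\ref{thm:clt_general_h_block_estim} goes through verbatim once the bias, variance and covariance terms are replaced by those of the semi-sparse extension of Corollary~2 of \cite{TangEtAl2017}, which it records as a lemma. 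Your sanity checks (the rate matching the block-counting rate, the bias reducing to the dense $O(1/n)$ bias as $\rho_n\to 1$, the non-vanishing of $\ddot\psi_\beta\rho_n^{-1/2}$ after scaling) are all consistent with the paper.

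The genuine gap is in how you propose to obtain the intermediate CLT for $\widehat{\bm{\theta}}_{Z,k\ell}=\widehat{\bm{\mu}}_k^{T}\bm{I}_{d_1,d_2}\widehat{\bm{\mu}}_\ell$. You want to start from the row-wise ASE CLT for individual $\widehat{\bm{Y}}_i$ and then average the $n\eta_k$ rows within each cluster. The row-wise result is a \emph{marginal} statement; the rows of $\widehat{\bm{Y}}$ are all functions of the same adjacency matrix and their fluctuations are strongly dependent, so the limiting law of the centroid averages, and a fortiori of the bilinear form in two centroids, does not follow from it. The double sums over blocks $r,s$ appearing in $\sigma_{k\ell}^{2}$ and $\psi_{k\ell}$ in Lemma~\ref{lemma:thm2TangEtAl2017} are precisely these cross-row covariance contributions, so ``carefully accounting for the cross-row dependence'' is not a side remark but the entire content of the step, and your proposal supplies no mechanism for computing those terms. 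The paper sidesteps this by working with the global decomposition $\tfrac{n\rho_n^{-1/2}}{n_kn_\ell}\left(\bm{s}_k^T\bm{E}\bm{\Pi}_U\bm{s}_\ell+\bm{s}_\ell^T\bm{\Pi}_U^{\perp}\bm{E}\bm{\Pi}_U\bm{s}_k\right)$ plus an explicit bias term plus an $O_p(n^{-1/2}\rho_n^{-1})$ remainder, and computing moments of the leading trace term directly; without that (or an equivalent global argument) your averaging step does not close. Two smaller points: the ``second-order delta-method correction'' you fold into $\ddot\psi_\beta$ is superfluous, since under $n\rho_n=\omega(\sqrt n)$ it is $o(1)$ even after multiplication by $n\rho_n^{1/2}$, and the paper's $\ddot\psi_\beta$ is just the derivative-weighted difference of $\ddot\psi_{11}$ and $\ddot\psi_{12}$; and the de-sparsification of $\widehat{\bm{\theta}}_{Z,k\ell}$ before applying $h^{-1}$ (i.e., removing or estimating $\rho_n$) needs to be made explicit, a point on which the paper itself is also silent.
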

\begin{proof}
See Appendix. 
\end{proof}
 
 Theorem~\ref{thm:clt_general_h_semisparse} says that as long as the network is not too sparse, the estimator of $\beta$ will be asymptotically normal. Notably, the bias term does not vanish asymptotically.
 
 We note that our formulation of sparsity does not impose any restriction on the network data for a fixed $n$, as it is based on a large sample property.

\subsection{Multiple observed covariates}
The asymptotic results hold for discrete observed covariates and more general models, as long as the effect of the observed covariates on the probability of linking can be written as a function of the block probabilities. Let the observed variables $\bm{Z}_i = [\bm{Z}_{i}^{(1)},\bm{Z}_{i}^{(2)}]$ be two covariates. For simplicity, we consider the case of binary variables, and we assume  $\bm{Z}_{i}^{(1)}\overset{ind}{\sim} Bernoulli(b_{\tau_i}^{(1)})$ and $\bm{Z}_{i}^{(2)}\overset{ind}{\sim} Bernoulli(b_{\tau_i}^{(2)})$. The results still hold for discrete variables. The model is 
\begin{eqnarray}
\bm{A}_{ij} \vert \bm{X}_i,\bm{X}_j,\bm{Z}_i,\bm{Z}_j &\overset{ind}{\sim} & Bernoulli\left(\bm{P}_{ij}\right), \\
\bm{P}_{ij} &=& h\left(  \bm{X}_i^T \bm{X}_j + \beta_1 \bm{1}_{\lbrace\bm{Z}_{i}^{(1)} = \bm{Z}_{j}^{(1)}\rbrace} + \beta_2 \bm{1}_{\lbrace\bm{Z}_{i}^{(2)} = \bm{Z}_{j}^{(2)}\rbrace}\right).
\end{eqnarray}
This stochastic blockmodel has $\tilde{K}=4K$ blocks, $ (\bm{A}, \bm{\xi}, \bm{Z}) \sim SBM(\bm{\theta}_{Z},\bm{\eta})$ with $\widetilde{K}\times \widetilde{K}$ matrix of probabilities $\bm{\theta}_{Z}$  given by

\begin{eqnarray}
\bm{\theta}_Z = 
\left[ \begin{matrix}
\bm{W}_{11}  & \bm{W}_{12} & ... & \bm{W}_{1\tilde{K}}\\
\bm{W}_{21} &  \bm{W}_{22} & ... & \bm{W}_{2\tilde{K}}\\
\vdots  &  \vdots  & \ddots & \vdots\\
\bm{W}_{\tilde{K}1}  & \bm{W}_{\tilde{K}2} & ... & \bm{W}_{\tilde{K}\tilde{K}}\\

\end{matrix} \right]
\label{eq:matrix_thetaZS_shape}
\end{eqnarray}

where each matrix $\bm{W}_{k\ell}$ is given by 

\begin{eqnarray}
\begin{blockarray}{c|cccc}
 & \multirow{3}{2cm}{ $\tau=\ell$   $Z^{(1)}=0$ $Z^{(2)}=0$ } &  \multirow{3}{2cm}{ $\tau=\ell$   $Z^{(1)}=1$ $Z^{(2)}=0$ } & \multirow{3}{2cm}{ $\tau=\ell$   $Z^{(1)}=0$ $Z^{(2)}=1$ }  & \multirow{3}{2cm}{ $\tau=\ell$   $Z^{(1)}=1$ $Z^{(2)}=1$}    
  \\
  \\ 
  \\
  \hline
 \multirow{3}{2cm}{ $\tau=k$   $Z^{(1)}=0$ $Z^{(2)}=0$ } & \multirow{3}{3.5cm}{ $h(\bm{\nu}_k^T \bm{\nu}_\ell+ \beta_1 + \beta_2)$ } & \multirow{3}{3cm}{ $h(\bm{\nu}_k^T \bm{\nu}_\ell+ \beta_2)$ } & \multirow{3}{3cm}{ $h(\bm{\nu}_k^T \bm{\nu}_\ell+ \beta_1) $ } & \multirow{3}{1.5cm}{ $h(\bm{\nu}_k^T \bm{\nu}_\ell)$ }  
\\
\\
\\
\\
  \multirow{3}{2cm}{ $\tau=k$   $Z^{(1)}=1$ $Z^{(2)}=0$ } & \multirow{3}{3cm}{ $h(\bm{\nu}_k^T \bm{\nu}_\ell + \beta_2)$ } & \multirow{3}{3cm}{ $h(\bm{\nu}_k^T \bm{\nu}_\ell+\beta_1 + \beta_2)$ } & \multirow{3}{3cm}{ $h(\bm{\nu}_k^T \bm{\nu}_\ell) $ } & \multirow{3}{3cm}{ $h(\bm{\nu}_k^T \bm{\nu}_\ell + \beta_1)$ } \\
  \\
  \\
  \\
    \multirow{3}{2cm}{ $\tau=k$   $Z^{(1)}=0$ $Z^{(2)}=1$ } & \multirow{3}{3cm}{ $h(\bm{\nu}_k^T \bm{\nu}_\ell + \beta_1)$ } & \multirow{3}{3cm}{ $h(\bm{\nu}_k^T \bm{\nu}_\ell)$ } & \multirow{3}{3cm}{ $h(\bm{\nu}_k^T \bm{\nu}_\ell + \beta_1 + \beta_2)$ } & \multirow{3}{3cm}{ $h(\bm{\nu}_k^T \bm{\nu}_\ell + \beta_2)$ } \\ 
  \\
  \\
  \\   
 \multirow{3}{2cm}{ $\tau=k$   $Z^{(1)}=1$ $Z^{(2)}=1$ } & \multirow{3}{3cm}{ $h(\bm{\nu}_k^T \bm{\nu}_\ell) $ } & \multirow{3}{3cm}{ $h(\bm{\nu}_k^T \bm{\nu}_\ell + \beta_1)$ } & \multirow{3}{3cm}{ $h(\bm{\nu}_k^T \bm{\nu}_\ell  + \beta_2)$ } & \multirow{3}{3cm}{ $h(\bm{\nu}_k^T \bm{\nu}_\ell + \beta_1 + \beta_2)$ } \\ 
 \\
 \\
 \hline
\end{blockarray}
\end{eqnarray}

The intuition is the same as the model with one covariate. The blocks can be inferred by clustering the diagonal elements of matrix $\bm{\theta}_{Z}$, and the parameters $\beta_1$ and $\beta_2$ are functions of the $\bm{\theta}_{Z}$ entries, namely
\begin{eqnarray}
	\beta_1  = h^{-1}\left(\bm{\theta}_{Z,11}\right) - h^{-1}\left(\bm{\theta}_{Z,12}\right); \hspace{.5cm} 
	\beta_2  = h^{-1}\left(\bm{\theta}_{Z,11}\right) - h^{-1}\left(\bm{\theta}_{Z,13}\right). 
\end{eqnarray}
As such, the main characterization of the central limit theorem holds in this case with minimal modifications.

\subsection{Differential Homophily}
In many applications the researcher is interested in testing for differential homophily in observable characteristics. For example, homophily among males could be higher than homophily among females, other things being equal. 
This can be accomplished in our setting by a minor modification of the algorithm for estimation. 

For ease of exposition, we will again consider the model with $K=2$ blocks and a binary covariate $\bm{Z}_i\in \lbrace 0,1 \rbrace$, where for concreteness we assume that $\bm{Z}_i = 0$ denotes ``male''. The model with differential homophily is 
\begin{eqnarray}
\bm{A}_{ij} \vert \bm{X}_i,\bm{X}_j,\bm{Z}_i,\bm{Z}_j &\overset{ind}{\sim} & Bernoulli\left(\bm{P}_{ij}\right), \\
\bm{P}_{ij} &=& h\left(  \bm{X}_i^T \bm{X}_j + \beta_1 \bm{1}_{\lbrace\bm{Z}_{i} = \bm{Z}_{j} = 0\rbrace} + \beta_2 \bm{1}_{\lbrace\bm{Z}_{i} = \bm{Z}_{j}=1\rbrace}\right),
\end{eqnarray}
where $\beta_1$ measures the impact of being both males on the probability of a link, while $\beta_2$ measures the effect of homophily among females. 
We can thus write the corresponding probability matrix $\bm{\theta}_Z$ for this model as follows
\begin{eqnarray}
\begin{blockarray}{c|cccc}
 & \multirow{2}{2cm}{ $\tau=1$   $Z=0$  } &  \multirow{2}{2cm}{ $\tau=1$   $Z=1$ } & \multirow{2}{2cm}{ $\tau=2$   $Z=0$  }  & \multirow{2}{2cm}{ $\tau=2$   $Z=1$}    
  \\
  \\ 
  \hline
 \multirow{2}{2cm}{ $\tau=1$   $Z=0$ } & \multirow{2}{3cm}{ $h(\bm{\nu}_1^T \bm{\nu}_1+ \beta_1) $ } & \multirow{2}{3cm}{ $h(\bm{\nu}_1^T \bm{\nu}_1)$ } & \multirow{2}{3cm}{ $h(\bm{\nu}_1^T \bm{\nu}_2+ \beta_1) $ } & \multirow{2}{3cm}{ $h(\bm{\nu}_1^T \bm{\nu}_2)$ }  
\\
\\
\\
  \multirow{2}{2cm}{ $\tau=1$   $Z=1$ } & \multirow{2}{3cm}{ $h(\bm{\nu}_1^T \bm{\nu}_1)$ } & \multirow{2}{3cm}{ $h(\bm{\nu}_1^T \bm{\nu}_1 + \beta_2)$ } & \multirow{2}{3cm}{ $h(\bm{\nu}_1^T \bm{\nu}_2) $ } & \multirow{2}{3cm}{ $h(\bm{\nu}_1^T \bm{\nu}_2 + \beta_2)$ } \\
  \\
  \\
    \multirow{2}{2cm}{ $\tau=2$   $Z=0$ } & \multirow{2}{3cm}{ $h(\bm{\nu}_2^T \bm{\nu}_1 + \beta_1)$ } & \multirow{2}{3cm}{ $h(\bm{\nu}_2^T \bm{\nu}_1)$ } & \multirow{2}{3cm}{ $h(\bm{\nu}_2^T \bm{\nu}_2 + \beta_1 )$ } & \multirow{2}{3cm}{ $h(\bm{\nu}_2^T \bm{\nu}_2)$ } \\ 
  \\
  \\
 \multirow{2}{2cm}{ $\tau=2$   $Z=1$ } & \multirow{2}{3cm}{ $h(\bm{\nu}_2^T \bm{\nu}_1) $ } & \multirow{2}{3cm}{ $h(\bm{\nu}_2^T \bm{\nu}_1 + \beta_2)$ } & \multirow{2}{3cm}{ $h(\bm{\nu}_2^T \bm{\nu}_2 )$ } & \multirow{2}{3cm}{ $h(\bm{\nu}_2^T \bm{\nu}_2 + \beta_2)$ } \\ 
 \\
 \hline
\end{blockarray}
\end{eqnarray}

We notice that clustering the main diagonal cannot inform about the structure of the blocks as in the previous examples. However, if we were to know the block assignments we will be able to identify $\beta_1$ and $\beta_2$ a functions of the entries of  $\bm{\theta}_Z$
\begin{eqnarray}
\beta_1 = h^{-1}(\bm{\theta}_{Z,11} ) -  h^{-1}(\bm{\theta}_{Z,12} ); 
\hspace{.5cm} 
\beta_2 = h^{-1}(\bm{\theta}_{Z,22} ) -  h^{-1}(\bm{\theta}_{Z,12} )
\end{eqnarray}
Notice that both $\beta_1$ and $\beta_2$ are functions of the stochastic blockmodel probabilities, thus the main characterization of the central limit theorem still holds.

We can estimate the block structure in the following way. First, consider the subgraph among all nodes with $Z=0$. The corresponding probability matrix of this stochastic blockmodel is 
\begin{equation}
    \bm{\theta}_{00}  = \begin{bmatrix} 
    h(\bm{\nu}_1^T \bm{\nu}_1+ \beta_1) & \ \ \  h(\bm{\nu}_1^T \bm{\nu}_2+ \beta_1) \\
    & \\
    h(\bm{\nu}_2^T \bm{\nu}_1+ \beta_1) & \ \ \ h(\bm{\nu}_2^T \bm{\nu}_2+ \beta_1)
    \end{bmatrix}.
\end{equation}
The structure of the model ensures that this subgraph is a GRDPG and we can estimate the latent positions using our algorithm. We can then cluster the estimated latent positions to obtain the estimated block assignments. 

Second, we repeat the same procedure for the subgraph among nodes with $Z=1$. Third, we estimate the GRDPG using our algorithm, and use the estimated block assignments in the first and second step. This allows us to estimate both $\beta_1$ and $\beta_2$ through the formula above.

\subsection{Practical details about the estimators}
For ease of exposition let's consider the model with one binary covariate in Theorem \ref{thm:clt_general_h_block_estim}.\footnote{The estimator for the other cases (multiple covariates, differential homophily, discrete covariates, etc. is obtained analogously.} Our central limit theorem focus on the differences of two entries of the matrix $\bm{\theta}_Z$. However, we can compute $\beta$ in several ways, using different entries of the matrix, e.g.,~ 
$\beta = h^{-1}\left(\bm{\theta}_{Z,11}\right) - h^{-1}\left(\bm{\theta}_{Z,12}\right) = h^{-1}\left(\bm{\theta}_{Z,33}\right) - h^{-1}\left(\bm{\theta}_{Z,34}\right)$.
Therefore, we rely on two ways to estimate the model. The first estimator consists of computing all the values of $\beta$ from the relevant pairs of entries of $h^{-1}\left(\bm{\theta}_Z\right)$ and then averaging out. The second estimator weights each estimated $\beta$  by the size of the corresponding blocks.
Formally, for the first estimator, after estimating the block assignments $\widehat{\bm{\xi}}$, we compute the number of nodes in the block with a particular value of the covariate 
\begin{eqnarray}
n_{1,k}=\sum_{i:\widehat{\bm{\xi}}_i=k} \mathbf{1}_{\lbrace \bm{Z}_i= 1 \rbrace} \text{ \ \ and \ \ } n_{0,k}=\sum_{i:\widehat{\bm{\xi}}_i=k} \mathbf{1}_{\lbrace \bm{Z}_i= 0 \rbrace}
\end{eqnarray}
and we assign each block to a value of the covariate $\bm{Z}_{\theta,k} = 1$ if $n_{1,k}>n_{0,k}$ and $\bm{Z}_{\theta,k} = 0$ otherwise. Let $\bm{\psi}=(\psi_1,\cdots,\psi_{2K} ) \in \lbrace 1, \cdots, K\rbrace^{2K}$ be the vector that assigns each element of the diagonal of $\theta_Z$ to the corresponding unobserved block; let $\widehat{\bm{\psi}}$ be the corresponding estimated assignment. 
We then consider the set of all pairs set $M = \lbrace(k\ell, k\ell^\prime), k, \ell, \ell^\prime \in \lbrace 1, 2, \cdots, 2\widehat{K} \rbrace  \vert \widehat{\psi}_\ell = \widehat{\psi}_{\ell^{\prime}}, \bm{Z}_{\theta,k}=\bm{Z}_{\theta,\ell},  \bm{Z}_{\theta,k}\neq \bm{Z}_{\theta,\ell^{\prime}}\rbrace $. For each element $m = (k\ell, k\ell^\prime)\in M$ we compute the estimate $\widehat{\beta}_m$ as
\begin{eqnarray}
\widehat{\beta}_m = h^{-1}\left(\widehat{\bm{\theta}}_{Z,k\ell }\right) - h^{-1}\left(\widehat{\bm{\theta}}_{Z,k\ell^{\prime}}\right)
\end{eqnarray}
We then average out the values of the $\widehat{\beta}_m$ to obtain the final estimate
\begin{eqnarray}
\widehat{\beta}^{sa} = \frac{1}{\vert M \vert }\sum_{m \in M} \widehat{\beta}_m ,
\label{eq:simplemean_estimator}
\end{eqnarray}
where $\vert M \vert$ is the number of elements in set $M$, that is the number of paired entries in $h^{-1}\left(\widehat{\bm{\theta}}_Z\right)$ from which we can estimate $\beta$. 

The second estimator weights each estimated $\widehat{\beta}_{k\ell,k\ell^\prime}=h^{-1}\left(\bm{\theta}_{Z,k\ell}\right) - h^{-1}\left(\bm{\theta}_{Z,k\ell^\prime}\right)$, by the size of the corresponding blocks used in its estimation. Formally we compute the weight
\begin{eqnarray}
\widehat{\omega}_{k\ell,k\ell^\prime} = \frac{n_{0,k}n_{0,\ell}n_{1,\ell^\prime} + n_{1,k}n_{1,\ell}n_{0,\ell^\prime} }{n_k n_\ell n_{\ell^\prime}}
\end{eqnarray}
where $n_k =\sum_{i=1}^n \mathbf{1}_{\lbrace \widehat{\bm{\xi}}_i=k \rbrace} $.
We then consider the pairs in the set $\Omega = \lbrace (\ell,\ell^\prime), \ell,\ell^\prime \in \lbrace 1,\cdots , 2\widehat{K}\rbrace \vert \widehat{\psi}_\ell = \widehat{\psi}_{\ell^\prime} \rbrace $ and estimate the weighted sum
\begin{eqnarray}
\widehat{\beta}^{wa} = \frac{1}{\widehat{K}\vert \Omega \vert }\sum_{k=1}^{\widehat{K} }\sum_{(\ell,\ell^\prime )\in \Omega } \widehat{\omega}_{k\ell,k\ell^\prime} \widehat{\beta}_{k\ell,k\ell^\prime},
\label{eq:weightedmean_estimator}
\end{eqnarray}

% For example, consider the matrix $\bm{\theta}_Z$ in \eqref{eq:matrix_thetaZ_shape}. We can compute $\beta$ as 
% $\beta = h^{-1}\left(\bm{\theta}_{Z,11}\right) - h^{-1}\left(\bm{\theta}_{Z,12}\right)$, or $\beta = h^{-1}\left(\bm{\theta}_{Z,11}\right) - h^{-1}\left(\bm{\theta}_{Z,12}\right)$, or $\beta = h^{-1}\left(\bm{\theta}_{Z,33}\right) - h^{-1}\left(\bm{\theta}_{Z,34}\right)$, or $\beta = h^{-1}\left(\bm{\theta}_{Z,24}\right) - h^{-1}\left(\bm{\theta}_{Z,23}\right)$, etc. Each of these entries satisfies the limit theorem in Theorem \ref{thm:clt_general_h_block_estim}, and we combine all the estimates using a simple mean (estimator \eqref{eq:simplemean_estimator}) or a weighted mean (estimator \eqref{eq:weightedmean_estimator}). In the latter estimator, the weights $w_m$ are computed according to the size of the blocks under consideration. For example, for the matrix in \eqref{eq:matrix_thetaZ_shape}, the value $\beta = h^{-1}\left(\bm{\theta}_{Z,11}\right) - h^{-1}\left(\bm{\theta}_{Z,12}\right)$ is given weight $w_m = \frac{\kappa_m}{\sum_{m=1}^M \kappa_m}$ where the quantity $\kappa_m= \Pr(\tau=1)\cdot \Pr(\tau=1)\cdot \Pr(Z=0\vert \tau=1)\cdot \Pr(Z=1\vert \tau=1) $. In the empirical exercise, each quantity is substituted with the corresponding estimated empirical counterpart. 

Both estimators work well in practice, and we show some evidence in the examples and simulations.

\section{Simulation results}\label{section:simulations}
\subsection{Comparison with Variational EM}
We compare our spectral methods to a standard algorithm used in the literature, the variational EM algorithm, as implemented in the R package \texttt{blockmodels}. Our methods are implemented in the package \texttt{grdpg}, available on Github at \url{https://github.com/meleangelo/grdpg}. All the replication files for simulations and empirical application are at \url{https://github.com/meleangelo/grdpg_supplement}. We also note that the variational EM algorithm uses parallelization to increase computational efficiency, while our method is implemented without any parallelization and for networks with thousands of nodes.

\subsubsection{Example 1 (No covariates)}
In our first example, we do not include any covariates and we assume $h$ is the identity function, so that the link probabilities are defined by $\bm{P}_{ij}=\bm{X}_i^T \bm{X}_j$. We simulate networks with $n=2000, 5000, 10000,$ and $20000$  nodes, with latent space dimension $d=1$. In Table~\ref{tab:example1_K2} we report the results for $K=2$, with block centers $[p,q]=[0.1,0.7]$, and matrix of probabilities 
\begin{equation}
    \bm{\theta} = \left[\begin{matrix}
     0.01 & 0.07 \\
     0.07 & 0.49
    \end{matrix}\right].
\end{equation}

For simplicity, we assume that blocks are equally likely, that is $(\pi_1,\pi_2 )= (0.5,.0.5)$. %Not equivalent! <-----
To evaluate the performance of the algorithms, we compare clustering accuracy and computational time. The assignment of nodes to the correct block is summarized by the Adjusted Rand Index (ARI) \citep{RandARI1971}, and the computational time is given by the CPU time in seconds.   
Our point estimates are shown in Table~\ref{tab:example1_K2}, and below we report the estimated block probabilities for $n=2000$.
\begin{equation}
    \widehat{\bm{\theta}}_{VEM} = \left[\begin{matrix} 
    0.01002 & 0.07012\\
 0.07012 & 0.49086\\
    \end{matrix}
    \right],  \hspace{1cm}    \widehat{\bm{\theta}}_{GRDPG} = \left[\begin{matrix}
0.00998 & 0.07001 \\
0.07001 & 0.49091    \end{matrix}
    \right].
\end{equation}
The values $\hat{p}, \hat{q}$ shown in in Table~\ref{tab:example1_K2}  are obtained by singular value decomposition of the estimated probability matrix (and rotation).
\begin{table}
    \centering
    \caption{Point Estimates and CPU time for example 1 ($K=2$)}
    \begin{tabular}{l|rcccccrc}
         Estimator & $n$ & $K$  & $p$ & $\hat{p}$ & $q$ & $\hat{q}$ &  CPU Time (s) & ARI \\
              \hline \hline
        GRDPG & 2000 & 2 & 0.1 & 0.09993  & 0.7 & 0.70065 &   1.513  & 1 \\
        VEM   & 2000 & 2 &  0.1 & 0.10008 & 0.7 & 0.70061 & 39.679 & 1 \\
        GRDPG & 5000 & 2 & 0.1 & 0.10004  & 0.7 & 0.69977  & 8.548  & 1 \\
        VEM   & 5000 & 2 &  0.1 & 0.10008 & 0.7 & 0.69975  &  593.203 & 1 \\
        GRDPG & 10000 & 2 & 0.1 & 0.09994  & 0.7 & 0.69988  & 32.169 & 1 \\
        VEM   & 10000 & 2 &  0.1 & 0.09996 & 0.7 & 0.69987  &  4171.218 & 1 \\
        GRDPG & 20000 & 2 & 0.1 & 0.09998  & 0.7 & 0.70005  & 128.633 & 1 \\
       VEM   & 20000 & 2 &  NA &  &  &   &   \\
        GRDPG & 30000 & 2 & 0.1 & 0.09998  & 0.7 & 0.69995  & 386.210 & 1 \\
%        VEM   & 30000 & 2 &  0.1 &  & 0.7 &   &  ... \\
%        GRDPG & 50000 & 2 & 0.1 &   & 0.7 &   &  \\
%        VEM   & 50000 & 2 &  0.1 & 0.10007 & 0.7 & 0.70004  &  760.146 \\
\hline\hline                       
    \end{tabular}
    \label{tab:example1_K2}
\end{table}
We notice that the VEM and GRDPG estimators produce similar point estimates and very precise clustering
of the nodes, as indicated by the ARI. However, our GRDPG estimator converges much faster than the VEM. For networks with $n=10000$ nodes, our method provides estimates in approximately 30 seconds, while the VEM takes more than one hour to converge to the final approximation. When $n=20000$ and $n=30000$, our GRDPG approach converges in about 2 minutes and less than 7 minutes, respectively, while the VEM is impractical.

Here, a crucial choice is the number of dimensions for the spectral embedding. In our simulation we know that the rank of the matrix $\bm{\theta}$ is 1, therefore this is the optimal dimension (see \cite{AthreyaEtAl2018a}). We choose $\hat{d}$ by profile likelihood methods as in \cite{ZhuGhodsi2006}.\footnote{The screeplot, not shown, displays a huge step down in the (absolute) value of the eigenvalues of the adjacency matrix at the largest eigenvalue, which suggests that 1 dimension is sufficient to approximate the structure of the adjacency matrix.}

The clustering of the latent positions in blocks is performed using the MCLUST method implemented in the package \texttt{Mclust} in R \citep{FraleyRaftery1999}.

In Table \ref{tab:example1_K5} we report results from the same model with $K=5$ and latent positions $\bm{\nu}=(0.1,0.3,0.5,0.7,0.9)$. The results are comparable to the previous table, our estimator scales very well with the size of the network, while obtaining the same point estimates of the VEM algorithm. In this example, the difference in scaling for the two estimators is more pronounced. In particular, going from $K=2$ to $K=5$ blocks does not increase the computational burden too much for the GRDPG-based estimator.
\begin{table}
    \centering
    \caption{Point Estimates and CPU time for example 1  ($K = 5$)}
    \begin{tabular}{l|rccccccrc}
                   &     &      &\multicolumn{5}{c}{latent positions/blocks} \\
         Estimator & $n$ & $K$  & 0.1  & 0.3 & 0.5 & 0.7 & 0.9 &   CPU Time (s) & ARI \\
              \hline \hline
        GRDPG & 2000 & 5 & 0.09976 & 0.30122 & 0.49819 & 0.70027 & 0.89987 &   1.543  & 1 \\
        VEM   & 2000 & 5 &  0.09994 & 0.30133 & 0.49825 & 0.70018 & 0.89973 & 257.713 & 1 \\
        GRDPG & 5000 & 5 & 0.10015 & 0.29952 & 0.49994  & 0.69962 & 0.90003  & 7.982 & 1 \\
        VEM   & 5000 & 5 & 0.10021  & 0.29958 & 0.49996 & 0.69958  &0.89999 & 926.330  & 1 \\
        GRDPG & 10000 & 5 & 0.09982 & 0.29975 & 0.49990 & 0.70006 & 0.90006 & 44.659 & 1 \\
        VEM   & 10000 & 5 & 0.09985 & 0.29977 & 0.49990 & 0.70004 & 0.90004 &  8128.253 & 1 \\
        GRDPG & 20000 & 5 & 0.10000 & 0.30001 & 0.50005 & 0.70019 & 0.89999 & 186.073 & 1 \\
        VEM   & 20000 & 5 & NA  &  &  &   &  \\
%        GRDPG & 30000 & 2 & 0.1 & 0.09998  & 0.7 & 0.69995  & 386.21 & 1 \\
%        VEM   & 30000 & 2 &  0.1 &  & 0.7 &   &  ... \\
%        GRDPG & 50000 & 2 & 0.1 &   & 0.7 &   &  \\
%        VEM   & 50000 & 2 &  0.1 & 0.10007 & 0.7 & 0.70004  &  760.146 \\
\hline\hline                        
    \end{tabular}
    \label{tab:example1_K5}
\end{table}

\subsubsection{Example 2 (logit link and binary covariate)}
We consider a model with a binary nodal covariate, $\bm{Z}_i \sim Bernoulli(0.5)$ and link probabilities
%\begin{equation}
%    \bm{P}_{ij} = \frac{\exp\left[\bm{X}_i^T \bm{X}_j + \beta \mathbf{1}_{\lbrace \bm{Z}_i=\bm{Z}_j \rbrace} %\right]}{1+\exp\left[\bm{X}_i^T \bm{X}_j + \beta \mathbf{1}_{\lbrace \bm{Z}_i=\bm{Z}_j \rbrace} \right]}
%\end{equation}
\begin{equation}
    \log\left(\frac{\bm{P}_{ij}}{1-\bm{P}_{ij}}\right) =\bm{X}_i^T \bm{X}_j + \beta \mathbf{1}_{\lbrace \bm{Z}_i=\bm{Z}_j \rbrace}.
\end{equation}

In this example we use $[p,q]=[-1.5, 1]$ and $\beta=1.5$, thus the matrix $\bm{\theta}$ is
\begin{equation}
    \text{logit}(\bm{\theta}) = \left[\begin{matrix}
     2.25 & -1.5 \\
     -1.5 & 1
    \end{matrix}\right]
\end{equation}
while the full matrix $\bm{\theta}_Z$ that includes the effect of covariates is 
\begin{equation}
    logit(\bm{\theta}_Z )= \left[\begin{matrix}
     3.75 & 2.25 & 0.00 & -1.50 \\
     2.25 & 3.75 &-1.50 & 0.00 \\
     0.00 &-1.50 & 2.50 & 1.00 \\
     -1.50&  0.00&  1.00&  2.50 
    \end{matrix}\right].
\end{equation}
%to-do: right-align the matrix columns
We choose $\hat{d}$ by profile likelihood \citep{ZhuGhodsi2006}. In Figure~\ref{fig:est_latentpos_example2} we show the screeplots. In the upper-left, we display the screeplot for the adjacency matrix, which suggests using $\hat{d}=4$ as an estimate of the dimension for $\widehat{\bm{Y}}$. We note that the fourth largest eigenvalue is negative, and the GRDPG model takes this into account. In the center-left plot, we show the screeplot of the adjacency matrix after \emph{netting out the effect of the covariates}, which suggests the estimate $\hat{d}=1$ for determining the dimension of the unobserved latent positions $\bm{X}$.
\begin{figure}
    \centering
    \caption{Screeplots (upper left and center left), Estimated latent positions $\widehat{\bm{Y}}$ (right, only 2 dimensions out of 4 per plot) and estimated latent positions $\hat{\bm{X}}$, that is $\hat{p}$ and $\hat{q}$ in Example 2 (bottom left, up to orthogonal transformation) for $n=2000$.}
    \label{fig:est_latentpos_example2}
    \includegraphics[scale=0.6]{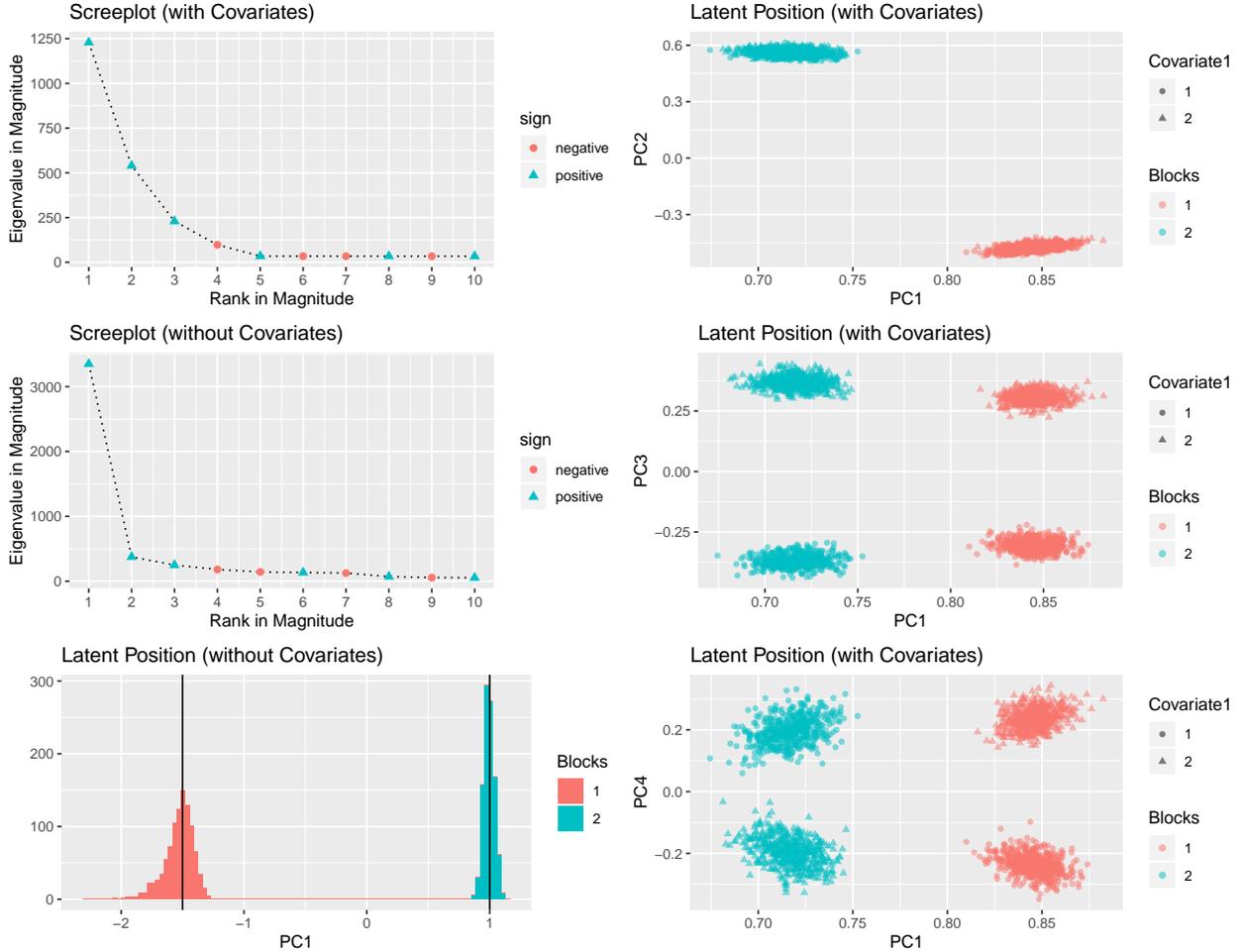}
\end{figure}

The point estimates for $\bm{\theta}_Z$ (up to a permutation of the block labels) when the network has $n=2000$ nodes are respectively
\begin{equation}
    \text{logit}(\widehat{\bm{\theta}}_{Z,VEM}) =  \left[\begin{matrix} 
       3.7443 & 2.2410 & -0.00367 & -1.5069\\
        2.2410 & 3.7443 &  -1.5069 & -0.0036 \\
    -0.00367 & -1.5069 & 2.5013 & 0.9980  \\
     -1.5069 & -0.0036 &  0.9980 & 2.5013
    
\end{matrix}
    \right],  
\end{equation}
\begin{equation}
\widehat{\bm{B}}_Z = \text{logit}( \widehat{\bm{\theta}}_{Z,GRDPG}) = \left[\begin{matrix} 
3.7762 & 2.2336 & -0.0062 & -1.5095 \\
 2.2336 &  3.7821 & -1.5007 &  -0.0042\\
-0.0062 & -1.5007 &  2.4979 & 0.9985\\
-1.5095 & -0.0042 & 0.9985 & 2.5045\\
    \end{matrix}
    \right].
    \label{eq:grdpg_estimate}
\end{equation}

%to-do-: right-align the above matrix columns
According to our procedure, there are several ways to obtain an estimate of $\beta$. From matrix~(\ref{eq:grdpg_estimate}), we group rows 1 and 2 in one block, and rows 3 and 4 in another block by clustering the diagonal entries. We can get an estimate of $\beta$ as $\widehat{\bm{B}}_{Z,11}-\widehat{\bm{B}}_{Z,12}$ or $\widehat{\bm{B}}_{Z,22}-\widehat{\bm{B}}_{Z,21}$ or $\widehat{\bm{B}}_{Z,33}-\widehat{\bm{B}}_{Z,34}$, etc. We know by our theorem that each of these estimators is asymptotically normal. Instead of choosing which entries to use to estimate $\beta$, we pool all possible estimates, weighting them by the proportion of observations that are assigned to each block. For example, the estimate $\widehat{\bm{B}}_{Z,11}-\widehat{\bm{B}}_{Z,12}$ is weighted by the proportion of links in the network that are used to estimate it.  
% Q: do you intend to use unbolded B's in the above paragraph?? AM: fixed it.

The point estimates for $\beta$ reported in Table~\ref{tab:example2_K2} are $\widehat{\beta}_{GRDPG}=1.51201$ and $\widehat{\beta}_{VEM} = 1.50335$.
The estimated latent positions are $\hat{p}=-1.49712$ and $\hat{q}=1.00067$ for the VEM;
and $\hat{p}=-1.49454$ and $\hat{q}=0.99926$ for the GRDPG estimator. However, it takes almost 2 hours to obtain the VEM results, while it only takes 7 seconds with our estimator.
The left plots in Figure~\ref{fig:est_latentpos_example2} show the latent positions $\widehat{\bm{Y}}$ of the GRDPG (including the effect of covariates) estimated by ASE. We plot the first coordinate against each of the other three. In the second and third plot from the top, we can notice that the latent positions nicely cluster into 4 blocks, as our theory predicts. In the bottom-left plot in Figure~\ref{fig:est_latentpos_example2} we display the estimated latent positions $\widehat{\bm{X}}$, estimated by netting out the effect of the covariates. 
The figure shows how the estimated latent positions $\widehat{\bm{X}}$ cluster around the true values $p$ and $q$ (the black vertical lines).

As explained above, a central advantage of our approach is computational speed. Indeed, in Table~\ref{tab:example2_K2} we show that when we increase the size of the network to $n=5000$, the estimated parameters are essentially the same for VEM and GRPDG. However, the GRDPG estimator take less than 30 seconds to converge; the VEM estimate takes almost 10 hours.

\begin{table}
    \caption{Point Estimates and CPU time for example 2 ($K=2$)}
    \centering
    \begin{tabular}{l|rcccccccrc}
       Estimator & $n$ & $K$  & $p$ & $\hat{p}$ & $q$ & $\hat{q}$ & $\beta$ & $\hat{\beta}$ & CPU Time (s) & ARI \\
              \hline \hline
        GRDPG & 2000 & 2  & -1.5 & -1.49744 & 1 & 1.00077 & \multicolumn{2}{c}{no covariates} & 4.672 & 1\\
        VEM   & 2000 & 2 &  -1.5 & -1.49712 & 1 & 1.00067 & \multicolumn{2}{c}{no covariates} & 48.619  & 1\\
        GRDPG & 2000 & 2 &  -1.5 & -1.49454 & 1 & 0.99926 & 1.5 & 1.51201 & 7.557 & 1\\
        VEM & 2000 & 2 &  -1.5 & -1.49712  & 1 & 1.00067  & 1.5 & 1.50335  & 6903.673 & 1\\
        GRDPG & 5000 & 2  & -1.5 & -1.50029 & 1 & 1.00030 & \multicolumn{2}{c}{no covariates} & 17.539 & 1\\
        VEM   & 5000 & 2 &  -1.5 & -1.50019 & 1 & 1.00024 & \multicolumn{2}{c}{no covariates} & 537.831  & 1\\
        GRDPG & 5000 & 2 &  -1.5 & -1.49995 & 1 & 1.00064 & 1.5 & 1.49981  & 27.312  & 1 \\
        VEM & 5000 & 2 &  -1.5 & -1.50019  & 1 & 1.00024  & 1.5 & 1.49955  &  35331.012 & 1 \\
        GRDPG & 10000 & 2 &  -1.5 & -1.49989  & 1 & 1.00029 & \multicolumn{2}{c}{no covariates} & 55.428  & 1 \\
        GRDPG & 10000 & 2 &  -1.5 & 1.49992  & 1 & 0.99992 & 1.5 & 1.50190  & 91.067 & 1 \\
%       GRDPG & 20000 & 2 &  -1.5 &   & 1 &   & 1.5 &   &  & \\

        \hline\hline
    \end{tabular}
    \label{tab:example2_K2}
\end{table}
%%%%
%%%
% to-do: right-align the CPU time columns above DONE
\subsubsection{Example 3 (logit link, binary covariate and $d=2$)}
In Table~\ref{tab:example4_K2_d2} we show estimates for models with latent positions $\bm{\nu}_1=(-1.5,-1.0)$ and $\bm{\nu}_2=(1.0,0.5)$. For the simulations in the first 3 rows we set $\beta=1.5$. It is quite remarkable that the computational time does not increase much, with respect to the case of $d=1$. 

The second group of three rows shows the results of simulations with smaller $\beta=0.5$. This makes the estimation of the covariate effect more challenging. Indeed when $n=2000$ the classification in blocks and the point estimate are imprecise, as indicated by the adjusted Rand index (ARI). When we increase the network size to $n=5000$ and $n=10000$, the accuracy of the point estimates improves significantly. This example shows that our approach is extremely useful in very large networks, where VEM may become computationally impractical.
      
\begin{table}
    \caption{Point estimates and time for Example 3.}
    \centering
    \begin{tabular}{l|rcccrc}
       Estimator & $n$ & $K$  & $\beta$ & $\hat{\beta}$  & Time (s) & ARI \\
              \hline \hline
GRDPG & 2000 & 2 &  1.5 & 1.51760  & 7.335 & 1\\
GRDPG & 5000 & 2 &  1.5 & 1.49946  & 28.153 & 1  \\
GRDPG & 10000 & 2 & 1.5 & 1.50257  &  99.128 & 1    \\
\hline
GRDPG & 2000 & 2 &  0.5 & 0.64145 & 7.815 & 0.998\\
GRDPG & 5000 & 2 &  0.5 &  0.56222 & 26.763 & 1  \\
GRDPG & 10000 & 2 & 0.5 & 0.51617  &  88.562 & 1     \\

        \hline\hline
    \end{tabular}
    \label{tab:example4_K2_d2}
\end{table}

%to-do: align the Time(s) column in Table 4

In summary, our simple examples and simulations show that our GRDPG-based estimator is quite fast and scales well to large networks. These good computational properties are obtained without sacrificing the accuracy of the estimates, as we prove that the algorithm produces the same point estimates as the variational EM in all the examples. 

\subsection{Monte Carlo experiments}
Given the computational times shown in the previous section, we run a simple Monte Carlo to understand the bias of the spectral estimator in networks of moderate size.\footnote{We do not compare the spectral estimator to the variational EM estimator, because the latter is too slow for a Monte Carlo with 1000 repetitions, even after parallelizing the execution. }
We estimate a model with two binary observed covariates, 
\begin{equation}
    \bm{Z}_i \sim Bernoulli(b_z) \ \ \ and \ \ \  \bm{W}_i \sim Bernoulli(b_w) 
\end{equation}
and vary the probabilities $b_z$ and $b_w$, as well as the correlation among the two variables. We estimate the following model in each Monte Carlo design
\begin{equation}
    \log\left(\frac{\bm{P}_{ij}}{1-\bm{P}_{ij}}\right) =\bm{X}_i^T \bm{X}_j + \beta_1 \mathbf{1}_{\lbrace \bm{Z}_i=\bm{Z}_j \rbrace} + \beta_2 \mathbf{1}_{\lbrace \bm{W}_i=\bm{W}_j \rbrace}.
    \label{eq:mc_model}
\end{equation}
The Monte Carlo design considers networks of sizes $n=2000, 5000, 10000$ and we set the number of blocks to $K=2$. For all the simulations the parameter value that generates the data is $\bm{\beta} = (0.5,0.75)$ and the centers of the blocks are $\bm{\nu}=(-1.5,1.0)$. We summarize the designs in Table \ref{tab:montecarlodesign}. For each design and network size, we simulate 1000 networks and estimate the parameters of model \eqref{eq:mc_model} with the simple mean estimator and the weighted mean estimator.
\begin{table}[!h]
    \centering
    \caption{Monte Carlo design}
    \begin{tabular}{l|ccccc}
    \hline\hline
    Design & $\pi_1$ & $b_z$ & $b_w$ & correlation \\
    \hline\hline 
      1   & $0.5$  &  $0.5$ & $0.5$ &  independent \\
      2   & $0.5$  &  $0.5$ & $0.5$ &  $0.3$ \\
      3   & $0.3$  &  $0.5$ & $0.5$ &  independent \\
      4   & $0.3$  &  $0.4$ & $0.6$ &  independent \\
      5   & $0.3$  &  $0.4$ & $0.6$ &  $0.3$ \\
      \hline\hline
    \end{tabular}
    
    \label{tab:montecarlodesign}
    \flushleft \scriptsize In the Monte Carlo simulations we set the number of blocks $K=2$ with centers $\bm{\nu}=(-1.5,1)$; the true parameter vector for the covariates is $\bm{\beta}=(0.5,0.75)$ and we repeat the simulations for $n=2000,5000,10000$. We vary the probability of belonging to the first block $\pi_1$, the probability of the Bernoulli covariates $b_z$ and $b_w$, respectively for $\bm{Z}_i$ and $\bm{W}_i$; and we allow the covariates to have correlations in some of the simulations. Each Monte Carlo is the outcome of 1000 simulations.
\end{table}

The first design corresponds to the examples in the previous section. Blocks are assumed to have same size and observables are independent Bernoulli variables with equal probability. The second design introduces correlation among the observables, as this is a realistic feature of many datasets. Designs 3 and 4 are intended to test the effect of unbalanced block size and unbalanced covariates, respectively, while maintaining the assumption of independence among observables. The final design assumes that we have unbalanced blocks, unbalanced covariates and correlated observables, allowing us to understand how the estimator behaves in a realistic setting. We expect that unbalancedness and correlation will increase bias, but this problem is less severe for larger networks, as our theory shows that the bias becomes vanishingly small as we increase the size of the network.

%Design 1: $\pi_1 = 0.5$; $\bm{Z}_i\overset{iid}{\sim} Bernoulli(0.5)$ and $\bm{W}_i\overset{iid}{\sim} %Bernoulli(0.5)$ independent \\
%Design 2: $\pi_1 = 0.5$; $\bm{Z}_i\overset{iid}{\sim} Bernoulli(0.5)$ and $\bm{W}_i\overset{iid}{\sim} %Bernoulli(0.5)$  with correlation $0.3$\\
%Design 3: $\pi_1 = 0.3$; $\bm{Z}_i \overset{iid}{\sim} Bernoulli(0.5)$ and $\bm{W}_i \overset{iid}{\sim} %Bernoulli(0.5)$ independent \\
%Design 4: $\pi_1 = 0.3$; $\bm{Z}_i\overset{iid}{\sim} Bernoulli(0.4)$ and $\bm{W}_i\overset{iid}{\sim} %Bernoulli(0.4)$ independent \\
%Design 5: $\pi_1 = 0.3$; $\bm{Z}_i\overset{iid}{\sim} Bernoulli(0.4)$ and $\bm{W}_i\overset{iid}{\sim} %Bernoulli(0.4)$ with correlation $0.3$\\

\begin{table}[]
    \centering
   \caption{Monte Carlo results, simple mean estimator \eqref{eq:simplemean_estimator}}
\begin{tabular}{lccccc}
  \hline
  $n$  & $\vert \hat{\beta}_1 - \beta_1\vert$ & mcse & $\vert \hat{\beta}_2 - \beta_2\vert$ & mcse & time \\ 
  \hline
&  \multicolumn{5}{c}{\underline{\textbf{Design 1}}}\medskip\\
 2000 & 0.0576 & 0.0001696 & 0.0350 & 0.0001366 & 12.35 \\ 
 5000 & 0.0134 & 0.0000311 & 0.0082 & 0.0000281 & 31.68 \\ 
 10000 & 0.0016 & 0.0000093 & 0.0009 & 0.0000092 & 137.97 \\    
 \hline
&  \multicolumn{5}{c}{\underline{\textbf{Design 2}}}\medskip\\
 2000 & 0.1118 & 0.0008083 & 0.0843 & 0.0007631 & 9.14 \\ 
 5000 & 0.0240 & 0.0000475 & 0.0168 & 0.0000413 & 32.75 \\ 
 10000 & 0.0040 & 0.0000109 & 0.0026 & 0.0000104 & 140.20 \\  
 \hline
&  \multicolumn{5}{c}{\underline{\textbf{Design 3}}}\medskip\\
 2000 & 0.1683 & 0.0008531 & 0.1039 & 0.0006886 & 12.71 \\ 
 5000 & 0.0201 & 0.0000424 & 0.0089 & 0.0000379 & 52.91 \\ 
 10000 & 0.0015 & 0.0000140 & 0.0011 & 0.0000124 & 142.54 \\     
 \hline
&  \multicolumn{5}{c}{\underline{\textbf{Design 4}}}\medskip\\
 2000 & 0.2144 & 0.0016978 & 0.1697 & 0.0014046 & 23.83 \\ 
 5000 & 0.0399 & 0.0001094 & 0.0228 & 0.0000921 & 51.71 \\ 
 10000 & 0.0046 & 0.0000162 & 0.0016 & 0.0000143 & 138.92 \\
   \hline
&  \multicolumn{5}{c}{\underline{\textbf{Design 5}}}\medskip\\
 2000 & 0.1607 & 0.0012640 & 0.1714 & 0.0010800 & 14.98 \\ 
 5000 & 0.0738 & 0.0007459 & 0.0949 & 0.0009128 & 54.29 \\ 
 10000 & 0.0498 & 0.0002895 & 0.0459 & 0.0003064 & 138.06 \\ \hline\hline
\end{tabular}
     \label{tab:MonteCarlo_results_simplemean}
         \flushleft \scriptsize Monte Carlo simulations for the weighted mean estimator. Each Monte Carlo consists of 1000 simulated networks, with $K=2$ unobserved blocks and $\bm{\beta}=(0.5,0.75)$. The centers of the blocks are $\bm{\nu}=(-1.5,1.0)$. Designs are summarized in Table \ref{tab:montecarlodesign}.
\end{table}

The results of our simulations are reported in Tables \ref{tab:MonteCarlo_results_simplemean} (simple mean estimator) and \ref{tab:MonteCarlo_results_weigthedmean} (weighted mean estimator). 
For each design, we report the absolute difference between the estimated parameter and the true value, the Monte Carlo standard error and the average time for estimation.\footnote{The time of estimation reported in the table includes the following steps: 1) compute the ASE from the adjacency matrix; 2) compute the matrix of latent positions; 3) cluster latent positions to recover blocks; 4) compute matrix $\widehat{\bm{B}}_Z$; 5) cluster diagonal entries of matrix $\widehat{\bm{B}}_Z$ to recover the unobservable block structure; 6) estimate $\bm{\beta}$ using the information on the block structure and the entries of matrix $\widehat{\bm{B}}_Z$; 7) compute simple mean and weighted mean of the estimated $\widehat{\bm{\beta}} $ according to \eqref{eq:simplemean_estimator} and \eqref{eq:weightedmean_estimator}. The simulation takes a little longer because we need to generate the data and the adjacency matrices for the Monte Carlo. Code is available in Github.} When using the simple mean estimator, the estimates are precise, while displaying a small bias. The most challenging design for our estimator is Design 5, where we impose different unobserved block size, different Bernoulli probabilities for the observables and correlation among observed characteristics. As expected, these features increases the bias in our estimates; however, this problem becomes less severe with larger networks.  

Our weighted mean estimator has similar behavior. 

\begin{table}[]
    \centering
   \caption{Monte Carlo results, weighted mean estimator \eqref{eq:weightedmean_estimator}}
\begin{tabular}{lccccc}
  \hline
  $n$  & $\vert \hat{\beta}_1^w - \beta_1\vert$ & mcse & $\vert \hat{\beta}_2^w - \beta_2\vert$ & mcse & time \\ 
  \hline
&  \multicolumn{5}{c}{\underline{\textbf{Design 1}}}\medskip\\
 2000 & 0.0650 & 0.0022329 & 0.0466 & 0.0031385 & 12.35 \\ 
 5000 & 0.0134 & 0.0000311 & 0.0082 & 0.0000281 & 31.68 \\ 
 10000 & 0.0016 & 0.0000093 & 0.0009 & 0.0000092 & 137.97 \\ 
   \hline
&  \multicolumn{5}{c}{\underline{\textbf{Design 2}}$^\ast$}\medskip\\
 2000 & 0.2272 & 0.0084645 & 0.1942 & 0.0082881 & 9.14 \\ 
 5000 & 0.0239 & 0.0000473 & 0.0168 & 0.0000411 & 32.75 \\ 
 10000 & 0.0040 & 0.0000109 & 0.0026 & 0.0000104 & 140.20 \\  
  \hline
&  \multicolumn{5}{c}{\underline{\textbf{Design 3}}}\medskip\\
 2000 & 0.0612 & 0.0002496 & 0.0409 & 0.0002106 & 12.71 \\ 
 5000 & 0.0095 & 0.0000230 & 0.0054 & 0.0000224 & 52.91 \\ 
 10000 & 0.0013 & 0.0000082 & 0.0007 & 0.0000075 & 142.54 \\  
 \hline
&  \multicolumn{5}{c}{\underline{\textbf{Design 4}}}\medskip\\
 2000 & 0.2757 & 0.0068607 & 0.2344 & 0.0061783 & 23.83 \\ 
 5000 & 0.0155 & 0.0002438 & 0.0098 & 0.0002319 & 51.71 \\ 
 10000 & 0.0025 & 0.0000085 & 0.0014 & 0.0000080 & 138.92 \\  
   \hline
&  \multicolumn{5}{c}{\underline{\textbf{Design 5}$^\ast$}}\medskip\\
 2000 & 1.0342 & 0.0058407 & 0.9197 & 0.0052219 & 14.98 \\ 
 5000 & 0.9111 & 0.0040518 & 0.8047 & 0.0037239 & 54.29 \\ 
 10000 & 0.1574 & 0.0032735 & 0.1428 & 0.0030875 & 138.06 \\ 
\hline\hline
\end{tabular}
     \label{tab:MonteCarlo_results_weigthedmean}
     \flushleft \scriptsize Monte Carlo simulations for the weighted mean estimator. Each Monte Carlo consists of 1000 simulated networks, with $K=2$ unobserved blocks and $\bm{\beta}=(0.5,0.75)$. The centers of the blocks are $\bm{\nu}=(-1.5,1.0)$. Designs are summarized in Table \ref{tab:montecarlodesign}.\\
     $^\ast$ For Design 5, in few cases the estimation did not converge, so we eliminated those simulations from the computation of the mean.
\end{table}

\subsection{Practical estimation of standard errors}
In the empirical application we estimate the standard errors for the observed covariates effects according to the formula in Theorem \ref{thm:clt_general_h_block_estim}, using a plug-in estimate. To understand the behavior of this plug-in estimator, we ran several Monte Carlo experiments. In Table \ref{tab:mc_variance} we report results using the model with independent covariates and balanced blocks (Design 1 in Table \ref{tab:montecarlodesign}).

In columns 1 and 2 we show the true standard error computed using the formula in Theorem \ref{thm:clt_general_h_block_estim} for the simple mean estimator; in columns 3 and 4, the standard error is computed using the weighted estimator with the true proportions of each block; in columns 5 and 6 we have our weighted estimator, using the estimated parameters and block proportions. Our plug-in estimator is very conservative. For a network with $5000$ nodes, the estimated standard error is extremely high, when compared to the true value. Even when the network size is increased to $10000$ nodes, the plug-in estimate for the standard error is quite large.

We conclude that using a plug-in estimator usually overestimates the standard errors. Our estimates are thus very conservative.

\begin{table}[!t]
    \centering
     \caption{Monte Carlo results for plug-in estimator of the standard error}
    \begin{tabular}{lcccccc}
    \hline\hline
    $n$ & $se(\beta_1)$ & $se(\beta_2)$ & $se(\beta_1^w)$ & $se(\beta_2^w)$ & $se(\hat{\beta}_1^w)$ & $se(\hat{\beta}_2^w)$ \\
    \hline
      5000 & 0.0574 & 0.0532 & 0.0208 & 0.0194 & 1.5819 & 1.5183  \\
      10000 & 0.0286 & 0.0265 & 0.0104 & 0.0097 & 0.8443 & 0.8072\\ 
      \hline \hline
    \end{tabular}
    \label{tab:mc_variance}
    \flushleft \scriptsize In the Monte Carlo simulations we set the number of blocks to $K=2$. We run 100 simulations, with $n=5000, 10000$. The true parameters are $\beta = (0.5,0.75)$ and the observable covariates are binary and independent. Both blocks and covariates are balanced, so each block has probability $0.5$ and the covariates are independent  Bernoulli variables with success probability $0.5$. In columns 1 and 2 we show the standard error computed using the formula in Theorem \ref{thm:clt_general_h_block_estim}; in columns 3 and 4, the standard error is computed using the weighted estimator with the true proportions of each block; in columns 5 and 6 we have our weighted estimator. 
\end{table}

\section{Application to Facebook 100 dataset}
We apply our method to study a network of Facebook friendships, using the Facebook~100 dataset from \cite{TraudEtAl2012}.\footnote{The entire dataset is available at \texttt{https://archive.org/details/oxford-2005-facebook-matrix}.} This network was extracted from the Facebook platform in September 2005, providing a snapshot of the friendship among students, faculty, staff and alumni at 100 U.S. universities.

We perform an analysis similar to \cite{AtchadeEtAl2019}, using the Harvard University network data. The dataset consists of 15126 nodes and 7 nodal covariates: role, gender, major, minor, dorm, year, and high school. These are all discrete variables. We focus on dorm, gender, and role in the analysis. We make each of these variables binary. So rather than specify specific dorm information, our control variable indicates whether the student lives on or off-campus. The role is binarized to indicate whether the node is a student or not.\footnote{Roles include students, faculty, staff, alumni, etc. We focus on students because they are the ones mostly using the platform in 2005.} 

\begin{table}[h]
    \centering
       \caption{Descriptive statistics of Harvard University's Facebook network.}
    \begin{tabular}{l|cc}
    \hline\hline
         & Original    & Sample for\\
         &          & estimation\\
         \hline \hline
      $n$   & 15126 &  13003\\
      avg. degree   & 109.03 & 105.02\\
      clust. coeff.  & 0.135 &  0.137\\
      prop. female &  0.466 &  0.539\\
      prop. students & 0.508 &  0.520\\
      prop. off-campus & 0.229 &  0.153\\
      \hline \hline
    \end{tabular}
    \label{tab:Harvard_network_stats}
    \flushleft \scriptsize The first column is the original network data. The second column is the largest connected component of the network including all nodes with non-missing gender information and graduating in any year.
\end{table}

The characteristics of the network are shown in Table~\ref{tab:Harvard_network_stats}. In the first column we report the descriptive statistics for the original data, containing $n=15126$ nodes, with an average degree of $109.03$, 
% [from Lingxin] add 
an average clustering coefficient of $0.135$,
with $46.6\%$ females, $50.8\%$ students and $22.9\%$ of people living off-campus. The average degree and clustering coefficient suggest that this is a moderately sparse network.  

The second column contains the descriptive statistics for our processed sample. 
% [from Lingxin] Add interpretation such as this mean degree and mean clustering coefficient suggests a moderately sparse network. AM: done
We keep all nodes with non-missing gender information; once we delete all missing gender information, there are no missing values for the other covariates;  
% [from Lingxin] there seem to be two sample selections:(1) the largest connected component, and (2) non-missing gender. It is not clearly stated that whether you deleted missing gender nodes before constructing the original or the other round. How about missing role and location (if no missing, state it). More essentially, I thought Carey said to keep all connected nodes in ASE regardless of missing nodal covariates since we obtain coefficients in a second stage. 
% AM: I was worried that keeping all nodes for estimating blocks and then cutting sample to estimate \beta would make the referees puzzled, as we usually use the same number of observations in all tables... 
we then compute the largest connected component of the resulting network.\footnote{This is a standard procedure in the literature on SBMs \citep{AtchadeEtAl2019, AthreyaEtAl2018a, Abbe2018}.} The final network contains a higher proportion of females ($53.9\%$) and students ($52\%$); and a smaller fraction of people living off-campus ($15.3\%$) than the original data. The average degree is slightly smaller ($105.02$), because we have deleted some nodes and edges; however, the clustering coefficients are of similar magnitude, $0.135$ and $0.137$, respectively.\footnote{Before we proceed to estimation, we regularize the adjacency matrix using the standard method proposed in \citep{LevinaRegularization2017}. This regularization step avoids numerical issues with the spectral decomposition arising from significant node degree heterogeneity.} 

We estimate the following model with three covariates
\begin{eqnarray}
    \log\left(\frac{\bm{P}_{ij}}{1-\bm{P}_{ij}}\right) &=& \bm{X}_i^T \bm{X}_j + \beta_1 \mathbf{1}_{\lbrace female_i=female_j \rbrace} \notag\\
    &+& \beta_2 \mathbf{1}_{\lbrace student_i=student_j \rbrace}+ \beta_3 \mathbf{1}_{\lbrace off-campus_i=off-campus_j \rbrace}.
    \label{eq:fb_model}
\end{eqnarray}
where $female_i=1$ if the node is female, $student_i=1$ if the node is a student and $off-campus_i=1$ if the node lives off-campus. 
We first estimate models with one binary covariate, using each of our control variables individually. Next we estimate the full model \eqref{eq:fb_model} with three controls. For the Adjacency Spectral Embedding we choose the dimension of the latent space $\hat{d}=2$, using the profile likelihood method in \cite{ZhuGhodsi2006}, as in our simulations.\footnote{Multiple methods exist for selecting the embedding dimension in practice, and this remains a topic of current research. In the context of networks, choosing a dimension smaller than the true $d$ will introduce bias in the estimated latent positions; on the other hand, using an embedding dimension larger than the true $d$ will increase the variance of the estimated latent positions. In this trade-off we prefer to err on the side of overestimating $d$. Specifically, we choose the value one plus the location of the first elbow in the screeplot.} 

The clustering of the estimated latent positions is performed with a Gaussian mixture model, using the \texttt{MCLUST} implementation of \cite{FraleyRaftery1999} in \texttt{R}. We obtain latent positions estimates and $\widehat{\tilde{K}}=32$ blocks from the adjacency matrix. We then obtain the estimated matrix $\hat{\bm{B}}_Z$, and cluster its diagonal entries to recover the (unobserved) blocks $\hat{K}$ and estimate the vector of parameters $\bm{\beta}$. 
\begin{table}[ht]
\centering
  \caption{Estimation results for Harvard University network data (Facebook~100)}
\begin{tabular}{l|cccc}
  \hline\hline
\textbf{Variable} & \textbf{1} & \textbf{2} & \textbf{3} & \textbf{4} \\ 
  \hline
female & 0.6614 &  &  & 0.5766 \\ 
   & (0.0240) &  &  & (0.0061) \\ 
  student &  & 0.6883 &  & 0.6463 \\ 
   &  & (0.0360) &  & (0.0144) \\ 
  off-campus &  &  & 0.3269 & 0.3315 \\ 
   &  &  & (0.0113) & (0.0020) \\ 
   \hline \hline
  $n$ & 13003 & 13003 & 13003 & 13003 \\ 
  $\hat{K}$ & 16 & 16 & 16 & 4 \\ 
  $\hat{d}$ & 2 & 2 & 2 & 2 \\ 
   \hline\hline
\end{tabular}
\label{tab:estimation_Harvard}
    \flushleft \scriptsize Parameter estimates for the effect of observable covariates using Harvard University network data from the Facebook-100 dataset. The point estimate is obtained with the weighted mean estimator. The number in parenthesis is the naive standard error estimate, using a plug-in estimator and the formula for variance from Theorem  \ref{thm:clt_general_h_block_estim}. All estimates are obtained using the first elbow of the screeplot \citep{ZhuGhodsi2006}.
\end{table}

The estimated parameters are shown in Table~\ref{tab:estimation_Harvard}. In the first three columns we report estimates for models with a single binary covariate. Each coefficient is precisely estimated, according to our naive plug-in standard error estimator; the estimated effects are all positive, which we interpret as evidence of homophily, a usual feature of many social networks. The point estimates are very similar when we estimate the full model \eqref{eq:fb_model} (Column 4). These results are also consistent with the analysis in \cite{TraudEtAl2012} and \cite{AtchadeEtAl2019}. %JC: any new or notable insights/improvements?
Our method estimates $\widehat{K}=4$ unobserved blocks. If we choose to include only one covariate, the number of blocks estimated is $\widehat{K}=16$ (Columns 1--3). %This specific result reveals how information on observed nodal characteristics helps clustering and community detection in large networks. %JC: unclear conclusion

The results indicate that there are additional unobserved characteristics that affect the network formation among Facebook users. In particular, the four blocks may capture shared interests, common preferences, similar schedules and additional information that is unobservable to the researcher.  

% [from Lingxin] We should interpret the 4 unobserved blocks briefly, which is the spirit of SBM. What do these latent communities, the clustered latent positions in 2-dimensional space, mean in this FB network? For example, do the latent communities capture additional homophily or heterophily on what unobserved characteristics (common interest in matters floating during 2005 in Harvard FB)?

In summary, our method is computationally faster than the traditional VEM, performs well in Monte Carlo simulations with different designs and data generating processes, and provides useful insights in real-data, empirical applications.

\section{Conclusion}
We have developed a spectral estimator for stochastic blockmodels with nodal covariates in large networks. The main theoretical contribution is an asymptotic normality result for the spectral estimator of the covariates' effect  on the probability of linking. Our work leverages the relationship between generalized random dot product graphs and stochastic blockmodels, extending existing frameworks to include observed covariates and constructing an estimator that is fast and scalable for large networks. Our theoretical results also apply to moderately sparse graphs, which is important in a host of applications in economics and more generally in social sciences, public health, and computer science, where network data are often viewed as being sparse.

We have provided examples demonstrating that our method delivers the same accuracy as the variational EM algorithm, while converging much faster. Our Monte Carlo simulations and the empirical application show that this method works best in very large networks, when the variational EM becomes impractical.  

We consider the present work a first step in the study of this class of models and the foundation for inference for SBMs and other latent position models for large networks with nodal covariates. While we have focused on binary and discrete covariates in this work, extensions to continuous covariates are currently being pursued via recently developed Latent Structure Models \citep{AthreyaLSM2018}. In future work, similar ideas can also be applied to directed networks and bipartite networks \citep{AKM1999, BonhommeLamadonManresa2018}, significantly expanding the realm of GRDPG applications in economics and social sciences.

\bibliographystyle{jmr}
\bibliography{biblioRDPG}

\appendix

\numberwithin{equation}{section}
\numberwithin{theorem}{section}
\numberwithin{lemma}{section}
\section{Proofs}
We first provide the general proof strategy for the simple case in which the block assignment function $\bm{\tau}$ is known. The next two theorems provide a foundation and roadmap for the proof for the more general case. 

\subsection{Blocks known}
The simplest case is when the latent block assignments are known, the value of $d$ and $K$ are known, and $h$ is the identity function.
Let $\bm{\nu}=(\bm{\nu}_1, \dots, \bm{\nu}_{K}) $ be the centers of the $K$ blocks $\bm{X}$, such that $i$ and $j$ belong to \emph{unobserved} block $k$ when $\bm{X}_i=\bm{X}_j=\bm{\nu}_k$. Let $\bm{\tau}$ be the function to assigns nodes to unobserved blocks, that is $\tau_i=k$ if $\bm{X}_i=\bm{\nu}_k$.
\emph{For this subsection we will assume that $\bm{\tau}$ is known}. Our model is
\begin{equation}
\bm{A}_{ij}\vert \bm{X}_i, \bm{X}_j, \bm{Z}_i, \bm{Z}_j,\beta \overset{ind}{\sim} Bernoulli \left(\bm{X}_i^T  \bm{X}_j + \beta \mathbf{1}_{\lbrace \bm{Z}_i = \bm{Z}_j \rbrace} \right)
\label{eq:model_known_tau}
\end{equation}
with $\bm{Z}_i \overset{ind}{\sim} Bernoulli(b_{\tau_i})$. Therefore, model~(\ref{eq:model_known_tau}) is a $\widetilde{K}=2K$ SBM, because we have $K$ unobserved blocks, each split in 2 by the observed covariates.
The probabilities of belonging to a block $k$ for this $\widetilde{K}$-block SBM are  $\bm{\eta} = (\eta_1, \dots, \eta_{\widetilde{K}}) = ( \pi_1 \cdot b_1, \pi_1 \cdot (1-b_1), \pi_2 \cdot b_2, \pi_2 \cdot (1-b_2), \dots, \pi_K \cdot b_K, \pi_K \cdot (1-b_K) )  $. Additionally, the assignment functions are $\bm{\xi} = (\xi_1, \dots,\xi_{n} )$, such that  $\xi_i = 1$ if $ \tau_i = 1$ and $ \bm{Z}_i=0 $; $\xi_i =2$ if $\tau_i=1$ and $ \bm{Z}_i=1$; $\xi_i=3$ if $\tau_i=2$ and $ \bm{Z}_i=0 $; $\xi_i=4$ if $\tau_i=2$ and $ \bm{Z}_i=1$; and so on.\\

Our SBM is $ (\bm{A}, \bm{\xi}, \bm{Z}) \sim SBM(\bm{\theta}_Z,\bm{\eta})$ with $\widetilde{K}\times \widetilde{K}$ matrix of probabilities $\bm{\theta}_Z$ 
\begin{tiny}
\begin{equation}
\bm{\theta}_Z = \bordermatrix{ & \tau=1;Z=0 & \tau=1;Z=1 & \tau=2;Z=0 & \tau=2;Z=1 & \cdots & \tau=K;Z=0 & \tau=K;Z=1 \cr
                \tau=1;Z=0 & \bm{\nu}_1^T  \bm{\nu}_1 + \beta & \bm{\nu}_1^T  \bm{\nu}_1  & \bm{\nu}_1^T  \bm{\nu}_2  + \beta & \bm{\nu}_1^T \bm{\nu}_2 & \cdots & \bm{\nu}_1^T  \bm{\nu}_K  + \beta & \bm{\nu}_1^T  \bm{\nu}_K \cr                
                \tau=1;Z=1 & \bm{\nu}_1^T  \bm{\nu}_1 & \bm{\nu}_1^T  \bm{\nu}_1 + \beta & \bm{\nu}_1^T \bm{\nu}_2 & \bm{\nu}_1^T  \bm{\nu}_2 + \beta & \cdots & \bm{\nu}_1^T  \bm{\nu}_K  & \bm{\nu}_1^T  \bm{\nu}_K  + \beta \cr
                \tau=2;Z=0 & \bm{\nu}_2^T  \bm{\nu}_1 + \beta & \bm{\nu}_2^T  \bm{\nu}_1  & \bm{\nu}_2^T \bm{\nu}_2  + \beta & \bm{\nu}_2^T  \bm{\nu}_2 & \cdots & \bm{\nu}_2^T  \bm{\nu}_K  + \beta & \bm{\nu}_2^T \bm{\nu}_K \cr                
                \tau=2;Z=1 & \bm{\nu}_2^T  \bm{\nu}_1 & \bm{\nu}_2^T \bm{\nu}_1 + \beta & \bm{\nu}_2^T  \bm{\nu}_2 & \bm{\nu}_2^T \bm{\nu}_2 + \beta & \cdots & \bm{\nu}_2^T \bm{\nu}_K  & \bm{\nu}_2^T \bm{\nu}_K  + \beta \cr
                \vdots &  \vdots & \vdots & & & \ddots \cr
                \tau=K;Z=0 & \bm{\nu}_K^T  \bm{\nu}_1 + \beta & \bm{\nu}_K^T  \bm{\nu}_1  & \bm{\nu}_K^T \bm{\nu}_2  + \beta & \bm{\nu}_K^T \bm{\nu}_2 & \cdots & \bm{\nu}_K^T \bm{\nu}_K  + \beta & \bm{\nu}_K^T  \bm{\nu}_K \cr                
                \tau=K;Z=1 & \bm{\nu}_K^T  \bm{\nu}_1 & \bm{\nu}_K^T  \bm{\nu}_1 + \beta & \bm{\nu}_K^T  \bm{\nu}_2 & \bm{\nu}_K^T  \bm{\nu}_2 + \beta & \cdots & \bm{\nu}_K^T  \bm{\nu}_K  & \bm{\nu}_K^T   \bm{\nu}_K  + \beta \cr
 }.
\label{eq:matrix_thetaZ_shape_appendix}
\end{equation}
\end{tiny}

The following theorem establishes asymptotic normality for the estimator $\hat{\beta}=\hat{\bm{\theta}}_{Z,11}-\hat{\bm{\theta}}_{Z,12}$.

\begin{theorem} Let $\bm{A}$ be an adjacency matrix from model (\ref{eq:model_known_tau}) with $h$ equal to the identity function $h(u)=u$. Let $\bm{\tau}$ be known. Then $\widehat{\beta} = \widehat{\bm{\theta}}_{Z,11}-\widehat{\bm{\theta}}_{Z,12} $ %'properly scaled and centered'
is asymptotically normal, that is
\begin{eqnarray}
n \left(\hat{\beta} - \beta - \frac{\psi_\beta}{n} \right)\overset{d}{\longrightarrow} N(0,\sigma_\beta^2)
\end{eqnarray}
where both $\psi_\beta$ and $\sigma_\beta^2$ are derived in the appendix.
\label{thm:clt_linear_block_known}
\end{theorem}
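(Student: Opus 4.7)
The plan is to reduce to a standard central limit theorem for Bernoulli averages by exploiting that $\bm{\xi}$, the assignment function of the $\widetilde{K}=2K$-block representation, is fully observed once $\bm{\tau}$ is known and $\bm{Z}$ is recorded. Setting $N_k = \{i : \xi_i = k\}$ and $n_k = |N_k|$, the natural MLEs are
\begin{equation*}
\hat{\bm{\theta}}_{Z,11} = \binom{n_1}{2}^{-1} \sum_{i<j,\; i,j \in N_1} \bm{A}_{ij}, \qquad \hat{\bm{\theta}}_{Z,12} = (n_1 n_2)^{-1} \sum_{i \in N_1,\, j \in N_2} \bm{A}_{ij}.
\end{equation*}
Conditional on $\bm{\xi}$, the entries in each sum are i.i.d.\ Bernoulli with mean $\bm{\theta}_{Z,k\ell}$, and because the two sums run over disjoint edge index sets, the two estimators are conditionally independent.

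First I would apply the Lindeberg--L\'evy CLT to each conditional sum: $\sqrt{\binom{n_1}{2}}(\hat{\bm{\theta}}_{Z,11} - \bm{\theta}_{Z,11}) \overset{d}{\to} N(0,\bm{\theta}_{Z,11}(1-\bm{\theta}_{Z,11}))$, and analogously for $\hat{\bm{\theta}}_{Z,12}$ with normalizer $\sqrt{n_1 n_2}$. Next, by the SLLN for the i.i.d.\ multinomial block labels, $n_k/n \to \eta_k$ almost surely, so Slutsky's theorem upgrades these to the rate-$n$ statements
\begin{equation*}
n(\hat{\bm{\theta}}_{Z,11}-\bm{\theta}_{Z,11}) \overset{d}{\to} N\!\Bigl(0,\, \frac{2\bm{\theta}_{Z,11}(1-\bm{\theta}_{Z,11})}{\eta_1^2}\Bigr), \qquad n(\hat{\bm{\theta}}_{Z,12}-\bm{\theta}_{Z,12}) \overset{d}{\to} N\!\Bigl(0,\, \frac{\bm{\theta}_{Z,12}(1-\bm{\theta}_{Z,12})}{\eta_1\eta_2}\Bigr),
\end{equation*}
jointly independent. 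Applying the continuous mapping theorem to the difference then yields $n(\hat{\beta}-\beta) \overset{d}{\to} N(0,\sigma_\beta^2)$ with $\sigma_\beta^2$ equal to the sum of the two asymptotic variances above. Because the conditional limit depends on $\bm{\xi}$ only through $\eta_k$, the marginal (unconditional) limit matches the conditional one.

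The deterministic offset $\psi_\beta/n$ is the higher-order correction arising from two sources. First, if the normalizer used is $n_k^2$ rather than $n_k(n_k-1)$, or if $\mathbb{E}[1/(n_k n_\ell)]$ is expanded about the deterministic proxy $1/(n^2\eta_k\eta_\ell)$, one picks up a $1/n$ Taylor-type term. Second, finite-sample discrepancies between $\binom{n_k}{2}$ and $n_k^2/2$ contribute an analogous correction on the diagonal. Collecting these via a second-order expansion of the expectations of the random normalizers gives an explicit closed form for $\psi_\beta$ in terms of $\bm{\theta}_Z$ and $\eta$.

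The CLT step is elementary; the main obstacle is bookkeeping the $O(1/n)$ terms needed to pin down $\psi_\beta$ exactly, and cleanly justifying the exchange of conditional and unconditional limits (handled via dominated convergence once the block-size ratios are shown to concentrate, or via a subsequence argument). This simple known-blocks case will serve as the template for the general Theorem~\ref{thm:clt_general_h_block_estim}: there, one additionally invokes the ASE consistency result (Lemma~4 of \cite{TangEtAl2017}) to show that the estimated assignment $\hat{\bm{\xi}}$ coincides with the true $\bm{\xi}$ with probability tending to one, reducing the unknown-blocks problem to the known-blocks one up to a permutation, and then propagates the nonlinearity $h^{-1}$ through the delta method.
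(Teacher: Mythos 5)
Your argument proves a central limit theorem for a different estimator than the one the theorem is about, and this is a genuine gap rather than an alternative route. In the paper, $\widehat{\bm{\theta}}_{Z,k\ell}$ is \emph{not} the empirical edge frequency $\binom{n_1}{2}^{-1}\sum_{i<j\in N_1}\bm{A}_{ij}$; it is the spectral plug-in $\widehat{\bm{\theta}}_{Z,k\ell}=\widehat{\bm{\mu}}_k^T\bm{I}_{d_1,d_2}\widehat{\bm{\mu}}_\ell$, where $\widehat{\bm{\mu}}_k$ are the block centroids obtained from the adjacency spectral embedding $\widehat{\bm{Y}}=\widehat{\bm{U}}\vert\widehat{\bm{S}}\vert^{1/2}$. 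The entire content of Theorem~\ref{thm:clt_linear_block_known} is the asymptotic behavior of \emph{that} estimator, which the paper obtains by reformulating the $\widetilde{K}=2K$-block SBM as a GRDPG and invoking Theorem~2 of \cite{TangEtAl2017} (Lemma~\ref{lemma:thm2TangEtAl2017}), together with a Cram\'er--Wold argument and an explicit computation of the covariance $\sigma_{11,12}$ from the leading-order term $\bm{\Upsilon}_{k\ell}=\tfrac{n\rho_n^{-1/2}}{n_k n_\ell}\operatorname{tr}(\bm{A}-\bm{P})\bm{M}$. Your Lindeberg--L\'evy argument for conditionally i.i.d.\ Bernoulli sums is correct for the MLE with known blocks, but it yields the efficient variances $2\bm{\theta}_{Z,11}(1-\bm{\theta}_{Z,11})/\eta_1^2$ and $\bm{\theta}_{Z,12}(1-\bm{\theta}_{Z,12})/(\eta_1\eta_2)$ with zero covariance, whereas the spectral estimator has the variance $\sigma_{k\ell}^2$ of Lemma~\ref{lemma:thm2TangEtAl2017}, which involves $\bm{\Delta}^{-1}$, the quantities $\zeta_{k\ell}=\bm{\mu}_k^T\bm{\Delta}^{-1}\bm{\mu}_\ell$, and contributions from \emph{all} blocks $r,s$, and the two entries $\widehat{\bm{\theta}}_{Z,11},\widehat{\bm{\theta}}_{Z,12}$ are correlated because they share the same perturbed eigenvectors. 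So the $\sigma_\beta^2$ you obtain is not the $\sigma_\beta^2$ the theorem asserts.

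The diagnosis of the bias term is also incorrect. The properly normalized block-frequency MLE is exactly conditionally unbiased, so normalizer discrepancies between $\binom{n_k}{2}$ and $n_k^2/2$, or expansions of $\mathbb{E}[1/(n_k n_\ell)]$, cannot produce the $\psi_\beta/n$ in the statement (and in any case would contribute at order $1/n$ to the \emph{estimator}, hence $O(1)$ after multiplication by $n$ only if nonzero --- which they are not for the unbiased version you wrote down). In the paper, $\psi_{k\ell}$ arises from the second-order perturbation term $\bm{s}_k^T\bm{\Pi}_U^{\bot}\bm{E}^2\bm{P}^{\dagger}\bm{s}_\ell$ in the expansion of the spectral estimator around $\bm{\theta}_{Z,k\ell}$; it is intrinsic to the ASE and has the explicit form given in Lemma~\ref{lemma:thm2TangEtAl2017}. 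To repair the proof you would need to replace the Bernoulli-average CLT with the spectral CLT of \cite{TangEtAl2017} and carry out the covariance computation; the reduction $n(\hat\beta-\beta)=n(\widehat{\bm{\theta}}_{Z,11}-\bm{\theta}_{Z,11})-n(\widehat{\bm{\theta}}_{Z,12}-\bm{\theta}_{Z,12})$ that you use at the end is the same telescoping step the paper uses, but everything feeding into it must come from the spectral machinery.
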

\begin{proof}
The results in the theorem exploit the fact that $\beta$ is a linear function of the entries of $\bm{\theta}_Z$, whose spectral estimators also exhibit asymptotic normality \citep{TangEtAl2017}. There is a small bias term that goes to zero with $n$, and we can use the result for inference. 

To prove the central limit theorem using the machinery of spectral estimation of generalized random dot product graphs models, we proceed in several steps. \\

\noindent \textbf{STEP 1: reformulate SBM as GRDPG}\\
Our SBM can be thought of as a GRDPG. Indeed, consider the eigendecomposition of matrix $\bm{\theta}_Z = \bm{U}\bm{\Sigma} \bm{U}^T$,  and define $\bm{\mu}=[\bm{\mu}_1,\bm{\mu}_2, \dots ,\bm{\mu}_{\widetilde{K}}]$ as the rows of $\bm{U}\vert \bm{\Sigma}\vert^{1/2}$, then $F=\sum_{k=1}^{\widetilde{K}}\eta_k \delta_{\bm{\mu}_k}$, where $\delta$ is the Dirac-delta; and $d_1$ and $d_2$ are the number of positive and negative eigenvalues of $\bm{\theta}_Z$, respectively. So the GRDPG corresponding to our SBM $ (\bm{A}, \bm{\xi}, \bm{Z}) \sim SBM(\bm{\theta}_Z,\bm{\eta})$ is given by
\begin{equation}
\bm{A}_{ij}\vert \bm{Y}_i,\bm{Y}_j \overset{ind}{\sim} Bernoulli(\bm{Y}_i^T \bm{I}_{d_1,d_2} \bm{Y}_j)
\end{equation}
where  $d_1+d_2=\widetilde{d}=rank(\bm{\theta}_Z)$ and $\bm{Y}$ is the $n\times \widetilde{d}$  vector of latent positions with centers $\bm{\mu}$.\\

\noindent\textbf{STEP 2: spectral estimation for GRDPG}\\
Letting $\widetilde{d}=rank(\bm{\theta}_Z)$, we then perform Adjacency Spectral Embedding (ASE) for $\bm{A}$ and obtain $\bm{A}=\bm{\widehat{U}}\bm{\widehat{S}}\bm{\widehat{U}}^T + \bm{\widehat{U}}_{\bot}\bm{\widehat{S}}_{\bot}\bm{\widehat{U}}_{\bot}^T$, where $\bm{\widehat{S}}$ is the diagonal matrix containing the $\widetilde{d}$ largest eigenvalues of $\bm{A}$ in absolute value, and $\bm{\widehat{U}}$ is the $n\times \widetilde{d}$ matrix whose columns are the corresponding eigenvectors of $\bm{A}$. Using a clustering procedure we can cluster the  estimated
latent positions $\bm{\widehat{Y}}=\bm{\widehat{U}}\vert \bm{\widehat{S}}\vert ^{1/2}$ (using K-means or GMM), obtaining $\widetilde{K}$ clusters and estimates of the clusters centers $\hat{\bm{\mu}}$ and cluster assignments $\bm{\hat{\xi}}$. Notice that these estimates are consistent \citep{AthreyaEtAl2018a,TangPriebe2018}.
%Importantly, in a very strong (perfect) sense

Remember that in the present setting we assume $\bm{\tau}$ is known, so our estimates for the probabilities are
\begin{equation}
\bm{\widehat{\theta}}_{Z,k\ell}=  \hat{\bm{\mu}}_{k}^{T} \bm{I}_{d_1,d_2} \hat{\bm{\mu}}_\ell
\end{equation}
for any pair $k,\ell=1, \dots ,\widetilde{K}$. \\

\noindent\textbf{STEP 3: estimate $\beta$ from matrix $\bm{\widehat{\theta}}_Z$}\\
From the matrix $\bm{\theta}_Z$ we notice that $\beta = \bm{\theta}_{Z,11}- \bm{\theta}_{Z,12}$, thus we can use
the estimator $\hat{\beta} = \bm{\widehat{\theta}}_{Z,11}- \bm{\widehat{\theta}}_{Z,12}$. With some algebra we obtain
\begin{eqnarray}
\hat{\beta} &=& \bm{\widehat{\theta}}_{Z,11}- \bm{\widehat{\theta}}_{Z,12} \\
&=& \bm{\widehat{\theta}}_{Z,11} -\bm{\theta}_{Z,11} + \bm{\theta}_{Z,11}- \bm{\theta}_{Z,12} + \bm{\theta}_{Z,12} - \bm{\widehat{\theta}}_{Z,12} \\
&=& (\bm{\widehat{\theta}}_{Z,11} -\bm{\theta}_{Z,11} ) +( \bm{\theta}_{Z,11}- \bm{\theta}_{Z,12}) - (\bm{\widehat{\theta}}_{Z,12} - \bm{\theta}_{Z,12}) \\
&=& (\bm{\widehat{\theta}}_{Z,11} -\bm{\theta}_{Z,11} ) +\beta - (\bm{\widehat{\theta}}_{Z,12} - \bm{\theta}_{Z,12}) \\
&=& \beta + (\bm{\widehat{\theta}}_{Z,11} -\bm{\theta}_{Z,11} ) - (\bm{\widehat{\theta}}_{Z,12} - \bm{\theta}_{Z,12}).
\end{eqnarray} 
Multiplying by $n$ and rearranging terms we finally get 
\begin{eqnarray}
n(\hat{\beta} - \beta) &=& n(\bm{\widehat{\theta}}_{Z,11} -\bm{\theta}_{Z,11} ) - n (\bm{\widehat{\theta}}_{Z,12} - \bm{\theta}_{Z,12}).   
\end{eqnarray}
Hence, understanding the asymptotic behavior of $n(\hat{\beta} - \beta)$ is equivalent to understanding the asymptotic behavior of the difference between $n(\bm{\widehat{\theta}}_{Z,11} -\bm{\theta}_{Z,11} )$ and $n (\bm{\widehat{\theta}}_{Z,12} - \bm{\theta}_{Z,12}) $. It turns out that analogously to what is available for MLE and variational approximations \citep{BickelEtAl2013}, we can prove normality of these two terms for the GRDPG \citep{TangEtAl2017}. \\

\noindent The following Lemma~\ref{lemma:thm2TangEtAl2017} (corresponding to Theorem~2 in \cite{TangEtAl2017}), showing asymptotic normality of the spectral estimator for the SBM probabilities, will be used in the proof. 
\begin{lemma} (Theorem 2 in \cite{TangEtAl2017}) Let $\bm{A}\sim SBM\left( \bm{\theta}, \bm{\eta}\right) $ be a 
$K$-block stochastic blockmodel graph on $n$ vertices. Let $\bm{\mu}_1, \dots, \bm{\mu}_K$ be point masses in %Terminology issue
$\mathbb{R}^d$ such that $\bm{\theta}_{k\ell}=\bm{\mu}_{k}^{T} \bm{I}_{d_1,d_2}\bm{\mu}_\ell$ and let $\Delta=\sum_k \eta_k \bm{\mu}_k \bm{\mu}_k^T$. For $k\in [K]$ and $\ell\in[K]$, let $\psi_{k\ell}$ be\\

\begin{eqnarray}
\psi_{k\ell} &=& \sum_{r=1}^{K}\eta_r \left(\bm{\theta}_{kr}(1-\bm{\theta}_{kr}) + \bm{\theta}_{\ell r}(1-\bm{\theta}_{\ell r})\right) \bm{\mu}_k^T \Delta^{-1} \bm{I}_{d_1,d_2} \Delta^{-1} \bm{\mu}_\ell \\
  &-& \sum_{r=1}^{K}\sum_{s=1}^{K} \eta_r \eta_s
\bm{\theta}_{sr} (1- \bm{\theta}_{sr})  \bm{\mu}_s^T \Delta^{-1} \bm{I}_{d_1,d_2} \Delta^{-1}(\bm{\mu}_\ell \bm{\mu}_k^T + \bm{\mu}_k \bm{\mu}_{\ell}^{T} )\Delta^{-1}\bm{\mu}_s .
\end{eqnarray}

Let $\zeta_{k\ell}=\bm{\mu}_k^T \Delta^{-1}\bm{\mu}_\ell $ and define $\sigma_{kk}^2$ for $k\in[K]$ to be
\begin{eqnarray}
\sigma_{kk}^2  &=& 4 \bm{\theta}_{kk}(1-\bm{\theta}_{kk}) \zeta_{kk}^2 +
4\sum_{r=1}^K \eta_r\bm{\theta}_{kr}(1 - \bm{\theta}_{kr}) \zeta_{kr}^2 \left( \frac{1}{\eta_k} -2 \zeta_{kk}  \right) \\
&+& 2 \sum_{r=1}^K \sum_{s=1}^K  \eta_r \eta_s \bm{\theta}_{rs}(1-\bm{\theta}_{rs}) \zeta_{kr}^2 \zeta_{ks}^2
\end{eqnarray}
and define $\sigma_{k\ell}^2$ for $k\in[K]$ and $\ell\in [K]$, $ k \neq \ell$ to be
\begin{eqnarray}
\sigma_{k\ell}^2 &=& 
\left( \bm{\theta}_{kk}(1-\bm{\theta}_{kk}) + \bm{\theta}_{\ell\ell}(1-\bm{\theta}_{\ell\ell}) \right) \zeta_{kl}^2 + 2 \bm{\theta}_{k\ell}(1-\bm{\theta}_{k\ell}) \zeta_{kk}\zeta_{\ell\ell} \\
&+& \sum_{r=1}^{K} \eta_r \bm{\theta}_{kr}(1-\bm{\theta}_{kr})\zeta_{\ell r}^2 \left(\frac{1}{\eta_k} - 2\zeta_{kk} \right) \\
&+& \sum_{r=1}^{K} \eta_r \bm{\theta}_{\ell r}(1-\bm{\theta}_{\ell r})\zeta_{k r}^2 \left(\frac{1}{\eta_\ell} - 2\zeta_{\ell\ell} \right) \\
&-& 2 \sum_{r=1}^K \eta_r \left( \bm{\theta}_{kr}(1-\bm{\theta}_{kr}) + \bm{\theta}_{\ell r}(1-\bm{\theta}_{\ell r}) \right) \zeta_{kr}\zeta_{r \ell}\zeta_{k\ell} \\
&+& \frac{1}{2}\sum_{r=1}^K \sum_{s=1}^K \eta_r \eta_s \bm{\theta}_{rs}(1- \bm{\theta}_{rs})\left(\zeta_{kr}\zeta_{\ell s} + \zeta_{\ell r}\zeta_{ks}\right)^2 .
\end{eqnarray}
Then for any $k\in[K]$ and $\ell\in [K]$,
\begin{equation}
    n\left( \widehat{\bm{\theta}}_{k\ell} - \bm{\theta}_{k\ell} - \frac{\psi_{k\ell}}{n}\right) \overset{d}{\rightarrow}N(0,\sigma_{k\ell}^2)
\end{equation}
as $n\rightarrow\infty$.
\label{lemma:thm2TangEtAl2017}
\end{lemma}
\begin{proof}
See \cite{TangEtAl2017} for a detailed proof.\\
\end{proof}

Using the result in Lemma \ref{lemma:thm2TangEtAl2017}, we can see that 
\begin{eqnarray}
n(\bm{\widehat{\theta}}_{Z,11} -\bm{\theta}_{Z,11} ) & \overset{d}{\rightarrow} & N( \psi_{11},\sigma_{11}^2) \\
n (\bm{\widehat{\theta}}_{Z,12} - \bm{\theta}_{Z,12}) & \overset{d}{\rightarrow} & N( \psi_{12},\sigma_{12}^2)
\end{eqnarray}
and by consequence
\begin{eqnarray}
n(\hat{\beta} - \beta) &\overset{d}{\rightarrow} & N(\psi_\beta,\sigma_{\beta}^2)   
\end{eqnarray}
where 
\begin{eqnarray}
\psi_\beta  &=& \psi_{11}  -  \psi_{12}\\
\sigma_{\beta}^2 &=& \sigma_{11}^2+\sigma_{12}^2 - 2\sigma_{11,12}
\end{eqnarray}
and we have used notation $\sigma_{k \ell ,k' \ell '}$ to indicated the covariance terms, that is
\begin{eqnarray}
\sigma_{k \ell ,k' \ell '} = \mathbb{COV}(\bm{\widehat{\theta}}_{Z,k\ell} ,\bm{\widehat{\theta}}_{Z,k ' \ell'} )
\end{eqnarray}
for any $k,\ell, k', \ell' \in \lbrace 1, \dots ,K\rbrace$.
The first two terms of the variance are given above in Lemma~\ref{lemma:thm2TangEtAl2017}. The multivariate version of the CLT can be obtained by applying the Cramer-Wold device. For the covariance term $\sigma_{11,12}$, we provide the calculation below.\\
%remind me, I thought we discussed this 'addition of convergence' issue and were going to say something like 'Cramer-Wold device' ... <--------------------
% IMPORTANT !!!

\noindent\textbf{Computation of the covariance terms}\\

We compute the covariance for a slightly more general model, that includes 
a sparsity coefficient $\rho_n$, that is
\begin{equation}
    \bm{P}_{ij} =  \rho_n\left(\bm{X}_i^T\bm{X}_j + \bm{1}_{\lbrace \bm{Z}_i = \bm{Z}_j \rbrace}\right).
\end{equation}
In the main text we separately discuss the cases in which $\rho_n\rightarrow c$, where $c>0$ is a constant, or $\rho_n\rightarrow 0$, but $n\rho_n =\omega(\sqrt{n})$ when  $n\rightarrow\infty$. Let $\bm{s}_k$ be the vector in $\mathbb{R}^n$ whose $i$-th entry is $1$ if $\xi_i=k$ and $0$ otherwise, and let $n_k$ be the number of nodes in block $k$, that is $n_k = \vert \lbrace i: \xi_i=k\rbrace \vert$. \\

We want to compute the correlation between $\hat{\bm{\theta}}_{Z,k\ell}$ and $\hat{\bm{\theta}}_{Z,k'\ell'}$, for $k\neq k'$ and $\ell\neq \ell '$ in general. \\

To simplify notation, we will omit the $Z$ from the subscript, so we will refer to  $\hat{\bm{\theta}}_{k\ell}$ instead of $\hat{\bm{\theta}}_{Z,k\ell}$ for any $k,\ell$.
Let $\bm{S}$ be the $d\times d$ diagonal matrix containing the largest $d$ eigenvalues of $\bm{P}$ in absolute value, and let $\bm{U}$ be the $n\times d$ matrix whose rows are the corresponding eigenvectors of $\bm{P}$. \\

We start from equation (A.5) in the appendix of \cite{TangEtAl2017}.
% NB: there is a typo, namely the factor of $\frac{1}{n_{k}n_{\ell}}$ multiplying the B hats ?? AM: Fixed it, you were right
\begin{eqnarray}
\frac{n\rho_n^{1/2}}{n_k n_\ell}\left(\hat{\bm{\theta}}_{k\ell}- \bm{\theta}_{k\ell}\right) &=& 
\frac{n\rho_n^{-1/2}}{n_k n_\ell}
\left( \bm{s}_k^T \bm{E} \bm{\Pi}_U \bm{s}_\ell + \bm{s}_{\ell}^T  \bm{\Pi}_U^{\bot} \bm{E} \bm{\Pi}_U\bm{s}_k \right) \label{eq:mainA.5} \\
 &+& \frac{n\rho_n^{-1/2}}{n_k n_\ell}
\left( \bm{s}_k^T \bm{\Pi}_U^\bot \bm{E}^2 \bm{P}^\dagger \bm{s}_\ell + \bm{s}_{\ell}^T  \bm{\Pi}_U^{\bot} \bm{E}^2 \bm{P}^\dagger \bm{s}_k \right) \label{eq:biasA.5} \\
&+& O_{p}\left(n^{-1/2} \rho_n^{-1}\right) \label{eq:OpA.5}
\end{eqnarray}
where $\bm{E}=\bm{A}-\bm{P}$, $\bm{\Pi}_U= \bm{U}\bm{U}^T$, $\bm{\Pi}_U^{\bot}=\bm{I}-\bm{\Pi}_U$ and $\bm{P}^\dagger = \bm{U}\bm{S}^{-1}\bm{U}^T$. \\

Term (\ref{eq:OpA.5}) goes to zero when $n\rho_{n}=\omega(\sqrt{n})$ as $n\rightarrow\infty$. \\

Term (\ref{eq:biasA.5}) is the bias term, corresponding to $\psi_{k\ell}$ or $\rho_{n}^{-1/2}\tilde{\psi}_{k\ell}$ depending on whether $\rho_n \equiv 1$ or $\rho_n\rightarrow 0$, respectively. \\

Term (\ref{eq:mainA.5}) is the leading order term which converges in distribution to a normal random variable. In particular, this is the term that must be considered when deriving asymptotic covariances.

Towards this end, define $\bm{\Upsilon}_{k\ell}$ below as in equation (A.6) found in \cite{TangEtAl2017}, namely
\begin{eqnarray}
\bm{\Upsilon}_{k\ell} & := & \frac{n\rho_n^{-1/2}}{n_k n_\ell}
\left( \bm{s}_k^T \bm{E} \bm{\Pi}_U \bm{s}_\ell + \bm{s}_{\ell}^T  \bm{\Pi}_U^{\bot} \bm{E} \bm{\Pi}_U \bm{s}_k \right) \\
&=& \frac{n\rho_n^{-1/2}}{n_k n_\ell}
 tr \bm{E} \left(   \bm{\Pi}_U \bm{s}_\ell \bm{s}_k^T +   \bm{\Pi}_U \bm{s}_k \bm{s}_{\ell}^T \bm{\Pi}_U^{\bot} \right) \\
 &=& \frac{n\rho_n^{-1/2}}{n_k n_\ell}
 tr \bm{E} \left(   \bm{\Pi}_U \bm{s}_\ell \bm{s}_k^T +   \bm{\Pi}_U \bm{s}_k \bm{s}_{\ell}^T  - \bm{\Pi}_U \bm{s}_k \bm{s}_{\ell}^T \bm{\Pi}_U \right) \\
 &=& \frac{n\rho_n^{-1/2}}{n_k n_\ell}
 tr\left(\bm{A} - \bm{P} \right)\left(   \bm{\Pi}_U \bm{s}_\ell \bm{s}_k^T +   \bm{\Pi}_U \bm{s}_k \bm{s}_{\ell}^T  - \bm{\Pi}_U \bm{s}_k \bm{s}_{\ell}^T \bm{\Pi}_U \right) \\
 &=& \frac{n\rho_n^{-1/2}}{n_k n_\ell}
 tr\left(\bm{A} - \bm{P} \right) \bm{M}
\end{eqnarray}
where $\bm{M} :=  \bm{\Pi}_U \bm{s}_\ell \bm{s}_k^T +   \bm{\Pi}_U \bm{s}_k \bm{s}_{\ell}^T  - \bm{\Pi}_U \bm{s}_k \bm{s}_{\ell}^T \bm{\Pi}_U$. 

We therefore have:
\begin{eqnarray}
\bm{\Upsilon}_{k\ell} = \frac{n\rho_n^{-1/2}}{n_k n_\ell} 
\sum_i \sum_j \left( \bm{A}_{ij} - \bm{P}_{ij}\right) \bm{M}_{ij}.
\end{eqnarray}

First, note that the variable $\bm{\Upsilon}_{k\ell}$ has expected value equal to zero, i.e., $\mathbb{E}[\bm{\Upsilon}_{k \ell}] = 0$, since $\mathbb{E}[\bm{A}_{ij}] = \bm{P}_{ij}$ for each pair $\{i,j\}$. This implies that we can focus on computing covariances as
\begin{equation}
    \mathbb{COV}[\bm{\Upsilon}_{k\ell},\bm{\Upsilon}_{k' \ell'} ] = \mathbb{E}[\bm{\Upsilon}_{k\ell}\bm{\Upsilon}_{k' \ell'} ].
\end{equation}

Towards this end, let the matrix $\bm{Q}$ be defined as
\begin{equation}
    \bm{Q} =   \bm{\Pi}_U \bm{s}_{\ell'} \bm{s}_{k'}^T +   \bm{\Pi}_U \bm{s}_{k'} \bm{s}_{\ell'}^T  - \bm{\Pi}_U \bm{s}_{k'} \bm{s}_{\ell'}^T \bm{\Pi}_U ,
\end{equation}

and  so
\begin{eqnarray}
\bm{\Upsilon}_{k' \ell'} = \frac{n\rho_n^{-1/2}}{n_{k'} n_{\ell'}} 
\sum_i \sum_j \left( \bm{A}_{ij} - \bm{P}_{ij}\right) \bm{Q}_{ij}
\end{eqnarray}

The product $\bm{\Upsilon}_{k\ell}\bm{\Upsilon}_{k' \ell'} $ is then given by
\begin{eqnarray}
\bm{\Upsilon}_{k\ell}\bm{\Upsilon}_{k' \ell'} &=& \left(\frac{n\rho_n^{-1/2}}{n_k n_\ell} 
\sum_i \sum_j \left( \bm{A}_{ij} - \bm{P}_{ij}\right) \bm{M}_{ij} \right)\left( \frac{n\rho_n^{-1/2}}{n_{k'} n_{\ell'}} 
\sum_{i'} \sum_{j'} \left( \bm{A}_{i' j'} - \bm{P}_{i' j'}\right) \bm{Q}_{i' j'} \right) \\
&=& \frac{n^2 \rho_n^{-1}}{n_k n_\ell n_{k'} n_{\ell'}} \sum_i \sum_j \sum_{i'} \sum_{j'} \left( \bm{A}_{ij} - \bm{P}_{ij}\right) \left( \bm{A}_{i' j'} - \bm{P}_{i' j'}\right)
\bm{M}_{ij}\bm{Q}_{i' j'}
\end{eqnarray}

First note that when $\{i,j\}\neq \{i',j'\}$, then the expected value of the corresponding term in the above summation is zero, since then $(\bm{A}_{ij}-\bm{P}_{ij})$ and $(\bm{A}_{i' j'}-\bm{P}_{i' j'})$ are independent, centered random variables. Therefore we can focus on the case when $\{i,j\}=\{i',j'\}$.

Further expanding $\mathbb{E}[\bm{\Upsilon}_{k\ell}\bm{\Upsilon}_{k' \ell'}]$ subsequently yields

\begin{eqnarray}
\mathbb{E}[\bm{\Upsilon}_{k\ell}\bm{\Upsilon}_{k' \ell'}] &=& \frac{n^2 \rho_n^{-1}}{n_k n_\ell n_{k'} n_{\ell'}} \sum_i \sum_j \sum_{i'} \sum_{j'} \mathbb{E}\left[ \left( \bm{A}_{ij} - \bm{P}_{ij}\right) \left( \bm{A}_{i' j'} - \bm{P}_{i' j'}\right) \right]
\bm{M}_{ij}\bm{Q}_{i' j'} \\
&=& \frac{n^2 \rho_n^{-1}}{n_k n_\ell n_{k'} n_{\ell'}} \sum_i \sum_j  \mathbb{E}\left[ \left( \bm{A}_{ij} - \bm{P}_{ij}\right)^2  \right]
\bm{M}_{ij}\bm{Q}_{i j} \\
&=& \frac{n^2 \rho_n^{-1}}{n_k n_\ell n_{k'} n_{\ell'}} \sum_i \sum_j   \bm{P}_{ij}\left(1-\bm{P}_{ij}\right)
\bm{M}_{ij}\bm{Q}_{i j} .
\end{eqnarray}

We thus need to compute the entries of $\bm{M}$ and $\bm{Q}$ to obtain a computable formula.

We have for $\bm{M}$ that $\bm{M}_{ij}:= \upsilon_{ij}^{(1)} + \upsilon_{ij}^{(2)} - \upsilon_{ij}^{(3)}$, where
\begin{eqnarray}
\upsilon_{ij}^{(1)} := \left(\bm{\Pi}_U \bm{s}_\ell \bm{s}_k^T \right)_{ij}
    &=&
    n_\ell \bm{Y}_i^T (\bm{Y}^T \bm{Y})^{-1} \bm{\mu}_\ell \mathds{1}\lbrace \xi_j = k \rbrace \\
\upsilon_{ij}^{(2)} := \left(\bm{\Pi}_U \bm{s}_k \bm{s}_{\ell}^T \right)_{ij}
    &=&
    n_k \bm{Y}_i^T (\bm{Y}^T \bm{Y})^{-1} \bm{\mu}_k \mathds{1}\lbrace \xi_j = \ell \rbrace \\
\upsilon_{ij}^{(3)} := \left(\bm{\Pi}_U \bm{s}_k \bm{s}_{\ell}^T \bm{\Pi}_U \right)_{ij}
    &=&
    n_k n_\ell \bm{Y}_i^T (\bm{Y}^T \bm{Y})^{-1} \bm{\mu}_k \bm{\mu}_{\ell}^T  (\bm{Y}^T \bm{Y})^{-1}\bm{Y}_j\\
\
\end{eqnarray}
and analogously for $\bm{Q}_{ij} := \varrho_{ij}^{(1)} + \varrho_{ij}^{(2)} - \varrho_{ij}^{(3)}$. Hence, $\bm{M}_{ij}\bm{Q}_{ij} =(\upsilon_{ij}^{(1)} + \upsilon_{ij}^{(2)} - \upsilon_{ij}^{(3)})(\varrho_{ij}^{(1)} + \varrho_{ij}^{(2)} - \varrho_{ij}^{(3)})$. In what follows, it will sometimes be useful to write the scalars $\upsilon_{ij}^{(\alpha)}, \varrho_{ij}^{(\beta)}$, for $\alpha,\beta\in\lbrace 1,2,3\rbrace$, equivalently in terms of their transpose (i.e.,~see the right-hand sides of their definitions).

To begin, we make a preliminary observation that
\begin{align*}
    &\tfrac{\rho_{n}^{-1}}{n_{k}n_{k'}}\left(\sum_{i,j}\bm{P}_{ij}(1-\bm{P}_{ij})\bm{Y}_{i}\bm{Y}_{i}^{T}\mathds{1}\{\xi_{j}=k\}\mathds{1}\{\xi_{j}=k'\}\right)\\
    &\hspace{3em}\overset{a.s.}{\rightarrow} \begin{cases}
        \tfrac{1}{\eta_{k}}\mathbb{E}\left[\bm{\theta}_{k\xi(\bm{Y}_{1})}(1-\bm{\theta}_{k\xi(\bm{Y}_{1})})\bm{Y}_{1}\bm{Y}_{1}^{T}\right] \hfill\hspace{1em} \textnormal{ if $\rho_{n}\equiv 1$ and $k=k'$} ,\\
        \tfrac{1}{\eta_{k}}\mathbb{E}\left[\bm{\theta}_{k\xi(\bm{Y}_{1})}\bm{Y}_{1}\bm{Y}_{1}^{T}\right] \hfill\hspace{1em} \textnormal{ if $\rho_{n} \rightarrow 0$ and $k=k'$} ,\\
        0 \hfill\hspace{1em} \textnormal{ if $k\neq k'$} .
    \end{cases}
\end{align*}

Consider the terms involving $\upsilon_{ij}^{(1)}\varrho_{ij}^{(1)}$, $\upsilon_{ij}^{(1)}\varrho_{ij}^{(2)}$,
$\upsilon_{ij}^{(2)}\varrho_{ij}^{(1)}$, and
$\upsilon_{ij}^{(2)}\varrho_{ij}^{(2)}$. Then

\begin{align*}
    &\tfrac{n^2 \rho_n^{-1}}{n_k n_\ell n_{k'} n_{\ell'}} \sum_{i,j}   \bm{P}_{ij}\left(1-\bm{P}_{ij}\right) \upsilon_{ij}^{(1)}\varrho_{ij}^{(1)}\\
    &\hspace{1em}= \tfrac{n^2 \rho_n^{-1}}{n_k n_\ell n_{k'} n_{\ell'}}\left( n_{\ell}n_{\ell'}\bm{\mu}_{\ell}^{T}(\bm{Y}^{T}\bm{Y})^{-1}\left(\sum_{i,j}\bm{P}_{ij}(1-\bm{P}_{ij})\bm{Y}_{i}\bm{Y}_{i}^{T}\mathds{1}\{\xi_{j}=k\}\mathds{1}\{\xi_{j} = k'\}\right)(\bm{Y}^{T}\bm{Y})^{-1}\bm{\mu}_{\ell'}
    \right)\\
    &\hspace{1em}\overset{a.s.}{\rightarrow} \begin{cases}
        \tfrac{1}{\eta_{k}}\bm{\mu}_{\ell}^{T}\bm{\Delta}^{-1}\mathbb{E}\left[\bm{\theta}_{k\xi(\bm{Y}_{1})}(1-\bm{\theta}_{k\xi(\bm{Y}_{1})})\bm{Y}_{1}\bm{Y}_{1}^{T}\right]\bm{\Delta}^{-1}\bm{\mu}_{\ell'} \hfill\hspace{1em} \textnormal{ if $\rho_{n}\equiv 1$ and $k=k'$, }\\
        \tfrac{1}{\eta_{k}}\bm{\mu}_{\ell}^{T}\bm{\Delta}^{-1}\mathbb{E}\left[\bm{\theta}_{k\xi(\bm{Y}_{1})}\bm{Y}_{1}\bm{Y}_{1}^{T}\right]\bm{\Delta}^{-1}\bm{\mu}_{\ell'} \hfill\hspace{1em} \textnormal{ if $\rho_{n} \rightarrow 0$ and $k=k'$, }\\
        0 \hfill\hspace{1em} \textnormal{ if $k\neq k'$, }
\end{cases}
\end{align*}

and similarly

\begin{align*}
&\tfrac{n^2 \rho_n^{-1}}{n_k n_\ell n_{k'} n_{\ell'}} \sum_{i,j}   \bm{P}_{ij}\left(1-\bm{P}_{ij}\right) \upsilon_{ij}^{(1)}\varrho_{ij}^{(2)}\\
    &\hspace{1em}\overset{a.s.}{\rightarrow} \begin{cases}
        \tfrac{1}{\eta_{k}}\bm{\mu}_{\ell}^{T}\bm{\Delta}^{-1}\mathbb{E}\left[\bm{\theta}_{k\xi(\bm{Y}_{1})}(1-\bm{\theta}_{k\xi(\bm{Y}_{1})})\bm{Y}_{1}\bm{Y}_{1}^{T}\right]\bm{\Delta}^{-1}\bm{\mu}_{k'} \hfill\hspace{1em} \textnormal{ if $\rho_{n}\equiv 1$ and $k=\ell'$, }\\
        \tfrac{1}{\eta_{k}}\bm{\mu}_{\ell}^{T}\bm{\Delta}^{-1}\mathbb{E}\left[\bm{\theta}_{k\xi(\bm{Y}_{1})}\bm{Y}_{1}\bm{Y}_{1}^{T}\right]\bm{\Delta}^{-1}\bm{\mu}_{k'} \hfill\hspace{1em} \textnormal{ if $\rho_{n} \rightarrow 0$ and $k=\ell'$, }\\
        0 \hfill\hspace{1em} \textnormal{ if $k\neq \ell'$, }
\end{cases}
\end{align*}

and similarly

\begin{align*}
    &\tfrac{n^2 \rho_n^{-1}}{n_k n_\ell n_{k'} n_{\ell'}} \sum_{i,j}   \bm{P}_{ij}\left(1-\bm{P}_{ij}\right) \upsilon_{ij}^{(2)}\varrho_{ij}^{(1)}\\
    &\hspace{1em}\overset{a.s.}{\rightarrow} \begin{cases}
        \tfrac{1}{\eta_{\ell}}\bm{\mu}_{k}^{T}\bm{\Delta}^{-1}\mathbb{E}\left[\bm{\theta}_{\ell\xi(\bm{Y}_{1})}(1-\bm{\theta}_{\ell\xi(\bm{Y}_{1})})\bm{Y}_{1}\bm{Y}_{1}^{T}\right]\bm{\Delta}^{-1}\bm{\mu}_{\ell'} \hfill\hspace{1em} \textnormal{ if $\rho_{n}\equiv 1$ and $\ell=k'$, }\\
        \tfrac{1}{\eta_{\ell}}\bm{\mu}_{k}^{T}\bm{\Delta}^{-1}\mathbb{E}\left[\bm{\theta}_{\ell\xi(\bm{Y}_{1})}\bm{Y}_{1}\bm{Y}_{1}^{T}\right]\bm{\Delta}^{-1}\bm{\mu}_{\ell'} \hfill\hspace{1em} \textnormal{ if $\rho_{n} \rightarrow 0$ and $\ell=k'$, }\\
        0 \hfill\hspace{1em} \textnormal{ if $\ell\neq k'$. }
\end{cases}
\end{align*}

and similarly

\begin{align*}
&\tfrac{n^2 \rho_n^{-1}}{n_k n_\ell n_{k'} n_{\ell'}} \sum_{i,j}   \bm{P}_{ij}\left(1-\bm{P}_{ij}\right) \upsilon_{ij}^{(2)}\varrho_{ij}^{(2)}\\
    &\hspace{1em}\overset{a.s.}{\rightarrow} \begin{cases}
        \tfrac{1}{\eta_{\ell}}\bm{\mu}_{k}^{T}\bm{\Delta}^{-1}\mathbb{E}\left[\bm{\theta}_{\ell\xi(\bm{Y}_{1})}(1-\bm{\theta}_{\ell\xi(\bm{Y}_{1})})\bm{Y}_{1}\bm{Y}_{1}^{T}\right]\bm{\Delta}^{-1}\bm{\mu}_{k'} \hfill\hspace{1em} \textnormal{ if $\rho_{n}\equiv 1$ and $\ell=\ell'$, }\\
        \tfrac{1}{\eta_{\ell}}\bm{\mu}_{k}^{T}\bm{\Delta}^{-1}\mathbb{E}\left[\bm{\theta}_{\ell\xi(\bm{Y}_{1})}\bm{Y}_{1}\bm{Y}_{1}^{T}\right]\bm{\Delta}^{-1}\bm{\mu}_{k'} \hfill\hspace{1em} \textnormal{ if $\rho_{n} \rightarrow 0$ and $\ell=\ell'$, }\\
        0 \hfill\hspace{1em} \textnormal{ if $\ell\neq \ell'$. }
\end{cases}
\end{align*}

Next, we consider the terms involving $\upsilon_{ij}^{(1)}\varrho_{ij}^{(3)}$,
$\upsilon_{ij}^{(2)}\varrho_{ij}^{(3)}$,
$\upsilon_{ij}^{(3)}\varrho_{ij}^{(1)}$, and 
$\upsilon_{ij}^{(3)}\varrho_{ij}^{(2)}$. In particular,

\begin{align*}
    & -\tfrac{n^2 \rho_n^{-1}}{n_k n_\ell n_{k'} n_{\ell'}} \sum_{i,j}   \bm{P}_{ij}\left(1-\bm{P}_{ij}\right) \upsilon_{ij}^{(1)}\varrho_{ij}^{(3)}\\
    &\hspace{1em}= -
        \tfrac{n^2 \rho_n^{-1}}{n_k n_\ell n_{k'} n_{\ell'}} \sum_{i,j}   \bm{P}_{ij}\left(1-\bm{P}_{ij}\right)\left(n_{\ell}n_{k'}n_{\ell'}\mathds{1}\{\xi_{j}=k\}\bm{\mu}_{\ell}^{T}(\bm{Y}^{T}\bm{Y})^{-1}\bm{Y}_{i}\bm{Y}_{i}^{T}(\bm{Y}^{T}\bm{Y})^{-1}\bm{\mu}_{k'}\bm{\mu}_{\ell'}^{T}(\bm{Y}^{T}\bm{Y})^{-1}\bm{Y}_{j}\right)\\
        &\hspace{1em}\overset{a.s.}{\rightarrow} \begin{cases}
           - \bm{\mu}_{\ell}^{T}\bm{\Delta}^{-1}\mathbb{E}\left[\bm{\theta}_{k\xi(\bm{Y}_{1})}(1-\bm{\theta}_{k\xi(\bm{Y}_{1})})\bm{Y}_{1}\bm{Y}_{1}^{T}\right]\bm{\Delta}^{-1}\bm{\mu}_{k'}\bm{\mu}_{\ell'}^{T}\bm{\Delta}^{-1}\bm{\mu}_{k} \hfill\hspace{1em} \textnormal{ if $\rho_{n}\equiv 1$, }\\
           - \bm{\mu}_{\ell}^{T}\bm{\Delta}^{-1}\mathbb{E}\left[\bm{\theta}_{k\xi(\bm{Y}_{1})}\bm{Y}_{1}\bm{Y}_{1}^{T}\right]\bm{\Delta}^{-1}\bm{\mu}_{k'}\bm{\mu}_{\ell'}^{T}\bm{\Delta}^{-1}\bm{\mu}_{k} \hfill\hspace{1em} \textnormal{ if $\rho_{n} \rightarrow 0$, }            
        \end{cases}
\end{align*}

and similarly
\begin{align*}
    &- \tfrac{n^2 \rho_n^{-1}}{n_k n_\ell n_{k'} n_{\ell'}} \sum_{i,j}   \bm{P}_{ij}\left(1-\bm{P}_{ij}\right) \upsilon_{ij}^{(2)}\varrho_{ij}^{(3)}\\
        &\hspace{1em}\overset{a.s.}{\rightarrow} \begin{cases}
           - \bm{\mu}_{k}^{T}\bm{\Delta}^{-1}\mathbb{E}\left[\bm{\theta}_{\ell\xi(\bm{Y}_{1})}(1-\bm{\theta}_{\ell\xi(\bm{Y}_{1})})\bm{Y}_{1}\bm{Y}_{1}^{T}\right]\bm{\Delta}^{-1}\bm{\mu}_{k'}\bm{\mu}_{\ell'}^{T}\bm{\Delta}^{-1}\bm{\mu}_{\ell} \hfill\hspace{1em} \textnormal{ if $\rho_{n}\equiv 1$, }\\
            - \bm{\mu}_{k}^{T}\bm{\Delta}^{-1}\mathbb{E}\left[\bm{\theta}_{\ell\xi(\bm{Y}_{1})}\bm{Y}_{1}\bm{Y}_{1}^{T}\right]\bm{\Delta}^{-1}\bm{\mu}_{k'}\bm{\mu}_{\ell'}^{T}\bm{\Delta}^{-1}\bm{\mu}_{\ell} \hfill\hspace{1em} \textnormal{ if $\rho_{n} \rightarrow 0$. }            
        \end{cases}
\end{align*}

Along the same lines,

\begin{align*}
    &- \tfrac{n^2 \rho_n^{-1}}{n_k n_\ell n_{k'} n_{\ell'}} \sum_{i,j}   \bm{P}_{ij}\left(1-\bm{P}_{ij}\right) \upsilon_{ij}^{(3)}\varrho_{ij}^{(1)}\\
        &\hspace{1em}= 
            -\tfrac{n^2 \rho_n^{-1}}{n_k n_\ell n_{k'} n_{\ell'}} \sum_{i,j}   \bm{P}_{ij}\left(1-\bm{P}_{ij}\right)\left(n_{k}n_{\ell}n_{\ell'}\mathds{1}\{\xi_{j}=k'\}\bm{Y}_{j}^{T}(\bm{Y}^{T}\bm{Y})^{-1}\bm{\mu}_{\ell}\bm{\mu}_{k}^{T}(\bm{Y}^{T}\bm{Y})^{-1}\bm{Y}_{i}\bm{Y}_{i}^{T}(\bm{Y}^{T}\bm{Y})^{-1}\bm{\mu}_{\ell'}\right)\\
        &\hspace{1em}\overset{a.s.}{\rightarrow} \begin{cases}
            -\bm{\mu}_{k'}^{T}\bm{\Delta}^{-1}\bm{\mu}_{\ell}\bm{\mu}_{k}^{T}\bm{\Delta}^{-1}\mathbb{E}\left[\bm{\theta}_{k'\xi(\bm{Y}_{1})}(1-\bm{\theta}_{k'\xi(\bm{Y}_{1})})\bm{Y}_{1}\bm{Y}_{1}^{T}\right]\bm{\Delta}^{-1}\bm{\mu}_{\ell'} \hfill\hspace{1em} \textnormal{ if $\rho_{n}\equiv 1$, }\\
            -\bm{\mu}_{k'}^{T}\bm{\Delta}^{-1}\bm{\mu}_{\ell}\bm{\mu}_{k}^{T}\bm{\Delta}^{-1}\mathbb{E}\left[\bm{\theta}_{k'\xi(\bm{Y}_{1})}\bm{Y}_{1}\bm{Y}_{1}^{T}\right]\bm{\Delta}^{-1}\bm{\mu}_{\ell'} \hfill\hspace{1em} \textnormal{ if $\rho_{n} \rightarrow 0$, }            
\end{cases}
\end{align*}

and similarly

\begin{align*}
    &-\tfrac{n^2 \rho_n^{-1}}{n_k n_\ell n_{k'} n_{\ell'}} \sum_{i,j}   \bm{P}_{ij}\left(1-\bm{P}_{ij}\right) \upsilon_{ij}^{(3)}\varrho_{ij}^{(2)}\\
        &\hspace{1em}\overset{a.s.}{\rightarrow} \begin{cases}
            -\bm{\mu}_{\ell'}^{T}\bm{\Delta}^{-1}\bm{\mu}_{\ell}\bm{\mu}_{k}^{T}\bm{\Delta}^{-1}\mathbb{E}\left[\bm{\theta}_{\ell'\xi(\bm{Y}_{1})}(1-\bm{\theta}_{\ell'\xi(\bm{Y}_{1})})\bm{Y}_{1}\bm{Y}_{1}^{T}\right]\bm{\Delta}^{-1}\bm{\mu}_{k'} \hfill\hspace{1em} \textnormal{ if $\rho_{n}\equiv 1$, }\\
            -\bm{\mu}_{\ell'}^{T}\bm{\Delta}^{-1}\bm{\mu}_{\ell}\bm{\mu}_{k}^{T}\bm{\Delta}^{-1}\mathbb{E}\left[\bm{\theta}_{\ell'\xi(\bm{Y}_{1})}\bm{Y}_{1}\bm{Y}_{1}^{T}\right]\bm{\Delta}^{-1}\bm{\mu}_{k'} \hfill\hspace{1em} \textnormal{ if $\rho_{n} \rightarrow 0$. }            
\end{cases}
\end{align*}

Finally, consider the term involving $\upsilon_{ij}^{(3)}\varrho_{ij}^{(3)}$. We see that

\begin{align*}
    &\tfrac{n^2 \rho_n^{-1}}{n_k n_\ell n_{k'} n_{\ell'}} \sum_{i,j}   \bm{P}_{ij}\left(1-\bm{P}_{ij}\right) \upsilon_{ij}^{(3)}\varrho_{ij}^{(3)}\\
        &\hspace{-1em}= 
            \tfrac{n^2 \rho_n^{-1}}{n_k n_\ell n_{k'} n_{\ell'}} \sum_{i,j}   \bm{P}_{ij}\left(1-\bm{P}_{ij}\right)\left(n_{k}n_{\ell}n_{k'}n_{\ell'}\bm{Y}_{j}^{T}(\bm{Y}^{T}\bm{Y})^{-1}\bm{\mu}_{\ell}\bm{\mu}_{k}^{T}(\bm{Y}^{T}\bm{Y})^{-1}\bm{Y}_{i}\bm{Y}_{i}^{T}(\bm{Y}^{T}\bm{Y})^{-1}\bm{\mu}_{k'}\bm{\mu}_{\ell'}^{T}(\bm{Y}^{T}\bm{Y})^{-1}\bm{Y}_{j}\right).
\end{align*}
In order to analyze this quantity, we now decompose the sum over the index $j$ using the indicator variables $\mathds{1}\{\xi_{j}=\alpha\}$ for all SBM blocks $\alpha \in \mathcal{A}$. For each such term we get
\begin{align*}
    &\tfrac{n^2 \rho_n^{-1}}{n_k n_\ell n_{k'} n_{\ell'}} \sum_{i,j}   \bm{P}_{ij}\left(1-\bm{P}_{ij}\right) \upsilon_{ij}^{(3)}\varrho_{ij}^{(3)}\mathds{1}\{\xi_{j}=\alpha\}\\
        &\hspace{1em}\overset{a.s.}{\rightarrow}
        \begin{cases}
            \eta_{\alpha} \bm{\mu}_{\alpha}^{T}\bm{\Delta}^{-1}\bm{\mu}_{\ell}\bm{\mu}_{k}^{T}\bm{\Delta}^{-1}\mathbb{E}\left[\bm{\theta}_{\alpha\xi(\bm{Y}_{1})}(1-\bm{\theta}_{\alpha\xi(\bm{Y}_{1})})\bm{Y}_{1}\bm{Y}_{1}^{T}\right]\bm{\Delta}^{-1}\bm{\mu}_{k'}\bm{\mu}_{\ell'}^{T}\bm{\Delta}^{-1}\bm{\mu}_{\alpha} \hfill\hspace{1em} \textnormal{ if $\rho_{n} \equiv 1$, }\\
            \eta_{\alpha} \bm{\mu}_{\alpha}^{T}\bm{\Delta}^{-1}\bm{\mu}_{\ell}\bm{\mu}_{k}^{T}\bm{\Delta}^{-1}\mathbb{E}\left[\bm{\theta}_{\alpha\xi(\bm{Y}_{1})}\bm{Y}_{1}\bm{Y}_{1}^{T}\right]\bm{\Delta}^{-1}\bm{\mu}_{k'}\bm{\mu}_{\ell'}^{T}\bm{\Delta}^{-1}\bm{\mu}_{\alpha}
            \hfill\hspace{1em} \textnormal{ if $\rho_{n} \rightarrow 0$. }
        \end{cases}
\end{align*}
Hence, by aggregating over all $\alpha\in\mathcal{A}$, we obtain
\begin{align*}
    &\tfrac{n^2 \rho_n^{-1}}{n_k n_\ell n_{k'} n_{\ell'}} \sum_{i,j}   \bm{P}_{ij}\left(1-\bm{P}_{ij}\right) \upsilon_{ij}^{(3)}\varrho_{ij}^{(3)}\\
    &\hspace{1em}\overset{a.s.}{\rightarrow}
        \begin{cases}
            \sum_{\alpha} \eta_{\alpha} \bm{\mu}_{\alpha}^{T}\bm{\Delta}^{-1}\bm{\mu}_{\ell}\bm{\mu}_{k}^{T}\bm{\Delta}^{-1}\mathbb{E}\left[\bm{\theta}_{\alpha\xi(\bm{Y}_{1})}(1-\bm{\theta}_{\alpha\xi(\bm{Y}_{1})})\bm{Y}_{1}\bm{Y}_{1}^{T}\right]\bm{\Delta}^{-1}\bm{\mu}_{k'}\bm{\mu}_{\ell'}^{T}\bm{\Delta}^{-1}\bm{\mu}_{\alpha} \hfill\hspace{1em} \textnormal{ if $\rho_{n} \equiv 1$, }\\
            \sum_{\alpha} \eta_{\alpha} \bm{\mu}_{\alpha}^{T}\bm{\Delta}^{-1}\bm{\mu}_{\ell}\bm{\mu}_{k}^{T}\bm{\Delta}^{-1}\mathbb{E}\left[\bm{\theta}_{\alpha\xi(\bm{Y}_{1})}\bm{Y}_{1}\bm{Y}_{1}^{T}\right]\bm{\Delta}^{-1}\bm{\mu}_{k'}\bm{\mu}_{\ell'}^{T}\bm{\Delta}^{-1}\bm{\mu}_{\alpha}
            \hfill\hspace{1em} \textnormal{ if $\rho_{n} \rightarrow 0$. }
        \end{cases}
\end{align*}
Combining all of the above observations yields $\mathbb{COV}[\bm{\Upsilon}_{k\ell}\bm{\Upsilon}_{k'\ell'}]$ for all possible relationships between tuples $\{k,\ell\}, \{k',\ell'\}$ and for both regimes $\rho_{n}\equiv 1, \rho_{n}\rightarrow 0$. \\

%%%%%%%%%%%%%%%%%%%%%%%%%%%%%%%%%%%%%%%%%%%%%%%%%%%%%%%%%%%%%%%%%%%%%%%%%%%
%%%%%%%%%%%%%%%%%%%%%%%%%%%%%%%%%%%%%%%%%%%%%%%%%%%%%%%%%%%%%%%%%%%%%%%%%%%
%%%%%%%%%%%%%%%%%%%%%%%%%%%%%%%%%%%%%%%%%%%%%%%%%%%%%%%%%%%%%%%%%%%%%%%%%%%
%%%%%%%%%%%%%%%%%%%%%%%%%%%%%%%%%%%%%%%%%%%%%%%%%%%%%%%%%%%%%%%%%%%%%%%%%%%
%%%%%%%%%%%%%%%%%%%%%%%%%%%%%%%%%%%%%%%%%%%%%%%%%%%%%%%%%%%%%%%%%%%%%%%%%%%
%%%%%%%%%%%%%%%%%%%%%%%%%%%%%%%%%%%%%%%%%%%%%%%%%%%%%%%%%%%%%%%%%%%%%%%%%%%
%%%%%%%%%%%%%%%%%%%%%%%%%%%%%%%%%%%%%%%%%%%%%%%%%%%%%%%%%%%%%%%%%%%%%%%%%%%
%%%%%%%%%%%%%%%%%%%%%%%%%%%%%%%%%%%%%%%%%%%%%%%%%%%%%%%%%%%%%%%%%%%%%%%%%%%
%%%%%%%%%%%%%%%%%%%%%%%%%%%%%%%%%%%%%%%%%%%%%%%%%%%%%%%%%%%%%%%%%%%%%%%%%%%
%%%%%%%%%%%%%%%%%%%%%%%%%%%%%%%%%%%%%%%%%%%%%%%%%%%%%%%%%%%%%%%%%%%%%%%%%%%

\end{proof}

\subsection{Blocks known and general link function}
If the link function is not the identity, then our general model becomes
\begin{equation}
\bm{A}_{ij}\vert \bm{X}_i, \bm{X}_j, \bm{Z}_i, \bm{Z}_j,\beta \overset{ind}{\sim} Bernoulli \left( h( \bm{X}_i^T  \bm{X}_j + \beta \mathbf{1}_{\lbrace \bm{Z}_i = \bm{Z}_j \rbrace} )\right).
\label{eq:model_known_tau_h}
\end{equation}
A popular choice of $h$ is the logistic specification, with $h(u) = e^u /(1+e^u)$ \citep{ChoiEtAl2011,TraudEtAl2012,Sweet2015,Nimczik2018}. The generalization of the central limit theorem is as follows.

\begin{theorem} (General nonlinear $h$ function) Let $\bm{A}$ be an adjacency matrix from model (\ref{eq:model_known_tau_h}) and let $h$ be a link function, $h:\mathcal{\bm{X}}\times \mathcal{\bm{X}}\times \mathcal{\bm{Z}}\times \mathcal{\bm{Z}}\rightarrow [0,1]$. Let $\bm{\tau}$ be known and let function $g$ be defined as the inverse of $h$, that is $g(\cdot)=h^{-1}(\cdot)$, with first derivative $g^{\prime}(\cdot)$. Let $g^{\prime}(\bm{\nu}_{1}^T \bm{\nu}_{1} +\beta)\neq 0$ and $g^{\prime}(\bm{\nu}_{1}^T \bm{\nu}_{2} )\neq 0$. Then $\widehat{\beta} = h^{-1}(\widehat{\bm{\theta}}_{Z,11})-h^{-1}(\widehat{\bm{\theta}}_{Z,12}) $ is asymptotically normal, in particular %appropriately transformed...
\begin{eqnarray}
n \left(\hat{\beta} - \beta - \frac{\widetilde{\psi}_{\beta}}{n} \right)\overset{d}{\longrightarrow} N\left(0,\widetilde{\sigma}_{\beta}^2 \right)
\end{eqnarray}

where $\widetilde{\psi}_\beta$ and $\widetilde{\sigma}_\beta^2$  are computed in appendix.
\label{thm:clt_general_h_block_known}
\end{theorem}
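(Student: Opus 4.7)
The plan is to obtain Theorem~\ref{thm:clt_general_h_block_known} as a direct consequence of Theorem~\ref{thm:clt_linear_block_known} combined with the delta method. Writing $g = h^{-1}$, the target estimator can be expressed as $\widehat{\beta} = g(\widehat{\bm{\theta}}_{Z,11}) - g(\widehat{\bm{\theta}}_{Z,12})$, and the true parameter is $\beta = g(\bm{\theta}_{Z,11}) - g(\bm{\theta}_{Z,12})$. Thus the main task is to propagate the joint asymptotic normality of $(\widehat{\bm{\theta}}_{Z,11}, \widehat{\bm{\theta}}_{Z,12})$ through the smooth map $(u,v)\mapsto g(u) - g(v)$, a standard exercise once a proper joint CLT is in hand.

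First, I would upgrade the marginal CLTs in Lemma~\ref{lemma:thm2TangEtAl2017} and Theorem~\ref{thm:clt_linear_block_known} to a bivariate CLT for $(\widehat{\bm{\theta}}_{Z,11}, \widehat{\bm{\theta}}_{Z,12})$ by the Cram\'er--Wold device. Any linear combination $a\, n(\widehat{\bm{\theta}}_{Z,11} - \bm{\theta}_{Z,11}) + b\, n(\widehat{\bm{\theta}}_{Z,12} - \bm{\theta}_{Z,12})$ can be written, using equation~(\ref{eq:mainA.5}), as a single scalar trace of $\bm{E} = \bm{A} - \bm{P}$ against a fixed deterministic matrix (a linear combination of the $\bm{M}$'s), so the univariate CLT argument applies verbatim. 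The limiting covariance is precisely the quantity $\sigma_{11,12}$ computed in the proof of Theorem~\ref{thm:clt_linear_block_known}.

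Next, I would perform a first-order Taylor expansion of $g$ separately at $\bm{\theta}_{Z,11}$ and at $\bm{\theta}_{Z,12}$, using the hypothesis that $g'$ is nonzero at both points, which in particular guarantees $g \in C^2$ in a neighborhood of each value. This yields
\[
\widehat{\beta} - \beta = g'(\bm{\theta}_{Z,11})(\widehat{\bm{\theta}}_{Z,11} - \bm{\theta}_{Z,11}) - g'(\bm{\theta}_{Z,12})(\widehat{\bm{\theta}}_{Z,12} - \bm{\theta}_{Z,12}) + R_n,
\]
where $R_n$ collects the second-order and higher remainders. Since $\widehat{\bm{\theta}}_{Z,k\ell} - \bm{\theta}_{Z,k\ell} = O_p(n^{-1})$ by Theorem~\ref{thm:clt_linear_block_known}, one has $R_n = O_p(n^{-2})$ and hence $n R_n = o_p(1)$. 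Multiplying through by $n$ and combining with Step~1 via Slutsky's theorem delivers asymptotic normality with
\begin{align*}
\widetilde{\psi}_\beta &= g'(\bm{\theta}_{Z,11})\, \psi_{11} - g'(\bm{\theta}_{Z,12})\, \psi_{12}, \\
\widetilde{\sigma}_\beta^{\,2} &= [g'(\bm{\theta}_{Z,11})]^2 \sigma_{11}^2 + [g'(\bm{\theta}_{Z,12})]^2 \sigma_{12}^2 - 2\, g'(\bm{\theta}_{Z,11})\, g'(\bm{\theta}_{Z,12})\, \sigma_{11,12}.
\end{align*}

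The main obstacle is the control of the Taylor remainder $R_n$ at the correct rate: because the bias $\psi_{k\ell}/n$ is itself $O(n^{-1})$, one needs the quadratic remainder to be of strictly smaller order $o_p(n^{-1})$ after multiplication by $n$, so that it does not contaminate either the centering or the limiting variance. This is exactly where the sharp $O_p(n^{-1})$ concentration rate obtained in the proof of Theorem~\ref{thm:clt_linear_block_known} (rather than a cruder $O_p(n^{-1/2})$ bound) is indispensable, and where the nonvanishing assumptions on $g'$ are used to ensure a uniform $C^2$ bound on $g$ in a deterministic neighborhood of the true probabilities.
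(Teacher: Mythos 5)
Your proposal is correct and follows essentially the same route as the paper: a first-order Taylor expansion (delta method) of $g=h^{-1}$ around $\bm{\theta}_{Z,11}$ and $\bm{\theta}_{Z,12}$, applied to the joint CLT for $(\widehat{\bm{\theta}}_{Z,11},\widehat{\bm{\theta}}_{Z,12})$ obtained from Lemma~\ref{lemma:thm2TangEtAl2017} via Cram\'er--Wold, yielding exactly the paper's formulas $\widetilde{\psi}_{\beta}=g'(\bm{\theta}_{Z,11})\psi_{11}-g'(\bm{\theta}_{Z,12})\psi_{12}$ and $\widetilde{\sigma}_{\beta}^2=\widetilde{\sigma}_{11}^2+\widetilde{\sigma}_{12}^2-2\widetilde{\sigma}_{11,12}$. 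Your treatment of the remainder term is in fact more explicit than the paper's (which dismisses it as ``smaller order terms''); the only slip is the aside that $g'\neq 0$ ``guarantees $g\in C^2$,'' which it does not, though mere differentiability of $g$ at the two points already suffices since the remainder is $o_p(|\widehat{\bm{\theta}}_{Z,k\ell}-\bm{\theta}_{Z,k\ell}|)=o_p(n^{-1})$.
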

\begin{proof}

%\subsection{Proof of Theorem \ref{thm:clt_general_h_block_known}}
In this case model (\ref{eq:general_model_sbm}) is as follows:
\begin{equation}
\bm{A}_{ij}\vert \bm{X}_i, \bm{X}_j, \bm{Z}_i, \bm{Z}_j,\beta \overset{ind}{\sim} Bernoulli \left( h\left( \bm{X}_i^T  \bm{X}_j + \beta \mathbf{1}_{\lbrace \bm{Z}_i = \bm{Z}_j \rbrace} \right) \right).
\end{equation}

All the rest is the same, and we can write our SBM as $ (\bm{A}, \bm{\xi}, \bm{Z}) \sim SBM(\bm{\theta}_Z,\bm{\eta})$ with $\widetilde{K}\times \widetilde{K}$ matrix of probabilities $\bm{\theta}_Z$  given by
\begin{tiny}
\begin{equation}
\bm{\theta}_Z = \bordermatrix{ & \tau=1;Z=0 & \tau=1;Z=1 & \tau=2;Z=0 & \tau=2;Z=1 & \cdots & \tau=K;Z=0 & \tau=K;Z=1 \cr
                \tau=1;Z=0 & h(\bm{\nu}_1^T  \bm{\nu}_1 + \beta) & h(\bm{\nu}_1^T  \bm{\nu}_1)  & h(\bm{\nu}_1^T  \bm{\nu}_2  + \beta )& h(\bm{\nu}_1^T \bm{\nu}_2) & \cdots & h(\bm{\nu}_1^T  \bm{\nu}_K  + \beta) & h(\bm{\nu}_1^T  \bm{\nu}_K) \cr                
                \tau=1;Z=1 & h(\bm{\nu}_1^T  \bm{\nu}_1) & h(\bm{\nu}_1^T  \bm{\nu}_1 + \beta) & h(\bm{\nu}_1^T \bm{\nu}_2) & h(\bm{\nu}_1^T  \bm{\nu}_2 + \beta) & \cdots & h(\bm{\nu}_1^T  \bm{\nu}_K)  & h(\bm{\nu}_1^T  \bm{\nu}_K  + \beta) \cr
                \tau=2;Z=0 & h(\bm{\nu}_2^T  \bm{\nu}_1 + \beta) & h(\bm{\nu}_2^T  \bm{\nu}_1)  & h(\bm{\nu}_2^T \bm{\nu}_2  + \beta) & h(\bm{\nu}_2^T  \bm{\nu}_2) & \cdots & h(\bm{\nu}_2^T  \bm{\nu}_K  + \beta) & h(\bm{\nu}_2^T \bm{\nu}_K) \cr                
                \tau=2;Z=1 & h(\bm{\nu}_2^T  \bm{\nu}_1) & h(\bm{\nu}_2^T \bm{\nu}_1 + \beta) & h(\bm{\nu}_2^T  \bm{\nu}_2) & h(\bm{\nu}_2^T \bm{\nu}_2 + \beta) & \cdots & h(\bm{\nu}_2^T \bm{\nu}_K) & h(\bm{\nu}_2^T \bm{\nu}_K  + \beta )\cr
                \vdots &  \vdots & \vdots & & & \ddots \cr
                \tau=K;Z=0 & h(\bm{\nu}_K^T  \bm{\nu}_1 + \beta) & h(\bm{\nu}_K^T  \bm{\nu}_1)  & h(\bm{\nu}_K^T \bm{\nu}_2  + \beta ) & h(\bm{\nu}_K^T \bm{\nu}_2) & \cdots & h(\bm{\nu}_K^T \bm{\nu}_K  + \beta) & h(\bm{\nu}_K^T  \bm{\nu}_K) \cr                
                \tau=K;Z=1 & h(\bm{\nu}_K^T  \bm{\nu}_1) & h(\bm{\nu}_K^T  \bm{\nu}_1 + \beta) & h(\bm{\nu}_K^T  \bm{\nu}_2) & h(\bm{\nu}_K^T  \bm{\nu}_2 + \beta) & \cdots & h(\bm{\nu}_K^T  \bm{\nu}_K)  & h(\bm{\nu}_K^T   \bm{\nu}_K  + \beta ) \cr 
 } .
\end{equation}
\end{tiny}

So we know that $\beta = h^{-1}(\bm{\theta}_{Z,11}) -  h^{-1}(\bm{\theta}_{Z,12})$, and we can use the estimator
\begin{equation}
    \hat{\beta} = h^{-1}(\widehat{\bm{\theta}}_{Z,11}) -  h^{-1}(\widehat{\bm{\theta}}_{Z,12}).
\end{equation}

By Theorem \ref{lemma:thm2TangEtAl2017} we know that
\begin{eqnarray}
n(\bm{\widehat{\theta}}_{Z,11} -\bm{\theta}_{Z,11} ) & \overset{d}{\rightarrow} & N( \psi_{11},\sigma_{11}^2) \\
n (\bm{\widehat{\theta}}_{Z,12} - \bm{\theta}_{Z,12}) & \overset{d}{\rightarrow} & N( \psi_{12},\sigma_{12}^2)
\end{eqnarray}
Applying a Taylor expansion reveals
\begin{eqnarray}
    h^{-1}(\widehat{\bm{\theta}}_{Z,11}) &=& h^{-1}(\bm{\theta}_{Z,11}) + \left(h^{-1}(\bm{\theta}_{Z,11}) \right)^{\prime} \left(\widehat{\bm{\theta}}_{Z,11}- \bm{\theta}_{Z,11}\right) + \text{smaller order terms} \\
    h^{-1}(\widehat{\bm{\theta}}_{Z,12}) &=& h^{-1}(\bm{\theta}_{Z,12}) + \left(h^{-1}(\bm{\theta}_{Z,12}) \right)^{\prime} \left(\widehat{\bm{\theta}}_{Z,12}- \bm{\theta}_{Z,12}\right) + \text{smaller order terms}
\end{eqnarray}

and this implies
\begin{eqnarray}
n\left(h^{-1}(\bm{\widehat{\theta}}_{Z,11}) -h^{-1}(\bm{\theta}_{Z,11}) \right) & \overset{d}{\rightarrow} & N( \widetilde{\psi}_{11},\widetilde{\sigma}_{11}^2), \\
n \left(h^{-1}(\bm{\widehat{\theta}}_{Z,12}) - h^{-1}(\bm{\theta}_{Z,12}) \right)& \overset{d}{\rightarrow} & N( \widetilde{\psi}_{12},\widetilde{\sigma}_{12}^2),
\end{eqnarray}
where $\widetilde{\psi}_{k\ell} = \psi_{k \ell} \left(h^{-1}(\bm{\theta}_{Z,k\ell}) \right)^{\prime}$ and 
$\widetilde{\sigma}_{k\ell}^2 = \sigma_{k\ell}^2 \left[\left(h^{-1}(\bm{\theta}_{Z,k\ell}) \right)^{\prime}\right]^2$ for $k,\ell=1,2$. 

Finally this implies that our estimator behaves in the manner
\begin{eqnarray}
n(\hat{\beta} - \beta) &\overset{d}{\rightarrow} & N(\widetilde{\psi}_{\beta},\widetilde{\sigma}_{\beta}^2),   
\end{eqnarray}
where the bias term is 
\begin{eqnarray}
\widetilde{\psi}_{\beta}  =\widetilde{\psi}_{11}  -  \widetilde{\psi}_{12},
\end{eqnarray}
where the variance term $\widetilde{\sigma}_{\beta}^2$ satisfies
\begin{eqnarray}
\widetilde{\sigma}_{\beta}^2 = \widetilde{\sigma}_{11}^2+\widetilde{\sigma}_{12}^2 - 2\widetilde{\sigma}_{11,12},
\end{eqnarray}
and where the covariance term is given by
\begin{equation}
   \widetilde{\sigma}_{11,12} =  cov(h^{-1}(\widehat{\bm{\theta}}_{11}), h^{-1}(\widehat{\bm{\theta}}_{12})) = \sigma_{11,12} \left[\left(h^{-1}(\bm{\theta}_{Z,11}) \right)^{\prime}\right]\left[\left(h^{-1}(\bm{\theta}_{Z,12}) \right)^{\prime}\right].
\end{equation}

\end{proof}

%%%%%%%%%%%%%%%%%%%%%%%%%%%%%%%%%%%%%%%%%

\subsection{Proof of THEOREM \ref{thm:clt_general_h_block_estim}}
 Let $K$ be known. 
 When $\bm{\tau}$ is not known, we can estimate it using Adjacency Spectral Embedding (ASE) to get an estimate $\widehat{\bm{Y}}= \widehat{\bm{U}}\vert \widehat{\bm{S}} \vert^{1/2}$; we then cluster the rows of $\widehat{\bm{Y}}$  using a Gaussian Mixture Model (GMM) or $K$-means clustering. This gives estimates $\widehat{\bm{\xi}}$ and therefore $\widehat{\bm{\tau}}$ as we can easily estimate the probability $b_k$ of the Bernoulli covariates from the observed $\bm{Z}_i$'s, within each estimated block.
 
 Because of the estimation the blocks are recovered up to a permutation of the blocks' labels (as it is in any mixture model). Nonetheless, we can still obtain an asymptotic result for $\widehat{\beta}$ using the following Lemma \ref{lemma:cor2TangEtAl2017}, taken from \cite{TangEtAl2017}.

\begin{lemma} (Corollary~2 in \cite{TangEtAl2017}) Let the setting and notation be as in Theorem \ref{lemma:thm2TangEtAl2017}. Let $K$ be known and let $\widehat{\bm{\xi}}: [n]\rightarrow [\tilde{K}]$ be the function that assigns nodes to clusters, estimated using GMM or K-means clustering on the rows of $\widehat{\bm{Y}}= \widehat{\bm{U}}\vert \widehat{\bm{S}} \vert^{1/2}$ (as in the proof of Theorem~\ref{lemma:thm2TangEtAl2017}). Let 
$\widehat{\bm{\theta}}_{Z,k\ell}=\widehat{\bm{\mu}}_{k}^{T} \mathbf{I}_{d_1,d_2} \widehat{\bm{\mu}}_\ell$ and let $\widehat{\bm{\Delta}}=\sum_{k=1}^{K} \widehat{\eta}_k  \widehat{\bm{\mu}}_k \widehat{\bm{\mu}}_{\ell}^{T}$. For $k\in[K]$ and $\ell\in[K]$ define $\widehat{\psi}_{k\ell}$ as

\begin{eqnarray}
\widehat{\psi}_{k\ell} &=& \sum_{r=1}^{K}\widehat{\xi}_r \left(\widehat{\bm{\theta}}_{kr}(1-\widehat{\bm{\theta}}_{kr}) + \widehat{\bm{\theta}}_{\ell r}(1-\widehat{\bm{\theta}}_{\ell r})\right) \widehat{\bm{\mu}}_k^T \widehat{\Delta}^{-1} \bm{I}_{d_1,d_2} \widehat{\Delta}^{-1} \widehat{\bm{\mu}}_\ell \\
  &-& \sum_{r=1}^{K}\sum_{s=1}^{K} \widehat{\eta}_r \widehat{\eta}_s
\widehat{\bm{\theta}}_{sr} (1- \widehat{\bm{\theta}}_{sr})  \widehat{\bm{\mu}}_s^T \widehat{\Delta}^{-1} \bm{I}_{d_1,d_2} \widehat{\Delta}^{-1}(\widehat{\bm{\mu}}_\ell \widehat{\bm{\mu}}_k^T + \widehat{\bm{\mu}}_k \widehat{\bm{\mu}}_{\ell}^{T} )\widehat{\Delta}^{-1}\widehat{\bm{\mu}}_s
\end{eqnarray}
Then there exists a sequence of permutations $\phi\equiv \phi_n$ on $[K]$ such that for any $k\in[K]$ and $\ell\in[K]$, 
\begin{equation}
    n\left(\widehat{\bm{\theta}}_{\phi(k),\phi(\ell)}-\bm{\theta}_{k\ell}-\frac{\widehat{\psi}_{k\ell}}{n}\right)\overset{d}{\rightarrow} N(0, \sigma_{k\ell}^2) 
\end{equation}
as $n\rightarrow\infty$.
\label{lemma:cor2TangEtAl2017}
\end{lemma}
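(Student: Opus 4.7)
The plan is to combine the SBM--to--GRDPG reduction developed earlier in the paper with the spectral consistency machinery of Lemma~\ref{lemma:cor2TangEtAl2017}, then propagate through the nonlinear $h^{-1}$ via a delta method. First I would recast the model with a single binary covariate as a $\widetilde{K}=2K$ block SBM $(\bm{A},\bm{\xi},\bm{Z})\sim SBM(\bm{\theta}_Z,\bm{\eta})$, and in turn as a $GRDPG_{d_1,d_2}(F)$ with latent vectors $\bm{\mu}_1,\dots,\bm{\mu}_{\widetilde{K}}$ obtained from the eigendecomposition of $\bm{\theta}_Z$, exactly as in the display~(\ref{eq:matrix_thetaZ_shape}). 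Estimating $\widehat{\bm{\tau}}$ (equivalently $\widehat{\bm{\xi}}$) by applying GMM/$K$-means to the rows of $\widehat{\bm{Y}}=\widehat{\bm{U}}|\widehat{\bm{S}}|^{1/2}$ yields, by known consistency of ASE followed by mixture-model clustering under the GRDPG~\citep{AthreyaEtAl2018a, Rubin-DelanchyEtAl2018}, that the block labels are recovered up to a sequence of permutations $\phi\equiv\phi_n$ on $[\widetilde{K}]$ with high probability for large $n$. This permutation is the source of the label indeterminacy in the statement.

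Next I would exploit the identity $\beta = h^{-1}(\bm{\theta}_{Z,11})-h^{-1}(\bm{\theta}_{Z,12})$ that follows from inspecting the structure of $\bm{\theta}_Z$ in~(\ref{eq:matrix_thetaZ_shape}): the $(1,1)$ entry encodes $h(\bm{\nu}_1^T\bm{\nu}_1+\beta)$ while the $(1,2)$ entry encodes $h(\bm{\nu}_1^T\bm{\nu}_1)$. Under the recovery permutation $\phi$, the corresponding spectral estimates live in the $(\phi(1),\phi(1))$ and $(\phi(1),\phi(2))$ positions of $\widehat{\bm{\theta}}_Z$, giving exactly the estimator stated. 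Writing
\begin{equation}
\widehat{\beta}-\beta \;=\; \bigl[h^{-1}(\widehat{\bm{\theta}}_{Z,\phi(1)\phi(1)})-h^{-1}(\bm{\theta}_{Z,11})\bigr] - \bigl[h^{-1}(\widehat{\bm{\theta}}_{Z,\phi(1)\phi(2)})-h^{-1}(\bm{\theta}_{Z,12})\bigr],
\end{equation}
Lemma~\ref{lemma:cor2TangEtAl2017} applied to each of the two SBM entries yields
\begin{equation}
n\bigl(\widehat{\bm{\theta}}_{Z,\phi(1)\phi(1)}-\bm{\theta}_{Z,11}-\tfrac{\widehat{\psi}_{11}}{n}\bigr)\overset{d}{\to} N(0,\sigma_{11}^2), \qquad
n\bigl(\widehat{\bm{\theta}}_{Z,\phi(1)\phi(2)}-\bm{\theta}_{Z,12}-\tfrac{\widehat{\psi}_{12}}{n}\bigr)\overset{d}{\to} N(0,\sigma_{12}^2).
\end{equation}
A first-order Taylor expansion of $g=h^{-1}$ about $\bm{\theta}_{Z,11}$ and $\bm{\theta}_{Z,12}$, justified by the hypotheses $g'(\bm{\nu}_1^T\bm{\nu}_1+\beta)\neq 0$ and $g'(\bm{\nu}_1^T\bm{\nu}_2)\neq 0$, transports the CLT for the entries to a CLT for their images under $g$, with bias scaled by $g'$ and variance scaled by $(g')^2$; the remainders are $o_p(1/n)$ by the rate guarantees underlying Lemma~\ref{lemma:cor2TangEtAl2017}.

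The genuinely nontrivial step is to upgrade the two marginal CLTs to a joint bivariate CLT for $(\widehat{\bm{\theta}}_{Z,\phi(1)\phi(1)},\widehat{\bm{\theta}}_{Z,\phi(1)\phi(2)})$, since a linear combination of the two estimators is what drives $\widehat{\beta}$. For this I would invoke the Cram\'er--Wold device on the pair, using the same decomposition of $\widehat{\bm{\theta}}_Z-\bm{\theta}_Z$ into the leading-order linear term $\bm{\Upsilon}_{k\ell}$ and the bias term, as developed in the proof of Theorem~\ref{thm:clt_linear_block_known}; this gives joint asymptotic normality provided one can identify the covariance $\sigma_{11,12}=\mathrm{Cov}(\bm{\Upsilon}_{11},\bm{\Upsilon}_{12})$. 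I would carry out this covariance computation exactly as in the ``Computation of the covariance terms'' subsection of the blocks-known proof (specializing $(k,\ell,k',\ell')=(1,1,1,2)$), which reduces to summing contributions of the nine products $\upsilon_{ij}^{(\alpha)}\varrho_{ij}^{(\beta)}$ and taking their almost-sure limits.

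Combining the pieces via the delta method and the continuous mapping theorem yields
\begin{equation}
n\bigl(\widehat{\beta}-\beta-\tfrac{\widehat{\psi}_\beta}{n}\bigr)\overset{d}{\to} N(0,\widehat{\sigma}_\beta^2),
\end{equation}
with explicit formulas
\begin{align}
\widehat{\psi}_\beta &= g'(\bm{\theta}_{Z,11})\,\psi_{11} - g'(\bm{\theta}_{Z,12})\,\psi_{12},\\
\widehat{\sigma}_\beta^2 &= [g'(\bm{\theta}_{Z,11})]^2 \sigma_{11}^2 + [g'(\bm{\theta}_{Z,12})]^2\sigma_{12}^2 - 2\,g'(\bm{\theta}_{Z,11})g'(\bm{\theta}_{Z,12})\,\sigma_{11,12}.
\end{align}
The main obstacle is twofold: first, ensuring that the clustering-induced permutation $\phi_n$ is asymptotically correct with sufficiently high probability so that it does not introduce uncontrolled randomness into the bias/variance (this is where the GRDPG perfect-clustering guarantee is essential); second, computing $\sigma_{11,12}$, which requires the bookkeeping over the nine cross-terms $\upsilon^{(\alpha)}\varrho^{(\beta)}$ carried out previously in the blocks-known proof and which does not simplify significantly in the blocks-unknown setting, since plug-in estimates $\widehat{\bm{\mu}},\widehat{\bm{\Delta}},\widehat{\eta}$ are consistent at rates faster than $n^{-1}$ and may therefore be substituted for their population counterparts without affecting the limit.
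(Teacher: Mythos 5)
There is a genuine gap here, and it is structural: your argument is circular with respect to the assigned statement. What you were asked to prove is Lemma~\ref{lemma:cor2TangEtAl2017} itself --- asymptotic normality, with the \emph{plug-in} bias $\widehat{\psi}_{k\ell}$ and up to a label permutation $\phi_n$, of the spectral estimates $\widehat{\bm{\theta}}_{\phi(k),\phi(\ell)}$ when the block assignments are \emph{estimated} by clustering the rows of $\widehat{\bm{Y}}$. The paper does not re-derive this result; it disposes of it by citation to Corollary~2 of \cite{TangEtAl2017}. Your proposal instead reconstructs the proof of Theorem~\ref{thm:clt_general_h_block_estim} (the downstream CLT for $\widehat{\beta}$): the $2K$-block reformulation, the identity $\beta = h^{-1}(\bm{\theta}_{Z,11})-h^{-1}(\bm{\theta}_{Z,12})$, the Taylor/delta-method step, Cram\'er--Wold, and the covariance $\sigma_{11,12}$ via the nine $\upsilon^{(\alpha)}\varrho^{(\beta)}$ cross-terms. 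All of that is downstream of the lemma, and the pivotal step in your second paragraph --- ``Lemma~\ref{lemma:cor2TangEtAl2017} applied to each of the two SBM entries yields'' the two marginal CLTs --- invokes precisely the statement to be established. Nothing in the proposal derives that statement from more primitive results.

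An actual proof must bridge the known-blocks result (Lemma~\ref{lemma:thm2TangEtAl2017}) to the estimated-blocks setting, which requires three steps you either gesture at or omit. First, a quantitative perfect-clustering guarantee: from the $2\rightarrow\infty$ concentration of $\widehat{\bm{Y}}$ around the true latent positions, GMM or $K$-means misclassifies no node on an event of probability tending to one (Lemma~4 of \cite{TangEtAl2017}); your opening paragraph asserts this but never uses it. Second, the transfer step: on that event the estimated cluster means $\widehat{\bm{\mu}}_{\phi(k)}$ coincide with the within-true-block sample means of the embedded rows, so the known-blocks decomposition into leading term and bias applies verbatim after relabeling by $\phi_n$, and the limit law is unaffected because the complementary event has vanishing probability. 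Third --- and entirely missing from your proposal --- justification that the \emph{estimated} bias $\widehat{\psi}_{k\ell}$ (built from $\widehat{\bm{\mu}}, \widehat{\bm{\Delta}}, \widehat{\eta}$) may replace the population $\psi_{k\ell}$ appearing in Lemma~\ref{lemma:thm2TangEtAl2017}: since the centering enters as $\widehat{\psi}_{k\ell}/n$ and is multiplied by $n$, it suffices that $\widehat{\psi}_{k\ell}-\psi_{k\ell}=o_p(1)$, which follows from mere consistency of the plug-ins together with Slutsky. Your closing remark that the plug-ins are ``consistent at rates faster than $n^{-1}$'' is both unnecessary and incorrect as stated (the relevant rates are of order $n^{-1/2}$); only $o_p(1)$ is needed. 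As written, the proposal proves a different (downstream) statement and assumes the assigned one.
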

\begin{proof}
See \cite{TangEtAl2017} for the detailed proof.
\end{proof}

Let us first focus on the \textbf{linear} case, in which $h$ is the identity function. If $h(u)=u$, then we can estimate $\beta$ as

\begin{eqnarray}
\widehat{\beta} &=& \widehat{\bm{\theta}}_{\phi(1),\phi(1)} -\widehat{\bm{\theta}}_{\phi(1),\phi(2)} \\
&=& \widehat{\bm{\theta}}_{\phi(1),\phi(1)} - \bm{\theta}_{11} + \bm{\theta}_{11} - \bm{\theta}_{12} + \bm{\theta}_{12} -\widehat{\bm{\theta}}_{\phi(1),\phi(2)} \\
&=& \left(\widehat{\bm{\theta}}_{\phi(1),\phi(1)} - \bm{\theta}_{11}\right) + \beta   -\left(\widehat{\bm{\theta}}_{\phi(1),\phi(2)} -\bm{\theta}_{12}\right)
\end{eqnarray}
and rearranging we obtain
\begin{equation}
    n\left(\widehat{\beta} - \beta\right) = n\left(\widehat{\bm{\theta}}_{\phi(1),\phi(1)} - \bm{\theta}_{11}\right)    -n\left(\widehat{\bm{\theta}}_{\phi(1),\phi(2)} -\bm{\theta}_{12}\right)
\end{equation}
which by Lemma \ref{lemma:cor2TangEtAl2017} implies that there exists a (sequence of) permutation(s) $\phi$ such that 

\begin{equation}
    n\left(\widehat{\beta} - \beta  -  \frac{\widehat{\psi}_{\beta}}{n}\right) \overset{d}{\rightarrow} N(0, \sigma_\beta^2 )
\end{equation}
where $\widehat{\psi}_\beta =(\widehat{\psi}_{11}-\widehat{\psi}_{12})$ and  $\sigma_\beta^2= \sigma_{11}^2 + \sigma_{12}^2 - 2cov\left(\widehat{\bm{\theta}}_{\phi(1),\phi(1)},\widehat{\bm{\theta}}_{\phi(1),\phi(2)}\right).$

For the \textbf{nonlinear} link function, we use a Taylor expansion
\begin{eqnarray}
    h^{-1}(\widehat{\bm{\theta}}_{\phi(k),\phi(\ell)}) &=& h^{-1}(\bm{\theta}_{k\ell}) + [h^{-1}(\bm{\theta}_{k\ell})]^{\prime} \left( \widehat{\bm{\theta}}_{\phi(k),\phi(\ell)}-\bm{\theta}_{k\ell} \right) + \text{smaller order terms}
\end{eqnarray}
and thus we have
\begin{eqnarray}
    h^{-1}(\widehat{\bm{\theta}}_{\phi(k),\phi(\ell)}) - h^{-1}(\bm{\theta}_{k\ell}) &=& [h^{-1}(\bm{\theta}_{k\ell})]^{\prime} \left( \widehat{\bm{\theta}}_{\phi(k),\phi(\ell)}-\bm{\theta}_{k\ell} \right) + \text{smaller order terms}
\end{eqnarray}
which implies that 
\begin{eqnarray}
n\left(h^{-1}(\widehat{\bm{\theta}}_{\phi(k),\phi(\ell)}) - h^{-1}(\bm{\theta}_{k\ell}) \right) \overset{d}{\rightarrow} N\left(\tilde{\tilde{\psi}}_{k\ell}, \tilde{\tilde{\sigma}}_{k\ell}^2 \right),
\end{eqnarray}
where $\tilde{\tilde{\psi}}_{k\ell}=[h^{-1}(\bm{\theta}_{k\ell})]^{\prime} \widehat{\psi}_{k\ell}$ and $\tilde{\tilde{\sigma}}_{k\ell}^2 =([h^{-1}(\bm{\theta}_{k\ell})]^{\prime})^2 \sigma_{k\ell}^2 $.

We therefore obtain the result 
\begin{equation}
    n(\widehat{\beta} -\beta)  \overset{d}{\rightarrow} N(\widehat{\psi}_\beta, \widehat{\sigma}_\beta^2)
\end{equation}
where $\widehat{\psi}_\beta = \tilde{\tilde{\psi}}_{11} -  \tilde{\tilde{\psi}}_{12}$ and 
$\widehat{\sigma}_\beta^2 = \tilde{\tilde{\sigma}}_{11}^2 + \tilde{\tilde{\sigma}}_{12}^2 -2\tilde{\tilde{\sigma}}_{11,12}$
and 
\begin{eqnarray}
\tilde{\tilde{\sigma}}_{11,12} : = cov\left(h^{-1}(\widehat{\bm{\theta}}_{\phi(1),\phi(1)}),h^{-1}(\widehat{\bm{\theta}}_{\phi(1),\phi(2)})\right) =  \sigma_{11,12}\left[\left(h^{-1}(\widehat{\bm{\theta}}_{\phi(1)\phi(1)})\right)' \right] \left[\left(h^{-1}(\widehat{\bm{\theta}}_{\phi(1),\phi(2)})\right)' \right].
\end{eqnarray}

\subsection{Proof of THEOREM \ref{thm:clt_general_h_semisparse}}
In the semi-sparse regime the proof follows the same steps as the proof of Theorem~\ref{thm:clt_general_h_block_estim}. The only difference is in the bias terms, variances, and covariance terms, that are computed according to the following lemma.

\begin{lemma} (Corollary 2 extension for semi-sparse case in \cite{TangEtAl2017}).\\
Let $\bm{A}\sim SBM(\bm{\xi},\bm{\theta},\rho_n)$ be a $\widetilde{K}$-block stochastic blockmodel adjacency matrix on $n$ vertices with sparsity factor $\rho_n$. Let $\mu_1, \dots ,\mu_{\widetilde{K}}$ be the center of the blocks in $\mathbb{R}^d$ and let $\bm{\theta}_{k\ell}=\bm{\mu}_k^T \bm{I}_{d_1,d_2} \bm{\mu}_{\ell}$ be the probability of a link between nodes in blocks $k$ and $\ell$. 
Let $K$ be known and let $\widehat{\bm{\xi}}: [n]\rightarrow [\tilde{K}]$ be the function that assigns nodes to clusters, estimated using GMM or K-means clustering on the rows of $\widehat{\bm{Y}}= \widehat{\bm{U}}\vert \widehat{\bm{S}} \vert^{1/2}$ (as in the proof of Theorem~\ref{lemma:thm2TangEtAl2017}). Let 
$\widehat{\bm{\theta}}_{k\ell}=\widehat{\bm{\mu}}_{k}^{T} \mathbf{I}_{d_1,d_2} \widehat{\bm{\mu}}_\ell$ and let $\widehat{\bm{\Delta}}=\sum_{k=1}^{K} \widehat{\eta}_k  \widehat{\bm{\mu}}_k \widehat{\bm{\mu}}_{\ell}^{T}$.
Define $\Delta = \sum_{k=1}^{K} \eta_k \bm{\mu}_k \bm{\mu}_k^T$ and
let $\zeta_{k\ell}=\bm{\mu}_k^T \Delta^{-1}\bm{\mu}_\ell $. Define $\widetilde{\sigma}_{kk}^2$ for $k\in[K]$ to be
\begin{eqnarray}
\widetilde{\sigma}_{kk}^2  &=& 4 \bm{\theta}_{kk} \zeta_{kk}^2 +
4\sum_{r=1}^K \eta_r\bm{\theta}_{kr} \zeta_{kr}^2 \left( \frac{1}{\eta_k} -2 \zeta_{kk}  \right)^2 
+ 2 \sum_{r=1}^K \sum_{s=1}^K  \eta_r \eta_s \bm{\theta}_{rs}\zeta_{kr}^2 \zeta_{ks}^2
\end{eqnarray}
and define $\widetilde{\sigma}_{k\ell}^2$ for $k\in[\tilde{K}]$ and $\ell\in [\tilde{K}]$, $ k \neq \ell$ to be
\begin{eqnarray}
\widetilde{\sigma}_{k\ell}^2 &=& 
\left( \bm{\theta}_{kk} + \bm{\theta}_{\ell\ell} \right) \zeta_{kk}^2 + 2 \bm{\theta}_{k\ell} \zeta_{kk}\zeta_{\ell\ell} 
+ \sum_{r=1}^{K} \eta_r \bm{\theta}_{kr}\zeta_{\ell r}^2 \left(\frac{1}{\eta_k} - 2\zeta_{kk} \right) \\
&+& \sum_{r=1}^{K} \eta_r \bm{\theta}_{\ell r}\zeta_{k r}^2 \left(\frac{1}{\eta_\ell} - 2\zeta_{\ell} \right) 
- 2 \sum_{r=1}^K \eta_r \left( \bm{\theta}_{kr} + \bm{\theta}_{\ell r} \right) \zeta_{kr}\zeta_{r \ell}\zeta_{k\ell} \\
&+& \frac{1}{2}\sum_{r=1}^K \sum_{s=1}^K \eta_r \eta_s \bm{\theta}_{rs}\left(\zeta_{kr}\zeta_{\ell s} + \zeta_{\ell r}\zeta_{ks}\right)^2 .
\end{eqnarray}

Let $\ddot{\psi}_{k\ell}$ be defined as
\begin{eqnarray}
\ddot{\psi}_{k\ell} &=& \sum_{r=1}^{K} \widehat{\eta}_r
\left(\widehat{\bm{\theta}}_{kr}+\widehat{\bm{\theta}}_{\ell r}\right)\widehat{\bm{\mu}}_k^T \widehat{\Delta}^{-1}\widehat{\bm{\mu}}_\ell \\
&-& \sum_{r=1}^{K}\sum_{s=1}^{K} \widehat{\bm{\mu}}_k \widehat{\bm{\mu}}_s \widehat{\bm{\theta}}_{sr}\widehat{\bm{\mu}}_s^T \widehat{\Delta}^{-1}\mathbf{I}_{d_1,d_2}\widehat{\Delta}^{-1}\left(\widehat{\bm{\mu}}_\ell \widehat{\bm{\mu}}_k^T + \widehat{\bm{\mu}}_l \widehat{\bm{\mu}}_\ell^T\right)\widehat{\Delta}^{-1}\widehat{\bm{\mu}}_s .
\end{eqnarray}
Then there exists a (sequence of) permutation(s) $\phi\equiv \phi_n$ on $[K]$ such that for any $k\in[K]$ and $\ell\in[K]$, 
\begin{equation}
    n\rho_n^{1/2}\left( \widehat{\bm{\theta}}_{\phi(k),\phi(\ell)} - \bm{\theta}_{k\ell} - \frac{\ddot{\psi}_{k\ell}}{n\rho_n}  \right) \overset{d}{\rightarrow} N(0,\widetilde{\sigma}_{k\ell})
\end{equation}
as $n\rightarrow \infty$, $\rho_n\rightarrow 0$, and $n\rho_n = \omega(\sqrt{n})$.
\end{lemma}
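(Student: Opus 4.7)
The plan is to view this lemma as the sparse-regime analogue of Lemma~A.1 (Theorem~2 of \cite{TangEtAl2017}) and to obtain it by re-running the expansion already used in the covariance calculation of Theorem~\ref{thm:clt_linear_block_known}, but tracking the powers of $\rho_n$ carefully. Specifically, I would begin from the identity
\begin{align*}
\tfrac{n\rho_{n}^{1/2}}{n_{k}n_{\ell}}\bigl(\widehat{\bm{\theta}}_{k\ell}-\bm{\theta}_{k\ell}\bigr)
&= \bm{\Upsilon}_{k\ell} + \bm{\Lambda}_{k\ell} + \bm{R}_{k\ell},
\end{align*}
where $\bm{\Upsilon}_{k\ell}$ is the linear-in-$\bm{E}$ term (A.5), $\bm{\Lambda}_{k\ell}$ is the quadratic-in-$\bm{E}$ term $\tfrac{n\rho_{n}^{-1/2}}{n_{k}n_{\ell}}(\bm{s}_{k}^{T}\bm{\Pi}_{U}^{\bot}\bm{E}^{2}\bm{P}^{\dagger}\bm{s}_{\ell}+\bm{s}_{\ell}^{T}\bm{\Pi}_{U}^{\bot}\bm{E}^{2}\bm{P}^{\dagger}\bm{s}_{k})$, and $\bm{R}_{k\ell}=O_{p}(n^{-1/2}\rho_{n}^{-1})$ is the higher-order remainder. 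These three pieces are handled separately.

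For the linear piece, I would apply a Lindeberg triangular-array CLT to $\bm{\Upsilon}_{k\ell}=\tfrac{n\rho_{n}^{-1/2}}{n_{k}n_{\ell}}\sum_{i,j}(\bm{A}_{ij}-\bm{P}_{ij})\bm{M}_{ij}$, using the same matrix $\bm{M}$ introduced in the proof of Theorem~\ref{thm:clt_linear_block_known}. The variance calculation is a near-verbatim repeat of that proof, but with $\bm{P}_{ij}(1-\bm{P}_{ij})=\rho_{n}\bm{\theta}_{\xi_{i}\xi_{j}}(1-\rho_{n}\bm{\theta}_{\xi_{i}\xi_{j}})=\rho_{n}\bm{\theta}_{\xi_{i}\xi_{j}}(1+o(1))$; the factors of $\rho_{n}$ inside the sum cancel exactly against the $\rho_{n}^{-1/2}$ outside (which is why one scales by $n\rho_{n}^{1/2}$ rather than by $n$), and the $(1-\bm{\theta})$ factors drop out of the limit, producing exactly the variance $\widetilde{\sigma}_{k\ell}^{2}$ displayed in the lemma. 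The Lindeberg condition is trivial because entries are bounded and the number of summands is $\Theta(n^{2})$.

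For the quadratic piece $\bm{\Lambda}_{k\ell}$, I would follow the argument of \cite{TangEtAl2017} and expand $\bm{E}^{2}\bm{P}^{\dagger}$ by first replacing $\bm{P}^{\dagger}$ with its plug-in estimator and using $\operatorname{tr}(\bm{E}^{2}\cdot)$ concentration (Hanson--Wright in its Bernoulli form), which shows $\bm{\Lambda}_{k\ell}\overset{P}{\longrightarrow}\ddot{\psi}_{k\ell}/(n\rho_{n})^{1/2}\cdot(\text{scaling})$; organizing the algebra so that the $\rho_{n}^{-1/2}$ prefactor combines with $\|\bm{E}\|^{2}=\Theta(n\rho_{n})$ and $\|\bm{P}^{\dagger}\|=\Theta(1/(n\rho_{n}))$ yields the bias term exactly in the form $\ddot{\psi}_{k\ell}/(n\rho_{n})$, matching the displayed formula. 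For the remainder, the constraint $n\rho_{n}=\omega(\sqrt{n})$ is exactly what forces $\bm{R}_{k\ell}=o_{p}(1)$ after multiplication by $n\rho_{n}^{1/2}$, since $n\rho_{n}^{1/2}\cdot n^{-1/2}\rho_{n}^{-1}=(n\rho_{n})^{-1/2}\cdot n^{1/2}\rho_{n}^{-1/2}\cdot\rho_{n}^{1/2}\to 0$ under this rate.

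Finally, to pass from the known-labels version to $\widehat{\bm{\xi}}$, I would invoke (as in Corollary~A.2) the sparse-regime consistency of Adjacency Spectral Embedding plus GMM/$K$-means clustering up to an orthogonal (equivalently, here, permutation) transformation; this is the place where the signature $(d_{1},d_{2})$ matters and where I would appeal to \cite{Rubin-DelanchyEtAl2018} to replace orthogonal transformations by indefinite orthogonal ones, absorbing the latter into $\mathbf{I}_{d_{1},d_{2}}$. The permutation $\phi\equiv\phi_{n}$ is then the label-matching induced by the clustering. The main obstacle I anticipate is precisely the sparse-regime remainder control: verifying that the two-to-infinity-norm perturbation bounds on $\widehat{\bm{U}}$ used by \cite{TangEtAl2017} continue to deliver the $O_{p}(n^{-1/2}\rho_{n}^{-1})$ bound on $\bm{R}_{k\ell}$ in the indefinite GRDPG setting, which requires a careful bound on $\|\bm{E}\bm{\Pi}_{U}\|_{2\to\infty}$ and on $\|\bm{E}^{3}\bm{P}^{\dagger 2}\|$ at the prescribed sparsity rate.
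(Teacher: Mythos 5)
Your overall route is the same one the paper takes. The paper does not prove this lemma from first principles either: it imports the expansion (A.5) of \cite{TangEtAl2017}, and the sparse-regime ingredients are exactly the ones you identify \textemdash{} inside the covariance computation in the proof of Theorem~\ref{thm:clt_linear_block_known}, every almost-sure limit is evaluated separately for $\rho_{n}\equiv 1$ and $\rho_{n}\rightarrow 0$, with $\bm{P}_{ij}(1-\bm{P}_{ij})=\rho_{n}\bm{\theta}_{\xi_{i}\xi_{j}}(1+o(1))$ so that the $\rho_{n}$ cancels the $\rho_{n}^{-1}$ from the squared prefactor and the $(1-\bm{\theta})$ factors drop out of the limiting variances; the quadratic-in-$\bm{E}$ term supplies the non-vanishing bias $\ddot{\psi}_{k\ell}/(n\rho_{n})$; and the passage from known to estimated labels is handled by ASE--GMM consistency up to permutation, with the indefinite-orthogonal identification issue deferred to \cite{Rubin-DelanchyEtAl2018}, just as in Lemma~\ref{lemma:cor2TangEtAl2017}. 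So conceptually you have reconstructed the intended proof.

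Two corrections, one of which is a step that fails as written. First, the remainder bookkeeping: in your decomposition the scaling $n\rho_{n}^{1/2}$ already sits on the left-hand side, so $\bm{R}_{k\ell}=O_{p}(n^{-1/2}\rho_{n}^{-1})$ is already expressed at the CLT scale and vanishes \emph{directly}, since $n^{-1/2}\rho_{n}^{-1}\rightarrow 0$ if and only if $n\rho_{n}=\omega(\sqrt{n})$ \textemdash{} this is precisely the paper's one-line treatment of that term. No further multiplication by $n\rho_{n}^{1/2}$ is needed, and if you did multiply again as you state, the term would be $n\rho_{n}^{1/2}\cdot n^{-1/2}\rho_{n}^{-1}=n^{1/2}\rho_{n}^{-1/2}\rightarrow\infty$; moreover your displayed chain is not an identity (its right-hand side equals $\rho_{n}^{-1/2}$, which also diverges). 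Delete the extra multiplication and the step is correct. Second, a caution rather than an error: you assert that dropping the $(1-\bm{\theta})$ factors produces ``exactly'' the displayed $\widetilde{\sigma}_{k\ell}^{2}$. What the computation actually yields is the dense formulas of Lemma~\ref{lemma:thm2TangEtAl2017} with each factor $(1-\bm{\theta})$ deleted; the lemma's display deviates from that in a few places (e.g., $\zeta_{kk}^{2}$ where $\zeta_{k\ell}^{2}$ is expected in the off-diagonal case, and a squared $\left(\tfrac{1}{\eta_{k}}-2\zeta_{kk}\right)$ in the diagonal case), which appear to be typographical slips in the statement, so you should present the derived expressions rather than match the display verbatim.
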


\end{document}